\newtheorem{corollary}{Corollary}
\newtheorem{lemma}{Lemma}
\newtheorem{definition}{Definition}
\theoremstyle{remark}
\let\leq\leqslant
\let\geq\geqslant
\newcommand{\bbN}{\ensuremath{\mathbb{N}}\xspace}
\newcommand{\bbR}{\ensuremath{\mathbb{R}}\xspace}
\newcommand{\bbZ}{\ensuremath{\mathbb{Z}}\xspace}
\newcommand{\tC}{\ensuremath{\tilde{C}}\xspace}
\newcommand{\tJ}{\ensuremath{\widetilde{J}}\xspace}
\newcommand{\cA}{\ensuremath{\mathcal{A}}\xspace}
\newcommand{\cC}{\ensuremath{\mathcal{C}}\xspace}
\newcommand{\cE}{\ensuremath{\mathcal{E}}\xspace}
\newcommand{\cG}{\ensuremath{\mathcal{G}}\xspace}
\newcommand{\cI}{\ensuremath{\mathcal{I}}\xspace}
\newcommand{\cJ}{\ensuremath{\mathcal{J}}\xspace}
\DeclareMathOperator{\polylog}{polylog}
\DeclareMathOperator{\poly}{poly}
\newcommand{\zo}{\ensuremath{{\{0,1\}}}\xspace}
\newcommand{\bin}[1]{\ensuremath{\zo^{#1}}}
\newcommand{\negl}{\ensuremath{\mathsf{negl}}\xspace}
\newcommand{\eps}{\ensuremath{\varepsilon}}
\newcommand{\ceil}[1]{\ensuremath{\left\lceil{#1}\right\rceil}\xspace}
\newcommand{\p}[1]{\ensuremath{^{(#1)}}\xspace}
\newcommand{\abs}[1]{\ensuremath{\left\vert{#1}\right\vert}\xspace}
\newcommand{\defeq}[0]{\ensuremath{{\;\vcentcolon=\;}}\xspace}
\newcommand{\eqdef}[0]{\ensuremath{{\;=\vcentcolon\;}}\xspace}
\newcommand{\getsr}[0]{\mathbin{\stackrel{\mbox{\,\tiny \$}}{\gets}}}
\newcommand{\band}[0]{\ensuremath{~\wedge~}\xspace}
\newcommand{\bor}[0]{\ensuremath{~\vee~}\xspace}
\newcommand{\union}[0]{\ensuremath{\cup}\xspace}
\newcommand{\GF}[1]{\ensuremath{\mathbb{GF}}\xspace}
\newcommand{\?}{\ensuremath{\stackrel{?}{=}}\xspace}
\newcommand{\ie}{\text{i.e.}\xspace}
\newcommand{\etal}{\text{et al.}\xspace}
\newcommand{\eg}{\text{e.g.}\xspace}
\newenvironment{boxedalgo}{\begin{center}\begin{balgo}}{\end{balgo}\end{center}}
\newcommand{\pred}[1]{\ensuremath{\mathsf{#1}}\xspace}
\newcommand{\Gen}{\ensuremath{\mathsf{Gen}}\xspace}
\newcommand{\Enc}{\ensuremath{\mathsf{Enc}}\xspace}
\newcommand{\Dec}{\ensuremath{\mathsf{Dec}}\xspace}
\newcommand{\sk}{\ensuremath{\mathsf{sk}}\xspace}
\newcommand{\pk}{\ensuremath{\mathsf{pk}}\xspace}
\DeclareMathOperator{\dist}{dist}
\newcommand{\HAM}{\ensuremath{\mathsf{HAM}}\xspace}
\newcommand{\ED}{\ensuremath{\mathsf{ED}}\xspace}
\newcommand{\Encin}{\ensuremath{\Enc_{in}}\xspace}
\newcommand{\Encout}{\ensuremath{\Enc_{out}}\xspace}
\newcommand{\Decin}{\ensuremath{\Dec_{in}}\xspace}
\newcommand{\Decout}{\ensuremath{\Dec_{out}}\xspace}
\newcommand{\Cin}{\ensuremath{C_{in}}\xspace}
\newcommand{\Cout}{\ensuremath{C_{out}}\xspace}
\newcommand{\LDC}{\ensuremath{\mathsf{LDC}}\xspace}
\newcommand{\NBS}{\ensuremath{\pred{NBS}}\xspace}
\newcommand{\Fool}{\pred{Fool}}
\newcommand{\Limit}{\pred{Limit}}
\newcommand{\Good}{\pred{Good}}
\newcommand{\Sign}{\pred{Sign}}
\newcommand{\Verify}{\pred{Ver}}
\newcommand{\Sigforge}{\pred{Sig}\text{-}\pred{forge}}
\newcommand{\Forge}{\pred{Frg}}
\newcommand{\Zplus}{\ensuremath{\bbZ^+}\xspace}
\newcommand{\rLDC}{\pred{rLDC}}
\newcommand{\crLDC}{\pred{crLDC}}
\newcommand{\SZ}{\pred{SZ}}
\newcommand{\rhosz}{\ensuremath{\rho_{\pred{sz}}}\xspace}
\newcommand{\betasz}{\ensuremath{\beta_{\pred{sz}}}\xspace}
\newcommand{\tildex}{\ensuremath{\tilde{x}}\xspace}
\newcommand{\tildey}{\ensuremath{\tilde{y}}\xspace}
\newcommand{\tsigma}{\ensuremath{\tilde{\sigma}}\xspace}
\newcommand{\tildej}{\ensuremath{\tilde{j}}\xspace}
\newcommand{\tildem}{\ensuremath{\tilde{m}}\xspace}
\newcommand{\tpk}{\ensuremath{\tilde{\pk}}\xspace}
\newcommand{\BlockDecode}{\pred{BlockDec}}
\DeclareMathOperator{\supp}{supp}
\newcommand{\Lim}{\pred{L}}
\newcommand{\F}{\pred{F}}
\newcommand{\betain}{\ensuremath{\beta_{in}}\xspace}
\newcommand{\rhoin}{\ensuremath{\rho_{in}}\xspace}
\newcommand{\EncHam}{\ensuremath{\Enc_{\pred{H},\lambda}}\xspace}
\newcommand{\DecHam}{\ensuremath{\Dec_{\pred{H},\lambda}}\xspace}
\newcommand{\Cins}{\ensuremath{\cC_{\pred{Ins}}}\xspace}
\newcommand{\Ins}{\pred{I}}
\newcommand{\EncIns}{\ensuremath{\Enc_{\Ins,\lambda}}\xspace}
\newcommand{\DecIns}{\ensuremath{\Dec_{\Ins,\lambda}}\xspace}
\newcommand{\bl}{\pred{bl}}
\renewcommand{\circ}{\Vert}
\title{Computationally Relaxed Locally Decodable Codes, Revisited\IEEEauthorrefmark{1}\thanks{\IEEEauthorrefmark{1}Full-version of the work with the same title published at ISIT 2023, available at \url{https://doi.org/10.1109/ISIT54713.2023.10206655}. \copyright 2023 IEEE. Personal use of this material is permitted. Permission from IEEE must be obtained for all other uses, in any current or future media, including reprinting/republishing this material for advertising or promotional purposes,creating new collective works, for resale or redistribution to servers or lists, or reuse of any copyrighted component of this work in other works.}}
\author{%
	\IEEEauthorblockN{Alexander R. Block\IEEEauthorrefmark{2} and Jeremiah Blocki\IEEEauthorrefmark{3}}
	\IEEEauthorblockA{\IEEEauthorrefmark{2}Georgetown University and University of Maryland, College Park. Email: \texttt{alexander.r.block@gmail.com}} \IEEEauthorblockA{\IEEEauthorrefmark{3}Purdue University. Email: \texttt{jblocki@purdue.edu}}
}
\begin{document}
\maketitle

% !TeX root = ../full-main.tex
\begin{abstract}
	We revisit computationally relaxed locally decodable codes (crLDCs) (Blocki \etal, Trans. Inf. Theory '21) and give two new constructions.
	Our first construction is a Hamming crLDC that is conceptually simpler than prior constructions, leveraging digital signature schemes and an appropriately chosen Hamming code.
	Our second construction is an extension of our Hamming crLDC to handle insertion-deletion (InsDel) errors, yielding an InsDel crLDC.
	This extension crucially relies on the noisy binary search techniques of Block \etal (FSTTCS '20) to handle InsDel errors.
	Both crLDC constructions have binary codeword alphabets, are resilient to a constant fraction of Hamming and InsDel errors, respectively, and under suitable parameter choices have poly-logarithmic locality and encoding length linear in the message length and polynomial in the security parameter.
	These parameters compare favorably to prior constructions in the poly-logarithmic locality regime.
\end{abstract}

\section{Introduction}\label{sec:intro}
Locally decodable codes (\LDC{}s) are error-correcting codes that admit super-efficient (\ie, poly-logarithmic time) recovery of individual symbols of an encoded message by querying only a few locations into a received word.
%Such codes are desirable in many applications in which recovering an entire encoded message is unnecessary or when polynomial time decoding is prohibitively expensive (\eg, working over enormous data sets). 
%Additionally, these codes have deep connections within other areas of research, such as private information retrieval \cite{..} and probabilistically checkable proofs \cite{..}.
For an alphabet $\Sigma$ and a (normalized) metric $\dist$, a pair of algorithms $\Enc \colon \Sigma^k \rightarrow \Sigma^K$ and $\Dec \colon [k] \rightarrow \Sigma$ (for $[k]\defeq \{1,\dotsc,k\}$) is a \emph{$(\ell,\rho,p)$-\LDC} if $\Dec$ is a randomized oracle algorithm such that for any message $x$ and any received word $y'$, if $\dist(\Enc(x), y') \leq \rho$ then for every $i$, $\Dec^{y'}(i)$ makes at most $\ell$ queries to $y'$ and outputs $x_i$ with probability at least $p$.
Here, $k$ and $K$ are the \emph{message} and \emph{block lengths}, respectively, $k/K$ is the \emph{rate}, $\ell$ is the \emph{locality}, $\rho$ is the \emph{error-rate}, and $p$ is the \emph{success probability}.

Studied extensively in the context of worst-case \emph{Hamming errors} \cite{STOC:KatTre00,JCSS:STV01,KerenidisW04,Yekhanin08,DGY11,Efremenko12,Yekhanin12,KopSar16,KMRS17} where $\dist$ is the normalized Hamming distance ($\HAM$), \emph{Hamming \LDC{}s}
%\LDC{}s have been extensively studied in the context of worst-case \emph{Hamming errors} \cite{STOC:KatTre00,JCSS:STV01,KerenidisW04,Yekhanin08,DGY11,Efremenko12,Yekhanin12,KopSar16,KMRS17}, where $\dist$ is the normalized Hamming distance $\HAM$. 
%The normalized Hamming distance between two strings $u,v \in \Sigma^n$ for any $n \in \bbN$ is the number of positions in which $u$ and $v$ differ, normalized by $n$; formally, $\HAM(u,v) \defeq \abs{\{i \colon u_i \neq v_i\}} / n$. % \in [0,1]$. % (\ie, $\dist = \HAM$ is the normalized Hamming distance) \alex{Talk about $\dist$ being a normalized metric, say in english and mathematically when talking about the constraints.} , 
%In this setting, any encoded message $y = \Enc(x)$ receives adversarial symbol replacements to become $y'$ subject to the constraint that $\HAM(y,y') \leq \rho$.
%where an encoded message $y = \Enc(x)$ receives adversarial symbol replacements to become $y'$, subject to the constraint that $\HAM(y,y') \leq \rho$.
%Such \LDC{}s are known as \emph{Hamming \LDC{}s} and
%\alex{language in next sentence it too strong; rephrase since it's more like ``we have no clue''. Language like ``seem to have irreconcilable tradeoffs'' is more appropriate, followed by quantifying over the best known constructions and lower bounds.}
%Known as \emph{Hamming \LDC{}s}, these codes 
seem to have irreconcilable trade-offs between the rate, error-rate, and locality. %, and it seems to be impossible to optimize all parameters simultaneously. 
For constant error-rate (the target of most applications), the best known constructions with constant $\ell \geq 3$ locality have super-polynomial rate \cite{Yekhanin08,DGY11,Efremenko12}, for $\ell = 2$ it is known that $K = \Theta(\exp(k))$ \cite{KerenidisW04}, and the best known constructions with constant rate have super-logarithmic (sub-polynomial) locality \cite{KMRS17}. 
Furthermore, the best known lower bounds for general Hamming \LDC{}s with constant error-rate and locality $\ell \geq 3$ are $K = \Omega(k^{\frac{\ell+1}{\ell-1}})$ \cite{Woodruff07}, and any locality $\ell=3$ linear Hamming \LDC has $K = \Omega(k^2/\log(k))$ \cite{Woodruff12}.
See surveys \cite{Yekhanin12,KopSar16} for more details.

%Various relaxations of \LDC{}s have been introduced to remedy these dramatic tradeoffs.
To remedy these dramatic trade-offs, 
%Of interest to us %in this work \
%is the notion of \emph{relaxed \LDC{}s} (\rLDC{}s).
Ben-Sasson \etal \cite{BGHSV06} introduced \emph{relaxed \LDC{}s} (\rLDC{}s).
Relaxed \LDC{}s are \LDC{}s that additionally allow the decoder to output a symbol $\bot \not\in \Sigma$, which signifies that the decoder does not know the correct value, under the following restrictions: the decoder (a) does not output $\bot$ ``too often''; and (b) never outputs $\bot$ when the queried codeword is uncorrupted.
This relaxation yields \LDC{}s with constant locality and block length $K = k^{1+\eps}$ for (small) constant $\eps > 0$.
Blocki \etal consider a further relaxation of \rLDC{}s known as \emph{computationally relaxed \LDC{}s} (\crLDC{}s) \cite{BGGZ21}: \rLDC{}s that are only resilient against adversarial channels that are computationally bounded (\ie, probabilistic polynomial time (PPT) channels).
This relaxation, inspired by the work of Lipton \cite{Lipton94}, %(and subsequent follow-up works \cite{..}) 
%which considers constructing Hamming codes in the presence of a computationally bounded adversarial channel.
%The relaxation to computationally bounded channels 
yields \crLDC{}s %, assuming the existence of collision-resistant hash functions, 
with constant rate, constant error-rate, and polylog locality. 

%\paragraph*{Insertion-Deletion Errors}\label{sec:intro-insdel}
%These prior results only consider codes which are resilient against worst-case Hamming errors.
%Though the study of various Hamming \LDC{}s remains an active and fruitful area of research, 
Recently, advances in coding theory have turned their focus to understanding and constructing codes which are resilient to \emph{insertion-deletion errors} (InsDel errors) \cite{Levenshtein_SPD66,LATIN:KiwLoeMat04,GurWan17,BraGurZba18,ICALP:HaeShaSud18,STOC:HaeSha18,SODA:CHLSW19,ICALP:CJLW19,GurLi19,STOC:HaeRubSha19,FOCS:Haeupler19,LiuTjuXin19,SODA:CGHL21,GurLi20,SODA:CheLi21,GurHaeSha21,HaeSha21,SimBru21,CJLW22},
%Such codes generalize Hamming codes to a broader range of errors.
%In this error model, 
where an adversarial channel inserts and deletes a bounded number of symbols into and from the encoded message.
Known as \emph{InsDel codes}, % generalize Hamming codes to the broader class of InsDel errors.
%For InsDel codes, 
the metric $\dist$ considered is the (normalized) edit distance $\ED$, defined as the minimum number of symbol insertions and deletions to transform a string $u$ into a string $v$, normalized by $2\max\{|u|,|v|\}$.
%and the metric $\dist$ considered for these codes is the (normalized) edit distance.
Only recently have efficient InsDel codes %(\ie, polynomial time encoding and decoding) 
with asymptotically optimal rate and error-rate been well-understood \cite{STOC:HaeSha18,FOCS:Haeupler19,STOC:HaeRubSha19,LiuTjuXin19,GurHaeSha21}.

The study of \LDC{}s resilient to InsDel errors (InsDel \LDC{}s) has been scarce, with only a handful of results to date. 
Introduced by Ostrovsky and Paskin-Cherniavsky \cite{OstPan15}, to the best of our knowledge, \emph{all} InsDel \LDC constructions follow \cite{OstPan15} and utilize a so-called ``Hamming-to-InsDel compiler'' \cite{BBGKZ20,EPRINT:CheLiZhe20,BloBlo21},
%To the best of our knowledge, InsDel \LDC constructions fall into two categories: the construction of Haeupler and Shahrasb \cite{..} which constructs explicit synchronization strings which can be locally decoded
%which utilizes synchronization strings to construct InsDel \LDC{}s, and the line of work constructing InsDel \LDC{}s via a ``Hamming-to-InsDel compiler'' \cite{..,..,..}.
which transforms any Hamming \LDC into an InsDel \LDC, increasing both the rate and error-rate by a constant factor and increasing the locality by a polylog factor.
%\begin{enumerate*}[label=(\arabic*)]
%	\item the rate and error-rate are increased by a constant factor; and
%	\item the locality is increased by a poly-logarithmic factor.
%\end{enumerate*}
For example, any locality-$3$ Hamming $\LDC$ with block length $K$ and error-rate $\rho$ can be compiled into a locality-$3\cdot \polylog(K)$ InsDel $\LDC$ with block length $\Theta(K)$ and error-rate $\Theta(\rho)$.
%Very recently, Blocki \etal \cite{FOCS:BCGLZZ21} established the first lower bounds for InsDel \LDC{}s, showing that constant locality InsDel \LDC{}s must have exponential block length.

\subsection{Overview of Results}\label{sec:overview}
In this work, we revisit \crLDC{}s with respect to both Hamming and InsDel errors. 
%To begin, a \emph{relaxed locally decodable code} (\rLDC) is defined analogously to a standard \LDC (as given in \cref{sec:intro}), except the decoder is additionally allowed to output the symbol $\bot \not\in \Sigma$ which signifies that the decoder does not know the correct answer. 
%It is also required that the decoder does not output $\bot$ ``too often''.
%\emph{Computationally relaxed locally decodable codes} (\crLDC{}s) are \rLDC{}s with the additional property that the adversarial channel introducing errors into codewords is assumed to be some probabilistic polynomial time (PPT) algorithm, rather than a computationally unbounded algorithm (\eg, as with \rLDC{}s and \LDC{}s).
%\crLDC{}s are similar to the classical notion of locally decodable codes with the following modifications:
%\begin{enumerate*}[label=(\arabic*)]
%	\item The decoder is additionally allowed to output the symbol $\bot \not\in \Sigma$, which signifies that the decoder does not know the correct answer; and
%	
%	\item The adversarial channel introducing errors into codewords is assumed to be some probabilistic polynomial time algorithm.
%\end{enumerate*}
We begin by defining \crLDC{}s.
\begin{definition}[Computationally Relaxed Locally Decodable Codes]\label{def:crldc}
	Let $\cC = \{ C_\lambda[ K, k, q_1, q_2 ] \}_{\lambda \in \bbN}$ be a code family with encoding algorithms $\{ \Enc_\lambda \colon \Sigma_1 \rightarrow \Sigma_2 \}_{\lambda \in \bbN}$ where $|\Sigma_i| = q_i$.
	We say $\cC$ is a \emph{$(\ell, \rho, p, \delta, \dist)$-computationally relaxed locally decodable code} (\crLDC) if there exists a family of randomized oracle decoding algorithms $\{ \Dec_\lambda \colon [k] \rightarrow \Sigma_1 \}_{\lambda \in \bbN}$ such that: %properties.
	\begin{enumerate}
		\item For all $\lambda \in \bbN$ and any $\tildey \in \Sigma_2^*$, $\Dec^{\tildey}_\lambda(i)$ makes at most $\ell$ queries to $\tildey$ for any $i \in [k]$;\label{item:rldc-query}
		
		\item For all $\lambda \in \bbN$ and any $x \in \Sigma_1^k$, we have
		\begin{align*}
			\Pr[ \Dec_\lambda^{\Enc_\lambda(x)}(i) = x_i ] = 1
		\end{align*} 
		for all $i \in [k]$;\label{item:rldc-correct}
		
		\item Define binary predicate $\Fool(\tildey, \rho, p, x, y, \lambda) = 1$ iff
		\begin{enumerate}
			\item $\dist(y,\tildey)\leq \rho$; and 
			
			\item $\exists i \in [k]$ such that 
			\begin{align*}
				\Pr[\Dec_\lambda^{\tildey}(i) \in \{x_i, \bot\}] < p,
			\end{align*} 
			where the probability is taken over $\Dec_\lambda$; 
		\end{enumerate}
		otherwise $\Fool(\tildey, \rho, p, x ,y, \lambda) = 0$.
		We require that for all PPT adversaries $\cA$ there exists a negligible function %\footnote{A function is \emph{negligible} if it is $o(x^{-c})$ for all constants $c>0$.} 
		$\eps_{\F}(\cdot)$ such that for all $\lambda \in \bbN$ and all $x \in \Sigma_1^k$, we have
		\begin{align*}
			\Pr[ \Fool(\cA(y), \rho, p, x, y, \lambda) = 1 ] \leq \eps_{\F}(\lambda),
		\end{align*}
		where the probability is taken over $\cA$ and $y = \Enc_\lambda(x)$.\label{item:rldc-fool}
		
		\item Define binary predicate $\Limit(\tildey, \rho, \delta, x, y, \lambda) = 1$ iff 
		\begin{enumerate}
			\item $\dist(y,y') \leq \rho$; and
			\item $|\Good(y')| < \delta \cdot k$, where 
			\begin{align*}
				\Good(y') \defeq \{ i \in [k] \colon \Pr[\Dec_\lambda^{y'}(i) = x_i] > 2/3 \}
			\end{align*}
			and the probability is taken over $\Dec_\lambda$; 
		\end{enumerate}
		otherwise $\Limit(\tildey, \rho, \delta, x, y, \lambda) = 0$.
		We require that for all adversaries PPT adversaries $\cA$ there exists a negligible function $\eps_{\Lim}(\cdot)$ such that for all $\lambda \in \bbN$ and all $x \in \Sigma_1^k$, we have $\Pr[\Limit(\cA(y), \rho, \delta, x, y, \lambda) = 1] \leq \eps_{\Lim}(\lambda)$,
%		\begin{align*}
%			\Pr[\Limit(\cA(y), \rho, \delta, x, y, \lambda) = 1] \leq \eps_{\Lim}(\lambda),
%		\end{align*}
		where the probability is taken over $\cA$ and $y = \Enc_\lambda(x)$.\label{item:rldc-limit}
	\end{enumerate}
	If $\dist$ is the normalized Hamming distance $\HAM$, we say the code is a \emph{Hamming \crLDC}; if $\dist$ is the normalized edit distance $\ED$, we say the code is a \emph{InsDel \crLDC}.
	Here, $\ell$ is the \emph{locality}, $\rho$ is the \emph{error-rate}, $p$ is the \emph{success probability}, and a function is \emph{negligible} if it is $o(x^{-c})$ for all constants $c>0$. %, and $k/K$ is the \emph{rate}.
	If $q_2 = 2$, we say that $\cC$ is a family of \emph{binary \crLDC{}s}, and if $q_1 = q_2$ we simply write $C_\lambda[K,k,q_1]$.
\end{definition}
\noindent %\cref{def:crldc} captures the notion of asymptotic security when interacting with arbitrary PPT adversaries; this differs from the standard definition of (relaxed) locally decodable codes, as we are concerned with worst-case errors with respect to these restricted adversaries.
\cref{def:crldc} closely follows the \crLDC definition of Blocki \etal \cite{BGGZ21} with a few modifications.
First, the constructions of \cite{BGGZ21} utilize a public random seed for a collision-resistant hash function, so their \crLDC definition is quantified over the randomness of the seed generation algorithm.
Our constructions do not require a public random seed so we omit this algorithm from our definition and instead quantify the security of our \crLDC over a code family $\{C_\lambda\}_{\lambda \in \bbN}$. 
This quantification also captures the notion of asymptotic security when interacting with PPT adversaries, which differs from standard (\pred{r})\LDC definitions that consider information-theoretic adversaries. % as we are only concerned with worst-case errors induced by PPT adversaries.
Moreover, \cite{BGGZ21} requires the public random seed to be generated in an honest (\ie, trusted) way, and our definition and constructions circumvent this requirement.
Second, we slightly strengthen the security definition by tweaking the predicate $\Fool$: in \cref{def:crldc}, the adversary wins if there \emph{exists} an index $i$ (not necessarily known by the adversary) such that the probability the decoder outputs correctly on input $i$ is less than $p$.
In contrast, \cite{BGGZ21} requires the adversary to output corrupt codeword $y'$ and a target index $i$ such that the probability the decoder outputs correctly on index $i$ is less than $p$.
Note that requiring \cref{def:crldc} to hold for $p = 2/3$, $\eps_{\F}(\lambda) = \eps_{\Lim}(\lambda) = 0$, and for all computationally unbounded adversaries $\cA$ results in the original \rLDC definition \cite{BGHSV06}.

Our first contribution is constructing a family of binary Hamming \crLDC{}s % $\cC_{\HAM} = \{ C_{\lambda}[K,k,2] \}$
satisfying \cref{def:crldc}.
%\textcolor{red}{Our construction is conceptually simpler than the Hamming \crLDC of Blocki \etal \cite{BGGZ21}} \alex{elaborate somehow}, does not require a trusted setup, and achieves (asymptotically) the same rate, error-rate, and locality.
Our construction borrows from code concatenation techniques \cite{concat}, which utilize an outer code $\Cout = (\Encout, \Decout)$ and an inner code $\Cin = (\Encin, \Decin)$ and encodes a message $x$ as follows:
\begin{enumerate}
	\item compute $y = \Encout(x)$;
	\item partition $y$ into some number $d$ of blocks $y\p{1}\circ \dotsc \circ y\p{d}$; 
	\item compute $Y\p{i} = \Encin(y\p{i})$ for all $i$; and
	\item output $Y = Y\p{1} \circ \dotsc \circ Y\p{d}$; here, $\circ$ denotes string concatenation.
\end{enumerate}
In our construction, we use the identity function as $\Cout$, utilize a suitable \emph{digital signature scheme} to sign each block $y\p{i}$, and use a classical Hamming code as $\Cin$. % to obtain our construction.
Briefly, a digital signature scheme with signatures of length $r(\cdot)$ is a tuple of PPT algorithms $\Pi = (\Gen,\Sign,\Verify)$ that satisfy the following properties: 
\begin{enumerate}
	\item $\Gen$ takes as input security parameter $\lambda \in \bbN$ (in unary) and outputs a key pair $(\pk,\sk)$, where $\pk$ is the \emph{public/verification key} and $\sk$ is the \emph{private/signing key};
	
	\item $\Sign$ takes as input a message $m$ of arbitrary length and the signing key $\sk$ and outputs a signature $\sigma \in \bin{r(\lambda)}$ of message $m$.
	
	\item $\Verify$ is deterministic and takes as input a message $m$, some signature $\sigma$, and a verification key $\pk$ outputs $1$ iff $\sigma$ is a valid signature of message $m$ and $0$ otherwise.
	
	\item For all PPT adversaries $\cA$, for $(\pk, \sk) \gets \Gen(1^\lambda)$, if $\cA$ is given $\pk$ as input and given oracle access to $\Sign_\sk(\cdot)$, then $\Pi$ is \emph{secure} if, except with negligible probability in $\lambda$, $\cA$ cannot output a pair $(\tildem, \tsigma)$ such that $\Verify_{\pk}(\tildem, \tsigma)=1$ and $\cA$ never queried $\Sign_\sk(\tildem)$.
\end{enumerate}
%A signature scheme $\Pi$ is said to be \emph{secure} if for all PPT adversaries $\cA$, for $(\pk, \sk) \gets \Gen(1^\lambda)$, if $\cA$ is given $\pk$ as input and given oracle access to $\Sign_\sk(\cdot)$, then, except with negligible probability in $\lambda$, $\cA$ cannot output a pair $(\tildem, \tsigma)$ such that $\Verify_{\pk}(\tildem, \tsigma)=1$ and $\cA$ never queried $\Sign_\sk(\tildem)$.
Given a secure digital signature scheme and any binary Hamming code, we obtain our first main result.

\begin{restatable}{theorem}{hamCRLDC}\label{thm:ham-crldc}
	Let $\Pi$ be a $r \defeq r(\lambda)$ length signature scheme.
	Let $\Cin$ be a binary Hamming code with rate $\betain$ and error-rate $\rhoin$.
	Then for every positive polynomial $k(\cdot)$ and constant $c \in (0,1/2)$, there exists a code family $\cC_{\pred{H}} \defeq \{C_{\pred{H},\lambda}[K,k(\lambda),2]\}_{\lambda \in \bbN}$ and function $\mu\defeq \mu(\lambda)$ such that $\cC_{\pred{H}}$ is a $(\ell, \rho, p, \delta)$-Hamming \crLDC with 
	\begin{itemize}
		\item $K = O((1/\betain)\max\{k (1 + \log(k)/r), r\})$, 
		\item $\ell = O((\mu/\betain)\cdot (r + \log(k)))$,
		\item $\rho = c \cdot \rhoin$, 
		\item $p = 1-\exp(-\mu (1/2-c)^2\big/2(1-c)) > 2/3$, and
		\item $\delta = 1/2$,
	\end{itemize}
	where $k \defeq k(\lambda)$.
\end{restatable}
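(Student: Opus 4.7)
The plan is to instantiate the outer-inner template described in the overview with the identity as the outer code and the given $\Cin$ as the inner code, tied together by the signature scheme $\Pi$. Partition $x \in \bin{k}$ into $d = \lceil k/r \rceil$ blocks $y\p{1}, \dots, y\p{d}$ of $r$ bits each; let $(\pk, \sk) \gets \Gen(1^\lambda)$ be a key pair (treated as part of the code's non-uniform advice, so that $\sk$ is hardcoded into $\Enc_\lambda$ and $\pk$ into $\Dec_\lambda$); compute $\sigma_i \gets \Sign_\sk(i \circ y\p{i})$ and set $Y\p{i} = \Cin(i \circ y\p{i} \circ \sigma_i)$; the final codeword is $Y\p{1} \circ \dots \circ Y\p{d}$. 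Each inner codeword has length $\Theta((r + \log k)/\betain)$, giving total length $K = d \cdot \Theta((r+\log k)/\betain) = O(k(1 + \log(k)/r)/\betain)$; the $\max$ with $r$ in the stated bound covers the edge case $k < r$ where there is a single inner block of length $\Omega(r/\betain)$.

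The decoder $\Dec(j)$ would (a) compute the block $B = \lceil j/r \rceil$ containing $x_j$; (b) sample $\mu$ block indices $I_1, \dots, I_\mu$ uniformly and independently from $[d]$; (c) for each of the $\mu+1$ blocks in $\{B, I_1, \dots, I_\mu\}$, read the inner codeword in full, run the Hamming decoder, and verify the recovered signature against the expected block index; (d) if block $B$ verifies \emph{and} strictly more than $\mu/2$ of the $I_t$ verify, output the appropriate bit of $B$'s decoded content, otherwise output $\bot$. The total number of queries is $O((\mu/\betain)(r+\log k))$, which matches the claimed locality.

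I would organize the proof around the four clauses of \cref{def:crldc}. Correctness is immediate: on an uncorrupted codeword every inner block Hamming-decodes to its signed tuple, every signature verifies, and $\Dec$ returns $x_j$. For the error-budget bookkeeping, a pigeonhole argument shows that if the adversary induces at most a $\rho = c \cdot \rhoin$ fraction of Hamming errors overall, then at most a $c$ fraction of inner blocks can incur in-block corruption exceeding $\rhoin$; call these \emph{bad} and the others \emph{good}, and note that good blocks Hamming-decode correctly with valid signatures. For the Fool clause I would reduce to signature unforgeability: if a PPT adversary $\cA$ causes $\Dec^{\tildey}(j) \notin \{x_j, \bot\}$, then some decoded tuple $(\tildei, \tildey, \tsigma)$ in the decoder's view must pass verification with $(\tildei, \tildey)$ differing from every legitimately signed pair $(i, y\p{i})$, and a standard reduction simulates $\Enc_\lambda$ through the signing oracle and returns this pair as a forgery. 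For the Limit clause, conditioning on the no-forgery event, each sampled $I_t$ is good independently with probability at least $1-c$, so the multiplicative Chernoff bound yields
\[
\Pr[\text{at most } \mu/2 \text{ of the } I_t \text{ are good}] \leq \exp\!\left(-\frac{\mu(1/2-c)^2}{2(1-c)}\right) = 1 - p,
\]
so every bit in a good block has decoding probability $\geq p > 2/3$ and therefore belongs to $\Good(\tildey)$; since at least a $1-c > 1/2$ fraction of blocks are good, $|\Good(\tildey)| \geq (1-c)k \geq \delta k$.

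The main obstacle I anticipate is making the Fool-to-forgery reduction genuinely tight: the decoder's view depends on the adversary's adaptive corruption across all $d$ blocks \emph{and} on its own random sample, so the reduction must argue that whenever $\Dec$ outputs a value other than $x_j$ or $\bot$ there is a single, specific block in its view whose verified tuple is inconsistent with the signed list produced during the simulated encoding. The remaining components -- the pigeonhole bound on bad blocks, the Chernoff estimate, and the parameter substitution for $K$, $\ell$, $\rho$, $p$, and $\delta$ -- should be routine once this reduction is in place.
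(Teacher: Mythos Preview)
Your construction diverges from the paper's in a way that breaks the security reduction. You hardwire a single key pair $(\pk,\sk)$ into $(\Enc_\lambda,\Dec_\lambda)$ as non-uniform advice, but then the reduction you sketch---simulating $\Enc_\lambda$ through a signing oracle---is simulating a \emph{random} key supplied by the signature challenger, not the specific pair baked into the code. A non-uniform PPT adversary $\cA$ may carry $\sk_\lambda$ as advice and forge freely; the reduction cannot hand $\cA$ the challenger's secret key, so it cannot faithfully reproduce $\cA$'s view. The averaging patch (``there exists a key that works'') inverts the quantifiers: it yields ``for every $\cA$ there is a good key,'' not ``there is one key good against every $\cA$.'' This is also precisely the trusted-setup assumption the paper advertises avoiding.

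The paper's fix is to have $\Enc_\lambda$ sample a \emph{fresh} key pair on each call and embed $\pk$ inside every inner block, encoding $x\p{j}\circ\sigma\p{j}\circ\pk\circ j$ under $\Cin$. The decoder then has no a~priori access to $\pk$; its $\mu$ random block reads are there to \emph{recover} $\pk$ by taking the majority over the $\pk$-field of the decoded samples, not to run the verification-based consistency check you describe. The Chernoff term in the stated $p$ is exactly the probability that this majority equals the true $\pk$---hence the exponent $(1/2-c)^2/2(1-c)$. Once the decoder holds $\pk^*=\pk$, verification of the target block either succeeds on the original content, fails (output $\bot$), or succeeds on altered content; the last case is a forgery against a key the adversary never saw until it received the codeword, so the reduction is sound. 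Under your hardcoded-key design the $\mu$ samples are actually superfluous (one verified read of block $B$ already decides the output), which is a symptom that the role of the sampling has been mis-identified.
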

\noindent Our code family $\cC_{\pred{H}}$ is constant rate whenever $\betain = \Theta(1)$ and $\Omega(\log(k(\lambda))) = r(\lambda) \leq k(\lambda)$.
Our construction allows for $r(\lambda) > k(\lambda)$, but this results in locality $\ell \geq K$, so it is more efficient to use a Hamming code with comparable rate and error-rate.
Any choice of $\mu$ satisfying $p > 2/3$ ensures that $\delta = 1/2$; % (hence why we state ``there exists'' rather than ``for all''). 
\eg, $\mu(\lambda) \defeq O(\log^{1+\epsilon}(\lambda))$ for constant $\epsilon > 0$ gives us polylog locality and success probability $1-\negl(\lambda)$, %yielding a \crLDC with high success probability at the cost of higher locality; here and throughout, we let 
where $\negl$ denotes some unspecified negligible function.

We can instantiate \cref{thm:ham-crldc} with a constant rate and error-rate binary Hamming code $\Cin$ (\eg, \cite{Justesen72}) and an appropriate signature scheme to achieve a constant rate and error-rate Hamming \crLDC with polylog locality.
Our construction shines when $r(\lambda) = \polylog(\lambda)$ and under standard idealized models there exist signature schemes with $r(\lambda)$ as small as $\Theta(\log^{1+\epsilon}(\lambda))$ for small constant $\epsilon > 0$ \cite{C:Schnorr89,EC:BloLee22}, assuming these schemes satisfy the following notion of concrete security: for security parameter $\lambda$, any adversary running in time $2^{\lambda/2}$ can violate the security of the scheme with probability at most $2^{-\lambda/2}$ for signatures of length $r(\lambda) = \lambda$.
Plugging in $\lambda' = \Theta(\log^{1+\epsilon}(\lambda))$, said schemes are secure against super-polynomial time adversaries with negligible security in $\lambda$, which implies they satisfy our definition of security for signature schemes.
%they are asymptotically secure against PPT adversaries.
Using such a scheme with a constant rate and error-rate Hamming code $\Cin$ and  $\mu(\lambda) \defeq O(\log^{1+\epsilon}(\lambda))$, we obtain the following corollary.

\begin{corollary}\label{cor:ham}
	Let $\Pi$ be a $r(\lambda) = \Theta(\log^{1+\epsilon}(\lambda))$ length signature scheme for constant $\epsilon > 0$.
	Then for all sufficiently large positive polynomials $k(\cdot)$, there exists code family $\{ C_{\pred{H},\lambda}[K, k(\lambda), 2] \}_{\lambda \in \bbN}$ that is a $(\ell, \rho, p, \delta)$-Hamming \crLDC with 
	\begin{itemize}
		\item $K = O(k)$, 
		\item $\ell = O( \log^{2(1+\epsilon)}(\lambda) )$, 
		\item $\rho = \Theta(1)$, 
		\item $p = 1-\negl(\lambda)$, and 
		\item $\delta = 1/2$,
	\end{itemize} 
	where $k \defeq k(\lambda)$. % and $\negl(\cdot)$ is an unspecified negligible function.
\end{corollary}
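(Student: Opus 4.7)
The plan is to derive Corollary 1 as a direct instantiation of Theorem 1, so the main task is to pick the free parameters $\Cin$, $\mu$, and $c$ appropriately and verify that each of the five bounds in the corollary follows. I will fix a binary Hamming code $\Cin$ with constant rate $\betain = \Theta(1)$ and constant error-rate $\rhoin = \Theta(1)$ (for instance, a Justesen code), choose any constant $c \in (0, 1/2)$, and set $\mu(\lambda) \defeq \Theta(\log^{1+\epsilon}(\lambda))$ (matching the order of $r(\lambda)$). Theorem 1 then yields a code family $\cC_{\pred{H}}$, and I need to check that its parameters simplify to those stated in the corollary.

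For the block length, note that since $k \defeq k(\lambda)$ is polynomial in $\lambda$ we have $\log(k) = O(\log(\lambda))$, and since $r(\lambda) = \Theta(\log^{1+\epsilon}(\lambda))$ we get $\log(k)/r = o(1)$, so $k(1 + \log(k)/r) = O(k)$. For $k$ sufficiently large as a polynomial in $\lambda$, $r(\lambda) \leq k$, so $\max\{k(1 + \log(k)/r), r\} = O(k)$. Combined with $1/\betain = \Theta(1)$ this gives $K = O(k)$. For the locality, plugging $\mu, \betain, r$ into Theorem 1 yields $\ell = O(\log^{1+\epsilon}(\lambda) \cdot (\log^{1+\epsilon}(\lambda) + \log(k))) = O(\log^{2(1+\epsilon)}(\lambda))$, using again $\log(k) = O(\log(\lambda)) = O(\log^{1+\epsilon}(\lambda))$. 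The error-rate bound $\rho = c \cdot \rhoin = \Theta(1)$ and the value $\delta = 1/2$ are immediate.

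The remaining step is the success probability. With $c$ a fixed constant in $(0,1/2)$, the quantity $(1/2-c)^2/(2(1-c))$ is a positive constant, so Theorem 1 gives $p = 1 - \exp(-\Theta(\mu)) = 1 - \exp(-\Theta(\log^{1+\epsilon}(\lambda)))$. Since $\exp(-\Theta(\log^{1+\epsilon}(\lambda)))$ decays faster than any inverse polynomial in $\lambda$, it is a negligible function, establishing $p = 1 - \negl(\lambda)$.

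I do not expect any real obstacle here: the work is purely parameter bookkeeping to confirm the asymptotics line up, and the one subtlety worth flagging is the ``sufficiently large polynomial $k(\cdot)$'' hypothesis, which is needed precisely to guarantee that the $r$ term does not dominate inside the $\max$ in the block length expression, and that $\log(k)/r = o(1)$ so the extra $\log(k)/r$ factor collapses into a constant.
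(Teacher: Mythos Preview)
Your proposal is correct and matches the paper's approach exactly: the paper derives the corollary by instantiating \cref{thm:ham-crldc} with a constant-rate, constant-error Hamming inner code (\eg, Justesen), the given $r(\lambda)=\Theta(\log^{1+\epsilon}(\lambda))$ signature scheme, and $\mu(\lambda)=O(\log^{1+\epsilon}(\lambda))$, then reads off the parameters. Your bookkeeping (including the observation that ``sufficiently large $k(\cdot)$'' ensures $r(\lambda)\leq k$ and $\log(k)/r=o(1)$) is the same as what the paper sketches inline before stating the corollary.
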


The parameters of \cref{cor:ham} are comparable to the Hamming \crLDC construction of \cite{BGGZ21}, which achieves $K = O(k)$, $ \ell = \polylog(k)$, $\rho = \Theta(1)$, $p = 1-\negl(\lambda)$, and $\delta = \Theta(1)$. 
Our construction is arguably conceptually simpler than that of \cite{BGGZ21}, which utilizes local expander graphs and collision-resistant hash functions (with a trusted setup), whereas our construction simply partitions, signs, and encodes. 
Moreover, our use of signatures does not require public key infrastructure as such schemes exist from one-way functions \cite{lamport}. %, whereas collision-resistant hash functions are a stronger assumption than one-way functions \cite{..}. %, whereas collision-resistant hash functions are (black-box) separated from one-way functions \cite{..}.

\subsubsection{Extension to InsDel Errors}
Our second contribution is extending the construction of \cref{thm:ham-crldc} to handle InsDel errors.
Prior constructions of InsDel $\LDC$s utilized a so-called ``Hamming-to-InsDel'' compiler \cite{OstPan15,BBGKZ20}. %, which also borrows from the notion of concatenation codes. %, using a suitable Hamming \LDC as the outer code $\Cout$ and a suitable InsDel code as the inner code (\ie, a non-local code).
%This new InsDel \LDC has asymptotically the same rate and error-rate as the underlying Hamming \LDC at the cost of a poly-logarithmic blow-up in the locality.
Key to this compiler is a \emph{noisy binary search} algorithm, which intuitively allows one to search an almost sorted list and find most entries with high probability. 
We use this algorithm to find blocks of codewords that are not ``too corrupt'', allowing us to handle more general InsDel errors.
We use the noisy binary search tools of Block \etal \cite{BBGKZ20} and the well-known Schulman-Zuckerman InsDel code \cite{SchZuc99} for $\Cin$ to extend \cref{thm:ham-crldc} to the InsDel setting.
%Using these tools and a suitable InsDel code for \Cin, we modify the family $\cC_{\pred{H}}$ to obtain a family of InsDel \crLDC{}s.
%We additionally modify the family $\cC_{\HAM}$ to make use of a suitable inner code $\Cin$ that is resilient to InsDel errors.
%We use the well-known Schulman-Zuckerman InsDel code \cite{SchZuc99} for $\Cin$, which has constant rate and error-rate, and has additional properties required by the noisy binary search tools of Block \etal \cite{BBGKZ20}.
Together with a secure digital signature scheme, we obtain our second main result.

\begin{restatable}{theorem}{mainthm}\label{thm:insdel-crldc}
	Let $\Pi$ be a $r \defeq r(\lambda)$ length signature scheme. 
	There exists a constant $c \in (0,1/2$) such that for every positive polynomial $k(\cdot)$ and constant $\rho^* \in (0,1/3)$, there exists a code family $\Cins \defeq \{ C_\lambda[K, k(\lambda), 2]\}_{\lambda \in \bbN}$ and a function $\mu \defeq \mu(\lambda)$ such that $\Cins$ is a $(\ell, \rho, p, \delta)$-InsDel $\crLDC$ with 
	\begin{itemize}
		\item $K = O(\max\{k(1+\log(k)/r),r\})$, %$n = \Theta(k \cdot [3 + \log(k) / r(\lambda)])$, 
		\item $\ell = O((\log^3(K)+\mu)\cdot(r + \log(k)))$, 
		\item $\rho = \Theta(1)$, 
		\item $p = 1-\rho^* - \exp(-\mu(1/2-c)^2/2(1-c)) > 2/3$, and 
		\item $\delta = 1-\Theta(\rho)$,
	\end{itemize} 
	where $k \defeq k(\lambda)$. %, where $c\geq 1$ is a constant that depends on the scheme $\Pi$.
\end{restatable}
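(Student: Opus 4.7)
The plan is to follow the blueprint of \cref{thm:ham-crldc} but replace the inner Hamming code with a constant-rate InsDel code and use noisy binary search to locate inner blocks in the face of insertions and deletions. Specifically, I would partition the message $x$ into $d = \Theta(k/r)$ contiguous blocks $x\p{1} \circ \dotsc \circ x\p{d}$ of length $\Theta(r)$, prepend to each block its index $j$ and sign the pair $(j, x\p{j})$ under $\sk$ to obtain an augmented block of length $\Theta(r + \log k)$, and then encode each augmented block with the Schulman--Zuckerman code $\Cin$ of constant rate $\betain = \Theta(1)$ and constant InsDel error-rate $\rhoin = \Theta(1)$ to produce inner codewords $Y\p{j}$ of common length $\ellin = \Theta(r + \log k)$. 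The final codeword is $Y = Y\p{1} \circ \dotsc \circ Y\p{d}$. The verification key $\pk$ is baked into the code family $\Cins$ indexed by $\lambda$, so no trusted setup is required.

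To decode index $i \in [k]$, the decoder first invokes the noisy binary search algorithm of \cite{BBGKZ20} on the corrupted received word $\tildey$ to locate a candidate inner-block window corresponding to the block $j^* = \lceil i d / k \rceil$ that contains $i$; this step reads $O(\log^3(K))$ windows each of size $O(\ellin)$. It then reads the entire located window, runs $\Decin$ to obtain a candidate signed block $(\tildej, \tildex\p{j^*}, \tsigma)$, runs $\Verify_{\pk}((\tildej, \tildex\p{j^*}), \tsigma)$, and outputs the appropriate bit of $\tildex\p{j^*}$ if $\tildej = j^*$ and verification succeeds, or $\bot$ otherwise. We amplify success by running this procedure $\Theta(\mu)$ times with independent randomness and taking the majority non-$\bot$ answer, giving a total query complexity of $O((\log^3(K) + \mu)\cdot(r + \log k))$ as claimed.

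For the $\Fool$ predicate, I would reduce to signature unforgeability: conditioned on the adversary never producing a signed block that $\Verify_{\pk}$ accepts but whose index/contents differ from the honestly generated $(j, x\p{j})$, every non-$\bot$ output of the decoder on any index is correct, so the only way to output something outside $\{x_i, \bot\}$ is to trigger a forgery event, which has probability $\negl(\lambda)$. For the $\Limit$ predicate, I would use the standard edit-distance pigeonholing argument of the Hamming-to-InsDel compiler: if $\ED(y, \tildey) \leq \rho$, then the number of inner blocks whose local edit-distance corruption exceeds $\rhoin$ is at most $\Theta(\rho)\cdot d$, and for every ``good'' block $\Decin$ recovers the augmented block exactly, so the signature verifies. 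By the guarantee of noisy binary search, all but a $\rho^*$ fraction of good blocks are found by the locator, and the Chernoff bound on the $\mu$ independent repetitions boosts the per-index correctness probability above $2/3$ for every such index, yielding $|\Good(\tildey)| \geq (1 - \Theta(\rho))k$.

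The main technical obstacle is threading the constants so that the noisy binary search of \cite{BBGKZ20} is applicable as a black box on top of our signature-augmented inner encoding while still leaving enough edit-distance slack for $\Decin$ to recover each located block. Concretely, the global edit-distance budget $\rho$ must be chosen so that (i) the per-block corruption induced on the typical block is strictly less than $\rhoin$, (ii) the alignment guarantees of the binary search hold with failure probability at most the constant $\rho^* \in (0, 1/3)$, and (iii) both facts together still admit a constant $c \in (0, 1/2)$ making the Chernoff exponent $\exp(-\mu(1/2 - c)^2/(2(1-c)))$ small enough to dominate the locator failure, which is exactly what the stated $p > 2/3$ requires. Once these constants are fixed, the parameter bookkeeping for $K$ and $\ell$ follows from $d \ellin = \Theta(\max\{k(1+\log(k)/r), r\})$ and the repetition-and-search cost stated above.
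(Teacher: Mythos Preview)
There is a genuine gap in how you handle the verification key. You write that ``$\pk$ is baked into the code family $\Cins$ indexed by $\lambda$, so no trusted setup is required,'' but these two claims are in tension. If $\pk$ is part of the public description of $C_\lambda$, then so is the full description of $\Enc_\lambda$, and $\Enc_\lambda$ must contain $\sk$ in order to sign. A PPT adversary who knows the code family therefore knows $\sk$ and can forge arbitrary signed blocks, collapsing your $\Fool$ argument. The paper avoids exactly this by having $\Enc_\lambda$ sample a \emph{fresh} $(\pk,\sk)$ pair as internal randomness on every call, embed $\pk$ inside every inner block, and discard $\sk$; the decoder then has to \emph{recover} $\pk$ from the received word. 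That recovery step---sampling $\mu$ random positions, running a separate block-decoding subroutine $\BlockDecode$ around each, and taking the majority of the extracted keys---is precisely where the Chernoff term $\exp(-\mu(1/2-c)^2/2(1-c))$ in the statement of $p$ comes from. Your amplification strategy (repeating the whole $\NBS$-based lookup $\mu$ times) does not produce that term and, more concretely, does not help against the attack where the adversary re-signs a block under its own key pair.

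A related consequence is that your locality accounting does not match the theorem. Running $\NBS$ $\Theta(\mu)$ times costs $O(\mu\log^3(K)(r+\log k))$ queries, not the additive $O((\log^3(K)+\mu)(r+\log k))$ you claim. In the paper's construction the additive form arises because $\NBS$ is called \emph{once} (for the target block $j$) and the $\mu$ cheap $\BlockDecode$ calls are used only for $\pk$ recovery. Two smaller omissions: the inner codewords must be padded with zero buffers $0^{\alpha\cdot\Theta(r+\log k)}$ on each side for $\NBS$ and $\BlockDecode$ to work, and the $\rho^*$ term in $p$ is an irreducible loss from $\NBS$ (a fixed $\rho^*$ fraction of $\gamma$-good blocks may be unrecoverable regardless of repetition), so majority-over-$\mu$ does not drive it to zero.
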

\noindent As with \cref{thm:ham-crldc}, our family $\Cins$ is constant rate whenever $\Omega(\log(k(\lambda))) = r(\lambda) \leq k(\lambda)$, and additionally has the same downside whenever $r(\lambda) > k(\lambda)$, in which case it is more efficient to directly encode with an (asymptotically) optimal InsDel code (\eg, \cite{SchZuc99}).
We again choose $\mu$ such that $p = 1 - \negl(\lambda) > 2/3$; %, yielding high success probability at the cost of higher locality.
moreover, under the same set of assumptions on the underlying signature scheme as with our Hamming \crLDC (\eg, \cite{C:Schnorr89,EC:BloLee22}), for $\mu(\lambda) = \Theta(\log^{1+\epsilon}(\lambda))$ for small constant $\epsilon > 0$, we obtain the following corollary.

\begin{corollary}\label{cor:main}
	Let $\Pi$ be a $r(\lambda) = \Theta(\log^{1+\epsilon}(\lambda))$ length signature scheme for constant $\epsilon > 0$.
	Then for all sufficiently large positive polynomials $k(\cdot)$, there exists code family $\{C_{\Ins,\lambda}[K, k(\lambda), 2]\}_{\lambda \in \bbN}$ that is a $(\ell, \rho, p, \delta)$-InsDel \crLDC with $K = O(k)$, $\ell = O(\log^{4+\epsilon}(\lambda))$, $\rho = \Theta(1)$, $p = 1-\negl(\lambda)$, and $\delta = 1- \Theta(\rho)$, where $k \defeq k(\lambda)$. % and $\negl(\cdot)$ is an unspecified negligible function.
\end{corollary}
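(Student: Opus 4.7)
The plan is to adapt the construction from \cref{thm:ham-crldc} to the InsDel setting by replacing the inner Hamming code $\Cin$ with the Schulman--Zuckerman InsDel code, which has constant rate and corrects a constant fraction of InsDel errors over the binary alphabet, and by augmenting the decoder with a noisy binary search step to locate blocks whose positions have been shifted by insertions and deletions. Concretely, I would generate $(\pk,\sk)\gets\Gen(1^\lambda)$, partition $x\in\bin{k}$ into $d = \Theta(\max\{k/r,1\})$ blocks $x\p{1},\dotsc,x\p{d}$ of length $\Theta(r)$ (with padding when $r>k$), sign each block together with its index as $\sigma\p{j} = \Sign_\sk(j \circ x\p{j})$, set $y\p{j} = j \circ x\p{j} \circ \sigma\p{j}$, and output $Y = \Encin(\pk) \circ \Encin(y\p{1}) \circ \dotsc \circ \Encin(y\p{d})$, where the $\pk$ prefix is padded or repeated so that it survives a constant fraction of InsDel errors, contributing only $O(r)$ to $K$.

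On query $i\in[k]$, the decoder first recovers $\pk$ from the prefix, identifies the block $j$ containing index $i$, and invokes the noisy binary search of \cite{BBGKZ20} to locate a window in the received word $\tildey$ that is edit-close to $\Encin(y\p{j})$. The search costs $O(\log^3(K))$ queries and returns, with probability at least $1-\rho^*$ over the choice of $i$ (via the $\rho^*$-good-index guarantee of \cite{BBGKZ20}), a pointer whose associated window lies inside the decoding radius of $\Decin$. The decoder then expends $O(\mu\cdot(r+\log(k)))$ further queries to read and decode that window, obtaining $(j',x',\sigma')$; it returns the requested bit of $x'$ if $j'=j$ and $\Verify_{\pk}(j'\circ x',\sigma')=1$, and $\bot$ otherwise. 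As in \cref{thm:ham-crldc}, the $\mu$ factor is produced by sampling/repetition to amplify local decoding success, contributing the Chernoff-style term $\exp(-\mu(1/2-c)^2/2(1-c))$ to the failure probability in $p$.

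I would then verify the three conditions of \cref{def:crldc}. Perfect completeness holds because noisy binary search is trivially correct on the clean codeword and $\Decin$ recovers $y\p{j}$ exactly. The $\Fool$ predicate reduces to signature unforgeability: outputting a wrong (non-$\bot$) symbol forces the decoder to accept a tuple $(j,x',\sigma')$ with $x'\neq x\p{j}$, which constitutes a forgery on a message never signed and therefore occurs only with negligible probability in $\lambda$. The $\Limit$ bound follows from a simple averaging argument: at most a $\Theta(\rho)$ fraction of the $d$ encoded blocks can incur edit distance above the $\Decin$ threshold (since the total budget is $\rho K$), so at least a $1-\Theta(\rho)$ fraction of indices $i$ land in well-decoded blocks, yielding $\delta = 1-\Theta(\rho)$.

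The main technical obstacle is cleanly interfacing the noisy binary search guarantee of \cite{BBGKZ20} with block-level signature verification. Since InsDel errors shift block boundaries arbitrarily, the decoder cannot trust positional indices in $\tildey$, and the noisy search provides only a probabilistic per-index promise that the returned pointer is close to the true block. The analysis of $p$ must carefully separate three failure modes: the $\rho^*$ chance that noisy search misidentifies the block for a random $i$, the exponentially small amplified block-decoder failure absorbed into the Chernoff term, and the negligible forgery probability absorbed into $\eps_{\F}$. Because the overall composition mirrors \cref{thm:ham-crldc} with an added noisy-search layer, no new techniques beyond careful bookkeeping of these error sources should be required.
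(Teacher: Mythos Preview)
Your proposal has a genuine gap in the public-key recovery step, and it is exactly the attack the paper's strawman discussion warns against. You place $\pk$ in a dedicated prefix block and claim it ``survives a constant fraction of InsDel errors, contributing only $O(r)$ to $K$.'' These two claims are incompatible: the adversary's budget is $\rho K = \Theta(k)$ edits, which dwarfs an $O(r) = O(\log^{1+\epsilon}(\lambda))$ prefix. A PPT adversary can delete your entire prefix, generate its own $(\pk',\sk')$, sign arbitrary blocks $(j,x')$ with $\sk'$, and prepend a fresh $\Encin(\pk')$. Your decoder then recovers $\pk'$ and happily verifies the forged block; this is \emph{not} a forgery against $\Pi$ because the adversary signed with its own key, so your reduction to signature unforgeability in the $\Fool$ analysis breaks down. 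The paper handles this by embedding $\pk$ in \emph{every} block $x\p{j}\circ\sigma\p{j}\circ\pk\circ j$, then recovering $\pk^*$ by sampling $\mu$ random positions, running $\BlockDecode$ on each, and taking the majority of the extracted keys. The Chernoff term in $p$ comes from this majority vote over $\mu$ samples, not from amplifying a block-decoder as you suggest.

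A smaller point: the corollary in the paper is not proved from scratch but obtained by instantiating \cref{thm:insdel-crldc} with $r(\lambda)=\Theta(\log^{1+\epsilon}(\lambda))$ and $\mu(\lambda)=\Theta(\log^{1+\epsilon}(\lambda))$; the locality $\ell=O((\log^3(K)+\mu)(r+\log k))$ then collapses to $O(\log^{4+\epsilon}(\lambda))$, and $p$ becomes $1-\negl(\lambda)$. Your write-up re-derives the whole InsDel construction, which is fine, but the fix above (public key in every block plus majority recovery via $\BlockDecode$) is what you actually need to make the $\Fool$ argument go through.
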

To the best of our knowledge, our InsDel $\crLDC$s are the first of their kind and compare favorably to the prior InsDel $\LDC$s %(\ie, non-relaxed and computationally unbounded errors) 
of Block \etal \cite{BBGKZ20} and are comparable to the private and resource-bounded \LDC{}s of Block and Blocki \cite{BloBlo21}.
\subsection{Related Work}\label{sec:prior}
Classical InsDel codes were initially studied in \cite{Levenshtein_SPD66}, inspiring a rich line of research into these codes; see surveys \cite{Sloane2002OnSC,SWAT:Mitzenmacher08,IEEE:MerBhaTar10} for more information.
Recently, $k$-deletion correcting codes with optimal rate were constructed, answering a long standing open question \cite{SimBru21}. %,ISIT:SimBru19,ISIT:SimGabBru20b,ISIT:SimGabBru20a}. 
Randomized codes with positive rate that are correct a large fraction of deletions are studied in \cite{LATIN:KiwLoeMat04,GurWan17}. %, and constant rate codes resilient to a constant fraction of InsDel errors with efficient encoding and decoding (\ie, polynomial time) were studied extensively in \cite{SchZuc99,STOC:HaeSha18,SODA:CHLSW19,GurLi19,BraGurZba18,SODA:CGHL21,SODA:CheLi21,GurHaeSha21,HaeSha21,CJLW22}.
Another line of work extends list decoding to InsDel codes \cite{ICALP:HaeShaSud18,LiuTjuXin19,GurHaeSha21}.
%List decodable codes are error-correcting codes that are resilient to a larger fraction of errors at the cost of outputting a small list of potential codewords , rather than a unique codeword.
Finally, \cite{STOC:HaeSha18} constructs explicit synchronization strings which can be ``locally decoded'' in the following sense: each index of the string is computable using symbols located at a small number of other locations in the string.
These synchronization strings are used to construct near linear time interactive coding schemes for InsDel errors.

\cite{Lipton94} initiated the study of codes resilient to errors introduced by computationally bounded channels. 
Several follow-up works adopt this channel model, yielding Hamming codes with better parameters than their classical counterparts \cite{TCC:MPSW05,GurSmi16,ShaSil21}.
It has been argued that any real-world communication channel can be reasonably modeled as a computationally bounded channel \cite{Lipton94,ITC:BloKulZho20}, %as all such channels have some sort of limitations on their computation, 
so one can reasonably expect error patterns encountered in nature to be modeled by some (possibly unknown) PPT algorithm.
This channel model has also been extended to the \LDC setting for both Hamming \cite{ICALP:OstPanSah07,C:HemOst08,HOSW11,ITC:BloKulZho20,BGGZ21} and, more recently, InsDel errors \cite{BloBlo21}.

%As discussed before, 
\cite{BGHSV06} introduced the notion of relaxed locally decodable codes.
%These codes admit local decoding algorithms with the additional property that the decoder is allowed to output a symbol $\bot$ which represents that the decoder does not know the correct value.
%This relaxation allows \cite{STOC:BGHSV04} to construct locally decodable codes with much better parameters than their classical counterparts; in particular, they achieve codes which are resilient to a constant fraction of Hamming errors, have constant locality, and have encoding length $k^{1+\epsilon}$ for small $\epsilon$.
In a follow-up work, \cite{GurRamRot20} introduced and construct \emph{relaxed locally correctable codes} (\pred{rLCC}) for Hamming errors: codes with local correction algorithms which can correct corrupt codeword symbols via querying a few locations into the received word.
Their construction has significantly better parameters than classical Hamming \pred{LCC}{}s, achieving constant locality, constant error-rate, and polynomial block length.
Furthermore, their \pred{rLCC} is also a \rLDC since their code is systematic. % (\ie, the message is a substring of the codeword).
Follow-up work continued to give improved \pred{rLCC} constructions \cite{ICALP:AsaShi21,FOCS:CohYan22,CGS22}
\cite{BGGZ21} studies Hamming \rLDC{}s/\pred{rLCC}{}s in the context of computationally bounded channels (\crLDC/\pred{crLCC}).
Our work directly adapts this model but for InsDel errors.
%The dual assumption of both a computationally bounded channel and a \rLDC allows \cite{BGGZ21} to construct \pred{crLCC}{}s and \crLDC{}s which achieve constant rate, constant error-rate, and poly-logarithmic locality. %, further improving on the results of Ben-Sasson \etal and Gur \etal \cite{BGHSV06,GurRamRot20}.

%As mentioned before, 
\cite{OstPan15} initiated the study of InsDel \LDC{}s.
They give a compiler which transforms any Hamming \LDC into an InsDel \LDC, asymptotically preserving the rate and error-rate of the underlying Hamming \LDC at the cost of a poly-logarithmic increase in the locality.
\cite{BBGKZ20} reproves this result with a conceptually simpler analysis using techniques borrowed from the study of a cryptographic object known as memory-hard functions \cite{ErdosGS75,CCS:AlwBloHar17,EC:AlwBloPie18,C:BloHol22}. %\cite{AC:ForLucWen14,STOC:AlwSer15,BiryukovDK16,C:AlwBlo16,EC:AlwBloPie17,EC:ACPRT17,TCC:AlwTac17,TCC:BloZho17,CCS:AlwBloHar17,EC:AlwBloPie18,C:BHKLXZ19,C:CheTes19,ITCS:ABZ20}.
\cite{EPRINT:CheLiZhe20} proposes the notion of Hamming/InsDel \LDC{}s with randomized encodings in various settings, including when the encoder and decoder share randomness or when the channel adds error patterns non-adaptively. 
In the InsDel case, \cite{EPRINT:CheLiZhe20} invokes the compiler of \cite{OstPan15} and obtain a code with block length $O(k)$ or $O(k \log(k))$ and $\polylog(k)$ locality.
Recently, \cite{BloBlo21} extends the compiler of \cite{BBGKZ20} to the private-key setting of \cite{ICALP:OstPanSah07}, where the encoder and decoder share a secret key unknown to the channel, and to the resource-bounded setting of \cite{ITC:BloKulZho20}, where the channel is assumed to be resource constrained in some way. % (\eg, the channel is a low-depth circuit).
%The aforementioned lower bounds for InsDel \LDC{}s due to Blocki \etal also extend to the private key \LDC setting; namely, for constant locality, even private key InsDel \LDC{}s require exponential block length \cite{FOCS:BCGLZZ21}.
While it is likely that applying the ``Hamming-to-InsDel'' compiler to the $\crLDC$ of \cite{BGGZ21} or our \crLDC{}s would yield an InsDel $\crLDC$, this has not been formally claimed or proven in prior work.
Finally, there has been recent progress in obtaining lower bounds for InsDel $\LDC$s.
\cite{FOCS:BCGLZZ21} proved that InsDel $\LDC$s with constant locality, even in the private-key setting, require exponential block length, and also show that linear 2-query InsDel $\LDC$s do not exist.
This makes it all the more surprising that a constant rate InsDel $\crLDC$ in the polylog locality regime exist.

%\alex{need to talk about lower bounds. put in intro and/or here.}
%\jeremiah{Stress that our results give a simplified construction for Hamming CRLDC}
%\jeremiah{For Insdel LDCs there are exponential lower bounds for any constant locality. This makes it more surprising that one can achieve constant rate, polylog locality in the insdel setting.}
%\jeremiah{It is likely that applying the compiler to CRLDC of BGGZ would yield a insdel CRLDC thought this has not been formally claimed or proven in prior work.}

%\input{full-version/tech.tex.old}
% !TeX root = ../full-main.tex
\section{Technical Overview}\label{sec:tech-ing}
The main technical ingredients for both our Hamming and InsDel \crLDC constructions are the use of a digital signature scheme $\Pi$ with $r$-length signatures along with a suitable inner code $\Cin$.
The encoding algorithms for both codes are nearly identical, with the main difference being the choice of $\Cin$. % (Hamming vs. InsDel). %, along with some padding in the InsDel case that is necessary for the noisy binary search tools we utilize.
The decoding algorithms are also similar: the InsDel decoder is a (non-trivial) modification of the Hamming decoder to handle InsDel errors using noisy binary search techniques.
%We begin by discussing our Hamming construction followed by our InsDel construction.

\subsubsection{Hamming \texorpdfstring{\crLDC}{crLDC} Construction}\label{sec:tech-hamming}
Let $\Cin$ be an appropriate Hamming code (\ie, non-local), and let $\Pi = (\Gen, \Sign, \Verify)$ be an $r$-length signature scheme.
%As mentioned in \cref{sec:intro}, our encoder borrows ideas from concatenation codes \cite{concat}, except we utilize $\Pi$ as our outer code.
%Concatenation codes consist of an outer code and an inner code, and (roughly) operate as follows: first, encode a message with the outer encoder; second, partition the outer code word in some way; third, encode each partition with the inner encoder; finally, output the concatenation each inner codeword.
%In place of an outer code, we utilize the digital signature scheme $\Pi$, and we use $\Cin$ as our inner code.

\paragraph*{The Hamming Encoder}
%We describe $\EncHam$ in this section.
We define a family of encoding algorithms $\{\EncHam\}_\lambda$. % in \cref{alg:ham-encoder}.
Let $\lambda \in \bbN$ be the security parameter.
For any message $x \in \bin{k}$, encoder \EncHam partitions $x$ into $d = \ceil{k/r(\lambda)}$ blocks $x = x\p{1} \circ \cdots \circ x\p{d}$, where $x\p{i} \in \bin{r(\lambda)}$ for all $i$ (padding with $0$ as necessary). %\footnote{For simplicity in this overview, assume $k/r(\lambda)$ is an integer.}
%Now in place of using an outer code to encode each block $x\p{i}$, we utilize our digital signature scheme. 
Each $x\p{i}$ is now signed using $\Pi$:  \EncHam generates key pair $(\pk,\sk) \gets \Gen(1^\lambda)$ and computes signature $\sigma\p{i} \gets \Sign_{\sk}(x\p{i}\circ i)$. 
Next, the block $x\p{i} \circ \sigma\p{i} \circ \pk \circ i$ is encoded using $\Cin$ to obtain codeword $c\p{i}$, where $\pk$ is the public key generated previously.
Finally, \EncHam outputs $C = c\p{1} \circ \cdots \circ c\p{d} \in \bin{K}$.
If $r(\lambda) \geq k$, only a single block is signed and encoded at the cost of locality $\geq K$, so 
%Note that whenever $r(\lambda) \geq k$, \EncHam would sign and encode a single block; however,  the locality of this codeword would be $\geq n$. % decoder will be larger than the final codeword length $n$ since there is only a single block to decode.
%At that point, 
it is more efficient to %either choose a larger $k$ or 
use a Hamming code with similar rate and error-rate rather than $\EncHam$. % (\ie, just use $\Cin$ without signing anything).
We give the formal encoding algorithm in \cref{alg:ham-encoder}.

% !TeX root = ../full-main.tex
%\begin{center}
	\begin{algorithm}
		\DontPrintSemicolon
		\caption{Hamming Encoder $\EncHam$}\label{alg:ham-encoder}
		\SetKwInOut{Input}{Input}
		\SetKwInOut{Output}{Output}
		\SetKwInOut{Hardcoded}{Hardcoded}
		\Input{A message $x \in \bin{k}$.}
		\Output{A codeword $C \in \bin{K}$.}
		\Hardcoded{Hamming code $\Cin$; $r(\cdot)$-length signature scheme $\Pi$; and $\lambda \in \bbN$ in unary.}
		
		Sample $(\pk, \sk) \gets \Gen(1^\lambda)$
		
		Set $d = \ceil{k/r(\lambda)}$
		
		Partition $x = x\p{1} \circ \cdots \circ x\p{d}$ where $x\p{j} \in \bin{r(\lambda)}$ for every $j \in [d]$ (padding last block as necessary)
		
		\ForEach{$j \in [d]$}{
			$\sigma\p{j} \gets \Sign_{\sk}(x\p{j}\circ j)$
			
			$C\p{j} =  \Encin(x\p{j} \circ \sigma\p{j} \circ \pk \circ j)$
%			\begin{gather}
%				\sigma\p{j} \gets \Sign_{\sk}(x\p{j}\circ j) \label{eq:ham-sign}\\
%				C\p{j} =  \Encin(x\p{j} \circ \sigma\p{j} \circ \pk \circ j)\label{eq:ham-enc}
%			\end{gather}
		}
		
		Define $C \defeq C\p{1} \circ \cdots \circ C\p{d} \in \bin{K}$
		
		\Return{C}
	\end{algorithm}
%\end{center}

%\input{ISIT-2023-camera-ready/alg-ham-encoder}

% !TeX root = ../full-main.tex
\paragraph*{Strawman Decoder}
Given $\EncHam$, there is a natural decoding algorithm that does not satisfy \cref{def:crldc}. %; however, this natural decoder cannot hope to satisfy \cref{def:crldc}.
The strawman decoder proceeds as follows.
Let $x \in \bin{k}$, $C = \EncHam(x) \in \bin{n}$, and let $\tC \in \bin{n}$ such that $\HAM(\tC, C) \leq \rho$.
Let $i \in [k]$ be the input given to the strawman decoder and let $\tC$ be its oracle.
Since the goal is to recover bit $x_i$ from string $\tC$, the strawman decoder first calculates index $j \in [d]$ such that bit $x_i$ resides in block $x\p{j}$.
Since $\tC$ only contains Hamming errors, the strawman decoder views its oracle $\tC$ as blocks $\tC\p{1} \circ \cdots \circ \tC\p{d}$ and recovers block $\tC\p{j}$.
The strawman decoder then runs the decoder of $\Cin$ with input $\tC\p{j}$ to obtain some string $\tildem\p{j}$ which can be viewed as some (potentially corrupt) string $\tildex\p{j} \circ \tsigma\p{j} \circ \tpk \circ \tildej$.
The strawman decoder then proceeds to use the signature scheme to verify the contents of this decoded message by checking if $\Verify_{\tpk}(\tildex\p{j} \circ j, \tsigma\p{j}) \? 1$.
If verification fails, then the decoder outputs $\bot$; otherwise, the decoder outputs $\tildex\p{j}_{i^*}$, where $i^*$ is the index of $x\p{j}$ that corresponds to bit $x_i$.

Notice that if $\tC = C$, then this strawman decoder outputs the correct bit $x_i$ with probability $1$, satisfying \cref{item:rldc-correct} of \cref{def:crldc}.
However, this strawman decoder can never satisfy \cref{item:rldc-fool} if we desire error-rate $\rho = \Theta(1)$.
Consider the following simple attack.
Let $\cA$ be a PPT adversary that operates as follows:
\begin{enumerate*}[label=(\arabic*)]
	\item Given codeword $C$, the adversary $\cA$ decodes block $C\p{1}$ to obtain $x\p{1} \circ \sigma\p{1} \circ \pk \circ 1$.
	\item $\cA$ then generates its own key pair $(\pk', \sk')$, a message $x' = 1-x\p{1}$, and computes $\sigma' = \Sign_{\sk'}(x' \circ 1)$.
	\item $\cA$ then computes $C' = \Encin( x' \circ \sigma' \circ \pk' \circ 1 )$ and outputs $\tC = C' \circ C\p{2} \circ \cdots \circ C\p{d}$.
\end{enumerate*}
Intuitively, this attack succeeds for two reasons.
The first reason is that corruption of $C\p{1}$ to $C'$ is a small fraction of the total amount of corruptions allotted to transform $C$ to $\tC$.
The second reason is that the strawman decoder relies on the public key $\pk'$to perform verification.
The key to preventing this attack is addressing the recovery of the public key.
Notice that if the decoder recovered the true public key $\pk$ used by $\EncHam$, then this attack fails since the verification procedure fails and the decoder outputs bot.
Thus we modify the strawman decoder to recover the true public key to obtain our final Hamming decoder.

\paragraph*{The Hamming Decoder}
We define a family of decoding algorithms $\{\DecHam\}_\lambda$. % in \cref{alg:ham-decoder}.
%With insight from the strawman decoder, we describe our actual decoder $\DecHam$.
%Recall that our goal is to recover the public key used by $\EncHam$.
%While we cannot recover this key with probability $1$, we can recover $\pk$ with sufficiently high probability.
%To do so, we utilize random sampling and with majority vote.
Let $\lambda \in \bbN$, $\mu \in \bbN$ be a parameter of our choice, $x \in \bin{k}$, $C = \EncHam(x)$, and $\tC \gets \cA(C)$ such that $\HAM(C, \tC) \leq \rho$, where $\cA$ is a PPT adversary.
On input $i \in [k]$ and given oracle access to $\tC$, the decoder $\DecHam$ tries to recover $x_i$ via a two-step process.
First, $\DecHam$ tries to recover the true public key $\pk$.
It begins by uniformly sampling $j_1, \dotsc, j_\mu \getsr [d]$.
Parsing $\tC$ as $\tC\p{1}\circ \cdots \circ \tC\p{d}$, for each $\kappa \in [\mu]$ \DecHam (1) recovers some $\tildem\p{j_\kappa} \gets \Decin( \tC\p{j_\kappa} )$; (2) parses $\tildem\p{j_\kappa}$ as $\tildex\p{j_\kappa} \circ \tsigma\p{j_\kappa} \circ \tpk\p{j_\kappa} \circ \tildej$; and (3) recovers $\tpk\p{j_\kappa}$.
%\begin{enumerate*}[label=(\arabic*)]
%	\item recovers some $\tildem\p{j_\kappa} \gets \Decin( \tC\p{j_\kappa} )$;
%	\item parses $\tildem\p{j_\kappa}$ as $\tildex\p{j_\kappa} \circ \tsigma\p{j_\kappa} \circ \tpk\p{j_\kappa} \circ \tildej$; and
%	\item recovers $\tpk\p{j_\kappa}$.
%\end{enumerate*}
The decoder then sets $\pk^* = \pred{majority}(\tpk\p{j_1},\dotsc, \tpk\p{j_\mu})$.
Second, \DecHam computes $j$ such that $x_i$ lies in $x\p{j}$, computes $\tildem\p{j} \gets \Decin(\tC\p{j})$, parses it as $\tildex\p{j} \circ \tsigma\p{j} \circ \tpk\p{j} \circ \tildej$, then checks if $\Verify_{\pk^*}(\tildex\p{j} \circ j, \tsigma\p{j}) =1$, outputting $\tildex\p{j}_{i^*}$, where $i^*$ is the index of $x\p{j}$ that corresponds to $x_i$ if true; else $\DecHam$ outputs $\bot$.
Here it is crucial for \DecHam to use its computed value $j$, otherwise it is possible for an adversary to swap two blocks $C\p{j_1}$ and $C\p{j_2}$ where $x\p{j_1} \neq x\p{j_2}$, violating \cref{item:rldc-fool}. % of \cref{def:crldc}.
We give the formal decoding algorithm in \cref{alg:ham-decoder}.

% !TeX root = ../full-main.tex
%\begin{center}
	\begin{algorithm}
		\DontPrintSemicolon
		\caption{Hamming Decoder $\DecHam$}\label{alg:ham-decoder}
		\SetKwInOut{Input}{Input}
		\SetKwInOut{Output}{Output}
		\SetKwInOut{Oracle}{Oracle}
		\SetKwInOut{Hardcoded}{Hardcoded}
		\Input{Index $i \in [k]$.}
		\Oracle{Bitstring $\tC \in \bin{K}$.}
		\Output{A symbol $\tildex \in \zo$ or $\bot$.}
		\Hardcoded{Hamming code $\Cin$; $r(\cdot)$-length signature scheme $\Pi$; $\lambda \in \bbN$ in unary; and $\mu \in \bbN$.}
		
		Set $d = \ceil{k/r(\lambda)}$ and $\bl = n/d$ %\tcc*{Compute number of blocks $d$ and block length $\bl$.}
		
		Sample $j_1,\dotsc, j_\mu \getsr [d]$\label{line:decham-sample}
		
		Initialize $\pk^*, \pk_1,\dotsc, \pk_\mu = 0$
		
		\ForEach(\tcp*[f]{Recover \pk}){$\kappa \in [\mu]$}{
			$\tildem\p{j_\kappa} \gets \Decin( \tC[ (j_\kappa-1)\cdot \bl +1, j_\kappa \cdot \bl ] )$\label{line:decham-pk-decode}
			
			Parse $\tildem\p{j_\kappa}$ as $\tildex\p{j_\kappa}\circ \tsigma\p{j_\kappa} \circ \tpk\p{j_\kappa}\circ \tildej$\label{line:decham-pk-parse}
			
			Set $\pk_\kappa = \tpk\p{j_\kappa}$\label{line:decham-pk-set}
		}
		
		Set $\pk^* = \pred{majority}(\pk_1,\dotsc, \pk_\mu)$\label{line:decham-pk-majority}
		
		Compute $j \in [d]$ such that $(j-1) \cdot r(\lambda) < i \leq j\cdot r(\lambda)$
		
		$\tildem\p{j} \gets \Decin( \tC[(j-1)\cdot\bl + 1, j\cdot \bl])$\label{line:decham-decode} %\tcc*{Recover block where $x_i$ should reside.}
		
		Parse $\tildem\p{j}$ as $\tildex\p{j} \circ \tsigma\p{j} \circ \tpk\p{j} \circ \tildej$
		
		\lIf{$\Verify_{\pk^*}(\tildex\p{j}\circ j, \tsigma\p{j}) = 0$\label{line:decham-verify}}{\Return{$\bot$}}
		
		\Return{$\tildex\p{j}_{i^*}$ for $i^* = i-(j-1)\cdot r(\lambda)$.}
	\end{algorithm}
%\end{center}

%\paragraph*{Hamming \texorpdfstring{\crLDC}{crLDC} Proof Overview}
\paragraph*{\texorpdfstring{\cref{thm:ham-crldc}}{Theorem 1} Proof Overview}
We give a high-level overview of the proof of \cref{thm:ham-crldc}; full details can be found in \cref{sec:ham-proof}.
The main technical challenge of proving \cref{thm:ham-crldc} is showing that $\{(\EncHam, \DecHam)\}_\lambda$ satisfies \cref{item:rldc-fool,item:rldc-limit} of \cref{def:crldc}.
%We give a high level overview of the proof. % and present the formal proof in \cref{sec:ham-proof}.
Towards \cref{item:rldc-fool}, for any $x \in \bin{k}$, PPT adversary $\cA$, and $i \in [k]$, we analyze the probability that $\DecHam^{\tC}(i) \in \{x_i, \bot\}$ for $\tC \gets \cA(\EncHam(x))$ such that $\HAM(\tC, \EncHam(x)) \leq \rho$.
If $\DecHam^{\tC}(i) = x_i$, then $\tildex\p{j}_{i^*} = x_i$ and $\Verify_{\pk^*}(\tildex\p{j}\circ j, \tsigma\p{j}) = 1$.
Conditioning on $(\tildex\p{j}\circ j,\tsigma\p{j})$ not breaking the security of $\Pi$, successful verification implies $\tildex\p{j} = x\p{j}$ and $\tsigma\p{j} = \sigma\p{j}$, and verification succeeds whenever $\pk^* = \pk$.
By Chernoff, we can ensure that $\pk^* = \pk$ with high probability (depending on $\mu$) as long as more than half of the blocks $\tC\p{j}$ have less than $\rhoin$-fraction of Hamming errors, which is achieved by setting $\rho = c \rhoin$ for any $c \in (0,1/2)$. % ensure this is the case.
%Here we conditioned on $(\tildex\p{j} \circ j,\tsigma\p{i})$ not violating the security of $\Pi$.
Now by definition, $\tsigma\p{i}$ does not break security of $\Pi$ with probability at least $1-\eps_{\Pi}(\lambda)$ (\ie, either it is a correct signature or verification fails), where $\eps_{\Pi}(\lambda)$ is a negligible function for the security of $\Pi$.
As index $i$ was arbitrary here, we establish \cref{item:rldc-fool} via a union bound for $p = 1- \exp(-\mu(1/2-c)^2\big/2(1-c))$ and $\eps_{\F}(\lambda) = k \cdot \eps_{\Pi}(\lambda)$, where $\eps_{\F}(\lambda)$ is negligible since $k=\poly(\lambda)$.

Towards \cref{item:rldc-limit}, define $\cJ \defeq \{ j \colon \HAM(\tC\p{j}, C\p{j}) \leq \rhoin \}$. 
Then $\rho = c \rhoin$ for $c \in (0,1/2)$ implies $|\cJ| \geq d/2$.
Moreover, $\Decin(\tC\p{j}) = x\p{j} \circ \sigma\p{j} \circ \pk \circ j$ for any $j \in \cJ$. %; \ie, we can recover the correct message, signature, public key, and index.
Now for any $i \in [k]$, if $x_i$ lies in $x\p{j}$ for $j \in \cJ$, the probability $\DecHam$ outputs $x_i$ equals the probability that $\DecHam$ correctly recovers $\pk^* = \pk$.
We choose $\mu$ to ensure $\Pr[\pk^* = \pk] > 2/3$.
This along with $|\cJ| \geq d/2$ implies that $|\Good(\tC)| \geq k/2$ and $\delta = 1/2$.
By our choice of $\rho$, $|\cJ| \geq d/2$ and $|\Good(\tC)| \geq k/2$ holds \emph{for any} $\tC$ such that $\HAM(C,\tC) \leq \rho$; %, beyond those given by PPT adversary $\cA$.
Thus \cref{item:rldc-limit} holds with $\delta = 1/2$ and $\eps_{\Lim}(\lambda) \defeq 0$.
%See \cite{BloBlo23} for the complete proof.

\subsubsection{InsDel \texorpdfstring{\crLDC}{crLDC} Construction}\label{sec:tech-insdel-crldc}
%Like our Hamming code construction for \cref{thm:ham-crldc}, the main technical ingredients of \cref{thm:insdel-crldc} are a digital signature scheme and a suitable inner code.
Let $\Cin$ be the Schulman-Zuckerman InsDel code ($\SZ$ code) \cite{SchZuc99} and let $\Pi=(\Gen,\Sign,\Verify)$ be an $r$-length signature scheme.
%Our construction additionally requires the use of the noisy binary search algorithm due to Block \etal \cite{BBGKZ20}.
%In particular, simply replacing $\Cin$ in $\EncHam$ of \cref{alg:ham-encoder} with the \SZ code and using $\DecHam$ does not yield an interesting InsDel $\crLDC$.

\paragraph*{Challenges to Decoding InsDel Errors}
InsDel errors allow an adversary to insert symbols into and delete symbols from codewords, which introduces challenges that do not arise with Hamming errors. 
One may hope to simply use our family $\{(\EncHam,\DecHam)\}_\lambda$ with $\Cin = \SZ$ to achieve \cref{thm:insdel-crldc}; however this yields a trivial InsDel \crLDC.
Let $\EncHam'$ be identical to $\EncHam$ except we use the $\SZ$ code as the code $\Cin$.
For any $x$ and $C = \EncHam'(x)$, there is a simple attack to ensure that $\DecHam$ \emph{always} outputs $\bot$: 
the adversary simply transforms $C = C\p{1} \circ \dotsc \circ C\p{d}$ into $\tC = C_1\p{d} \circ C\p{1} \circ \cdots \circ C\p{d-1} \circ C_0\p{d}$, where $C_0\p{d}, C_1\p{d}$ are the first and second halves of $C\p{d}$, respectively.
%Now recovery of any block $j$ and the public key $\pk$ is impossible.
This implies that $\{(\EncHam',\DecHam)\}_\lambda$ is an InsDel $\crLDC$ with $\delta = 0$; \ie, it always outputs $\bot$ given a corrupt codeword. % and achieves the same $\delta$. 
However, we can handle this and more general attacks by leveraging the noisy binary search techniques of Block \etal \cite{BBGKZ20}. 
%We first describe our InsDel encoder.

\paragraph*{Noisy Binary Search Overview}
%The noisy binary search algorithm of Block \etal \cite{BBGKZ20} allows us to address the above challenges.
%To begin, our InsDel encoder is nearly identical to $\EncHam$ with the following changes:
%\begin{enumerate*}[label=(\arabic*)]
%	\item we use the $\SZ$ code as our inner code $\Cin$; and
%	\item we additionally pad each block encoded by the $\SZ$ code with a suitable number of $0$s before and after each codeword, before concatenating all of them together to yield the final codeword.
%\end{enumerate*}
%To differentiate from $\EncHam$, we let $c\p{j} = \SZ.\Enc(x\p{j}\circ \sigma\p{j}\circ \pk \circ j)$ denote the encoded signed message block and let $C\p{j}$ denote $c\p{j}$ padded with an appropriate number of $0$s before and after the codeword.
%This padding is necessary for the noisy binary search algorithm.
To understand the noisy binary search algorithm \NBS and its guarantees, we require the notion of \emph{$\gamma$-goodness}.
For $x,y \in \bin{*}$, we say that $y$ is \emph{$\gamma$-good with respect to $x$} if $\ED(x,y) \leq \gamma$.
The notion of $\gamma$-goodness (albeit under different formal definitions) has been useful in the design and analysis of depth-robust graphs, a combinatorial object used extensively in the study of memory-hard functions \cite{CCS:AlwBloHar17,EC:AlwBloPie18,ErdosGS75}, and it is essential to the success of \NBS.
Intuitively, for a fixed ``correct'' ordered list of strings $A = (a_1,\dotsc, a_n)$, each of length $\kappa$, and some other list of strings $B = (b_1,\dotsc, b_{n'})$, the algorithm \NBS finds any string $b_j$ that is $\gamma$-good with respect to the string $a_j$ for $j \in [n]$, except with negligible probability. 
In our context, each $b_j$ corresponds to blocks in the (possibly corrupt) codeword.
Given a tolerance parameter $\rho^* \in (0,1/2)$, the \NBS algorithm on input $j \in [n]$ outputs $b_j$ for at least $(1-\rho^*)$-fraction of the $\gamma$-good indices $j$, except with negligible probability.
Moreover, \NBS runs in time $\kappa \cdot \polylog(n')$, which is only possible by allowing \NBS to fail on a small fraction of $\gamma$-good indices, else the algorithm requires $\Omega(\kappa n')$ time.
%Searching for blocks $\tildec\p{j}$, \NBS utilizes a so-called block decoding algorithm \BlockDecode to find $\tildec\p{j}$ within corrupt codeword $\tC$. 
%For input $i \in [|\tC|]$, if $i$ falls within a ball around a $\gamma$-good block $\tildec\p{j}$ in $\tC$, then \BlockDecode outputs (the decoding of) $\tildec\p{j}$ except with probability at most $\gamma$.
%Our construction leverages both the \BlockDecode and \NBS algorithms along with a suitable digital signature scheme $\Pi$.
%Given these tools, we describe our modified encoder and decoder.

Suppose that $\tC \in \bin{n'}$ for some $n'$ is a corrupt codeword (from an appropriate encoding algorithm) and let $i \in [k]$. 
We use the \NBS algorithm to search $\tC$ for some (possibly corrupt) block $\tildem\p{j}$ which contains the desired symbol $x_i$.
So long as $\tC$ and $\tildem\p{j}$ are ``not too corrupt'', then \NBS outputs $\tildem\p{j}$ with high probability.
For searching, \NBS utilizes a block decoding algorithm \BlockDecode to find $\tildem\p{j}$ within $\tC$ with the following guarantee: for input $i$, if $i$ is within a (small) ball around a $\gamma$-good block $\tildem\p{j}$, then \BlockDecode outputs $\tildem\p{j}$ with probability at least $1-\gamma$.
Assuming $\tildem\p{j}$ is not too corrupt, we can parse it as $\tildex\p{j} \circ \tsigma\p{j} \circ \tpk \circ \tildej$ and use $\Verify$ to ensure that $\tildex\p{j}$ is correct. 
Note that both \NBS and \BlockDecode can fail and output $\bot$, which we take into consideration for our decoder.

\paragraph*{The InsDel Encoder}
We define a family of encoding algorithms $\{\EncIns\}_\lambda$. 
Let $\lambda \in \bbN$ be the security parameter and let $\alpha$ be a constant specified by the \NBS algorithm \cite{BBGKZ20}.
For any message $x \in \bin{k}$, encoder $\EncIns$ behaves identically to $\EncHam$ by partitioning $x$ into $d=\ceil{x/r(\lambda)}$ blocks, generating $(\pk,\sk) \gets \Gen(1^\lambda)$, computing $\sigma\p{j} \gets \Sign_{\sk}(x\p{j}\circ j)$, and computing $c\p{j} = \SZ.\Enc(x\p{j}\circ \sigma\p{j}\circ \pk \circ j)$ for every $j$. 
Next, the encoder computes buffered codewords $C\p{j} = 0^{\alpha r(\lambda)} \circ c\p{j} \circ 0^{\alpha r(\lambda)}$ for every $j$, where $0^{\alpha r(\lambda)}$ is a all-zero vector of length $\alpha r(\lambda)$ and ensure the success of the \NBS and \BlockDecode algorithms.
Finally, $\EncIns$ outputs $C = C\p{1} \circ \cdots \circ C\p{d}$.
Again, if $r(\lambda) \geq k$, it is more efficient to simply to encode $x$ using the \SZ code. % (or any other InsDel code with comparable rate and error-rate). 
We give the formal encoding algorithm in \cref{alg:encoder}; any differences between this encoder and the Hamming encoder (\cref{alg:ham-encoder}) are highlighted in blue and with an inline comment.

% !TeX root = ../full-main.tex
%\begin{center}
	\begin{algorithm}
		\DontPrintSemicolon
		\caption{InsDel Encoder $\EncIns$}\label{alg:encoder}
		\SetKwInOut{Input}{Input}
		\SetKwInOut{Output}{Output}
		\SetKwInOut{Hardcoded}{Hardcoded}
		\Input{A message $x \in \bin{k}$.}
		\Output{A codeword $C \in \bin{K}$.}
		\Hardcoded{The $\SZ$ InsDel code; $r(\cdot)$-length signature scheme $\Pi$; $\alpha \in \bbN$; and $\lambda \in \bbN$ in unary.}
		
		Sample $(\pk, \sk) \gets \Gen(1^\lambda)$\label{line:enc-key-sample}
		
		Set $d = \ceil{k / r(\lambda)}$
		
		Partition $x=x\p{1} \circ \cdots \circ x\p{d}$ where $x\p{j} \in \bin{r(\lambda)}$ for every $j \in [d]$ (padding last block as necessary)\label{line:enc-block} 
		
		\ForEach(\tcp*[f]{$\EncHam$ Diff}){$j \in [d]$}{
			$\sigma\p{j} \gets \Sign_{\sk}(x\p{j} \circ j)$
			
			${\color{blue}c\p{j} = \SZ.\Enc(x\p{j} \circ \sigma\p{j} \circ \pk \circ j)}$
			
			${\color{blue}C\p{j} = (0^{\alpha \cdot \pred{ml}} \circ c\p{j} \circ 0^{\alpha \cdot \pred{ml}})}$; ${\color{blue}\pred{ml} = \abs{x\p{j} \circ \sigma\p{j} \circ \pk \circ j}}$
%			\begin{gather}
%				\sigma\p{j} \gets \Sign_{\sk}(x\p{j} \circ j)\label{eq:sign}\\
%				{\color{blue}c\p{j} = \SZ.\Enc(x\p{j} \circ \sigma\p{j} \circ \pk \circ j)\label{eq:sz-enc}}\\
%				{\color{blue}C\p{j} = (0^{\alpha \cdot \pred{ml}} \circ c\p{j} \circ 0^{\alpha \cdot \pred{ml}})\label{eq:ez-buff}},
%			\end{gather}
			%\tcc*{$\pred{ml}$ is the same for every $j$.}
		}
		
		Define $C \defeq C\p{1} \circ \cdots \circ C\p{d} \in \bin{K}$.
		
		\Return{C}
	\end{algorithm}
%\end{center}

%the encoder $\EncIns$ first samples a public/private key pair $(\pk, \sk) \gets \Gen(1^\lambda)$.
%Next, the message $x$ is partitioned into $d = \ceil{k / r(\lambda)}$ blocks $x\p{1} \circ \cdots \circ x\p{d}$, each of size $r(\lambda)$ bits.
%For every block $j \in \{1,\dotsc, d\}$, the signature $\sigma\p{j} \gets \Sign_\sk(x\p{j} \circ j)$ is computed, and a new block $m\p{j} \defeq x\p{j} \circ \sigma\p{j} \circ \pk \circ j$ is formed.
%This block is then encoded as $c\p{j} = \SZ.\Enc_\lambda(m\p{j})$, and computes a zero-buffered block $C\p{j} = (0^{\alpha \cdot r(\lambda)} \circ c\p{j} \circ 0^{\alpha \cdot r(\lambda)})$, where $\alpha$ is a suitably chosen constant \cite{BBGKZ20}. % (see \cref{lem:nbs}).
%The final codeword $C$ is the concatenation of all buffered blocks; \ie, $C \defeq C\p{1} \circ \cdots \circ C\p{d}$.
%As with our Hamming encoder, whenever $r(\lambda) \geq k$, it is more efficient to just encode $x$ with the $\SZ$ code.

%We now formally present our encoding and decoding algorithms.

\paragraph*{The InsDel Decoder}
We define a family of decoding algorithms $\{\DecIns\}_\lambda$.
Let $\lambda \in \bbN$, $\mu \in \bbN$ be a parameter of our choice, $x \in \bin{k}$, $C = \EncIns(x)$, and $\tC \gets \cA(C)$ such that $\ED(C, \tC) \leq \rho$, where $\cA$ is a PPT adversary and $\tC \in \bin{K'}$ for some $K'$.
Then on input $i \in [k]$ and given oracle access to $\tC$, the decoder $\DecIns$ tries to recover $x_i$ via the same two-step process as $\DecHam$:
%Given oracle access to $\tC$ and an index $i \in [k]$ as input, the decoder attempts to recover bit $x_i$ of the original message $x$.
%Calling back to the encoding algorithm, bit $x_i$ resides in block $x\p{j}$ for $j$ satisfying $(j-1) \cdot r(\lambda) < i \leq j\cdot r(\lambda)$ (\cref{line:enc-block} of \cref{alg:encoder}).
%Thus recovering $x_i$ is done by recovering block $x\p{j}$ from corrupt codeword $\tC$.
%Assuming $\DecIns$ can recover $x\p{j}$, the decoder simply outputs $x_i = x\p{j}_{i^*}$ for $i^* = i - (j-1) \cdot r(\lambda)$; otherwise, our goal will be for $\DecIns$ to output $\bot$.
%Same line as the Hamming decoder, our insertion-deletion decoder proceeds in two steps: 
first, recover the public key $\pk$; and second, find block $j$ that is supposed to contain $x_i$ and use the recovered \pk to verify its integrity. %recover some (possibly corrupt) block $\tildem\p{j} = (\tildex\p{j} \circ \tsigma\p{j} \circ \tpk\p{j} \circ \tildej)$.
%The block $\tildem$ can then be parsed and verified using $\Verify_\pk$; that is, the decoder runs $\Verify_\pk( \tildex \circ j, \tsigma )$ and outputs $\tildex_{i^*}$ if $\Verify_\pk( \tildex \circ j, \tsigma ) = 1$ and outputs $\bot$ otherwise.
%Note here that
%\begin{enumerate*}[label=(\arabic*)]
%	\item the public key $\pk$ recovered in the first step is not necessarily equal to string $\tpk$ parsed from $\tildem$ in the second step; and
%	\item $\pk$ and desired block index $j$ are used to verify the string $\tildex$ rather than the parsed public key $\tpk$ and parsed index $\tildej$.
%\end{enumerate*}
Recovery of \pk is done similarly to \DecHam, except we leverage \BlockDecode to find blocks with potential public keys by first sampling $i_1,\dotsc, i_\mu \getsr [n]$ uniformly at random, then obtaining $\tildem\p{j_\kappa}\gets\BlockDecode(i_\kappa)$ for each $\kappa \in [\mu]$, where $j_\kappa \in [d]$.
Intuitively, in the InsDel setting we need to search for each block $j_\kappa$ whereas in the Hamming setting we knew exactly where each block was located.
If $\tildem\p{j_\kappa} = \bot$, then we set $\pk_{\kappa} = \bot$; else, we parse $\tildem\p{j_\kappa}$ as $\tildex\p{j_\kappa}\circ \tsigma\p{j_\kappa} \circ \tpk\p{j_\kappa} \circ \tildej$ and set $\pk_\kappa = \tpk\p{j_\kappa}$.
Finally, we let $\pk^* = \pred{majority}(\pk_1,\dotsc, \pk_\kappa)$.
Next, $\DecIns$ computes $j$ such that $x_i$ lies in $x\p{j}$ and %.
%Then $\DecIns$ 
obtains $\tildem\p{j} \gets \NBS(j)$.
If either $\tildem\p{j}$ or $\pk^*$ are $\bot$, then $\DecIns$ aborts and outputs $\bot$.
Otherwise, $\tildem\p{j}$ is parsed as $\tildex\p{j}\circ \tsigma\p{j} \circ \tpk\p{j} \circ \tildej$ and then $\DecIns$ checks if $\Verify_{\pk^*}(\tildex\p{j} \circ j, \tsigma\p{j}) = 1$.
If not, $\DecIns$ outputs $\bot$; otherwise, $\DecIns$ outputs $\tildex\p{j}_{i^*}$, where $x_i = x_{i^*}\p{j}$.
We give the formal decoding algorithm in \cref{alg:decoder}; any differences between this encoder and the Hamming decoder (\cref{alg:ham-decoder}) are highlighted in blue and with an inline comment.

% !TeX root = ../full-main.tex
%\begin{center}
	\begin{algorithm}
		\DontPrintSemicolon
		\caption{InsDel Decoder \DecIns}\label{alg:decoder}
		\SetKwInOut{Input}{Input}
		\SetKwInOut{Output}{Output}
		\SetKwInOut{Hardcoded}{Hardcoded}
		\SetKwInOut{Oracle}{Oracle}
		\SetKw{continue}{continue}
		\Input{An index $i \in [k]$.}
		\Oracle{Bitstring $\tC \in \bin{K'}$ for some $K'$.}
		\Output{A symbol $\tildex \in \zo$ or $\bot$.}
		\Hardcoded{$\BlockDecode$ and $\NBS$; $r(\cdot)$-length signature scheme $\Pi$; block length $K$; $\lambda \in \bbN$ in unary; and $\mu \in \bbN$.}
		
		Set $d = \ceil{k / r(\lambda)}$
		
		\textcolor{blue}{Sample $i_1, \dotsc, i_\mu \getsr [n]$} \tcp*{$\DecHam$ Diff}
		
		Initialize $\pk^*, \pk_1, \dotsc, \pk_\mu = 0$
		
		\ForEach(\tcp*[f]{$\DecHam$ Diff}){$\kappa \in [k]$}{\label{line:dec-for}
			\textcolor{blue}{$\tildem\p{j_{\kappa}} \gets \BlockDecode^{\tC}(i_{\kappa})$} 
			
			\textcolor{blue}{\lIf{$\tildem\p{j_{\kappa}} = \bot$}{st $\pk_\kappa = \bot$ and \continue}}
			
			Parse $\tildem\p{j_{\kappa}}$ as $\tildex\p{j_\kappa} \circ \tsigma\p{j_\kappa} \circ \tpk\p{j_\kappa} \circ \tildej$
			
			Set $\pk_\kappa = \tpk\p{j_\kappa}$\label{line:pk-set}
		}
	
		Set $\pk^* = \mathsf{majority}(\pk_1,\dotsc, \pk_\mu)$\label{line:pk-majority}
	
		Compute $j \in [d]$ such that $(j-1) \cdot r(\lambda) < i \leq j \cdot r(\lambda)$
		
		\textcolor{blue}{$\tildem\p{j} \gets \NBS^{C'}(j)$}\tcp*{$\DecHam$ Diff}
		
		\textcolor{blue}{\lIf({\color{black}\tcp*[f]{$\DecHam$ Diff}}){$\tildem\p{j} = \bot$ or $\pk^* = \bot$}{\Return{$\bot$}}}
		
		Parse $\tildem\p{j}$ as $\tildex\p{j} \circ \tsigma\p{j} \circ \tpk\p{j} \circ \tildej$\label{line:tm-parse}
		
		\lIf{$\Verify_{\pk^*}(\tildex\p{j} \circ j, \tsigma\p{j}) = 0$}{\Return{$\bot$}}

		\Return{$\tildex\p{j}_{i^*}$ for $i^* = i-(j-1)\cdot r(\lambda)$}		
	\end{algorithm}
\paragraph*{\texorpdfstring{\cref{thm:insdel-crldc}}{Theorem 2} Proof Overview}
%We give a high-level overview of the proof here and refer the reader to the full version of our work \cite{BloBlo23} for the complete proof.
We give a high-level overview of the proof of \cref{thm:insdel-crldc}; full details can be found in \cref{sec:main-proof}.
The main technical challenge of proving \cref{thm:insdel-crldc} is showing that $\{(\EncIns,\DecIns)\}_\lambda$ satisfies \cref{item:rldc-fool,item:rldc-limit} of \cref{def:crldc}.
%We give a high level overview of the proof here.
%We present the formal proof of security in \cref{sec:main-proof} and give a high level overview in the remainder of this section.
Towards \cref{item:rldc-fool}, for any $x \in \bin{k}$, PPT adversary $\cA$, and $i \in [k]$, we analyze the probability that $\DecIns^{\tC}(i) \in \{x_i,\bot\}$ for $\tC \gets \cA(\EncIns(x))$ such that $\ED(\tC,\EncIns(x)) \leq \rho$.
The proof proceeds identically to the Hamming \crLDC with the following key changes.
First, when recovering the public key, we must consider the success probability of $\BlockDecode$ in our Chernoff bound to ensure  $\pk^*=\pk$ with high probability.
Second, we must consider the success probability of \NBS when recovering block $j$ that contains $x_i$. 
%Thus we additionally take into consideration the probability of success of our noisy binary search algorithm.
Careful selection of parameters and the guarantees of \BlockDecode and \NBS ensures \cref{item:rldc-fool} holds.
%As we show in \cref{sec:main-proof}, the guarantees of both \BlockDecode and \NBS allow us to ensure that \cref{item:rldc-fool} is satisfied via careful selection of parameters.

Towards \cref{item:rldc-limit}, the proof again proceeds nearly identically to the Hamming \crLDC case, except again we must take into consideration the recovery of public key $\pk^* = \pk$ via \BlockDecode and the recovery of block $j$ with \NBS.
The noisy binary search algorithm recovers any block that is $\gamma$-good with probability greater than $2/3$ (under suitable parameter choices), except with negligible probability.
This directly translates to the fraction $\delta = 1-\Theta(\rho)$ of indices we are able to decode from for \cref{item:rldc-limit}.
\section{Preliminaries}\label{sec:prelims}
We let $\lambda \in \bbN$ denote the security parameter.
For any $n \in \bbZ^+$ we let $[n] \defeq \{1,\dotsc, n\}$.
A function $\mu \colon \bbN \rightarrow \bbR_{\geq 0}$ is said to be negligible if $\mu(n) = o(1/|p(n)|)$ for any fixed non-zero polynomial $p$.
We write PPT to denote probabilistic polynomial time.
For any randomized algorithm $A$, we let $y \gets A(x)$ denote the process of obtaining output $y$ from algorithm $A$ on input $x$.
For a finite set $S$, we let $s \getsr S$ denote the process of sampling elements of $S$ uniformly at random.

We use ``$\circ$'' to denote the string concatenation operation.
For bitstring $x \in \bin{*}$, we use subscripts to denote individual bits of $x$; \eg, $x_i \in \zo$ is the $i$-th bit of $x$.
Additionally, we often partition a bitstring $x \in \bin{k}$ into some number of blocks $d$ of equal length; \eg, $x = (x\p{1} \circ \cdots \circ x\p{d})$ where $x\p{j} \in \bin{k/d}$ for all $j \in [d]$.
We also utilize array notation when convenient: \eg, for bitstring $x \in \bin{k}$ and indices $a, b \in [k]$ such that $a\leq b$, we let $x[a,b] \defeq (x_a \circ x_{a+1} \circ \cdots \circ x_{b})$.
For two strings $x \in \bin{k}$ and $y \in \bin{*}$, we define $\ED(x,y)$ as the minimum number of insertions and deletions required to transform $x$ into $y$ (or vice versa), normalized by $2k$.

In this work, we utilize \emph{digital signatures} and give the formal definition below.
\begin{definition}[Digital Signature Scheme]\label{def:sig}
%	Let $r(\cdot)$ be a polynomial.
%	An $(l, r, t, \epsilon)$-{\em (digital) signature scheme} is a tuple of PPT algorithms $\Pi = (\Gen, \Sign, \Verify)$ satisfying the following properties:
%	Let $r(\cdot)$, $t(\cdot)$, and $\epsilon(\cdot)$ be functions.
	A \emph{digital signature scheme} with signatures of length $r(\cdot)$ is a tuple of PPT algorithms $\Pi = (\Gen, \Sign, \Verify)$ satisfying the following properties:
	\begin{enumerate}
		\item $\Gen$ is the key generation algorithm and takes as input a security parameter $1^\lambda$ and outputs a pair of keys $(\pk,\sk) \in \bin{*} \times \bin{*}$, where $\pk$ is the public key and $\sk$ is the secret/private key.
		It is assumed that $|\pk|, |\sk| \geq \lambda$ are polynomial in $\lambda$, and that $\lambda$ can be efficiently determined from $\pk$ or $\sk$. 
%		For ease of presentation, we let $\pk(\lambda) \defeq |\pk|$ and $\sk(\lambda) \defeq |\sk|$ for $(\pk,\sk) \in \supp(\Gen(1^\lambda))$.
		Without loss of generality, we assume that $|\pk| = r(\lambda)$.
		
		\item $\Sign$ is the signing algorithm and takes as input secret key $\sk$ and message $m \in \bin{*}$ of arbitrary length and outputs a signature $\sigma \gets \Sign_\sk(m) \in \bin{r(\lambda)}$, where $\Sign$ runs in time $\poly(|\sk|, |m|)$.
		
		\item $\Verify$ is the deterministic verification algorithm that takes as input public key $\pk$, message $m$, and signature $\sigma$, and outputs a bit $b = \Verify_\pk(m,\sigma) \in \zo$.
		Moreover $\Verify$ run in time $\poly(r(\lambda), |m|)$.
	\end{enumerate}
	Additionally, we require the following two properties:
	\begin{enumerate}
		\item {\bfseries Completeness:} For all messages $m \in \bin{*}$ and all $(\pk, \sk) \in \supp(\Gen(1^\lambda))$, we have\footnote{Other definitions (\eg, \cite{KatzLin}) require this condition to hold except with negligible probability over $(\pk,\sk) \gets \Gen(1^\lambda)$.}
		\begin{align*}
			\Verify_\pk(m,\Sign_{\sk}(m)) = 1.
		\end{align*}
%		\alex{make a note of the difference between this definition and more standard ones.}
		
		\item {\bfseries Security:} For all PPT adversaries $\cA$, there exists a negligible function $\eps_{\Pi}(\cdot)$ such that for all $\lambda \in \bbN$ we have
		\begin{align*}
			\Pr[ \Sigforge_{\Pi, \cA}(\lambda) = 1 ] \leq \eps_{\Pi}(\lambda),
		\end{align*}
		where the experiment $\Sigforge$ is defined in \cref{fig:sig-forge}. 		
%		Define binary predicate $\Forge$ such that $\Forge(m, \sigma, \pk, Q) = 1$ iff (1) $\Verify_\pk(m,\sigma) = 1$; and (2) $m \not\in Q$.
%		We require that for all $\lambda \in \Zplus$ and all adversaries $\cA$ running in time $t(\lambda)$, the following holds:
%		\begin{align*}
%			\Pr[\Forge((\tilde{m}, \tilde{\sigma}), \pk, Q) = 1] \leq \epsilon(\lambda),
%		\end{align*} 
%		where the probability is taken over $(\pk, \sk) \gets \Gen(1^\lambda)$ and $(\tilde{m}, \tilde{\sigma}) \gets \cA^{\Sign_{\sk}(\cdot)}(\pk)$.
%		Here, the adversary is given oracle access to $\Sign_{\sk}(\cdot)$, and $Q$ is the set of all queries $\cA$ makes to its oracle before outputting $(\tilde{m}, \tilde{\sigma})$.
%		\alex{replace with game definition}
	\end{enumerate}
\end{definition}

% !TeX root = ../full-main.tex
\begin{figure}[tp]
\begin{boxedalgo}
	\begin{center}
		\ul{Signature experiment $\Sigforge_{\Pi, \cA}( \lambda )$}
	\end{center}
	\begin{enumerate}
		\item Obtain $(\pk, \sk) \gets \Gen(1^\lambda)$.
		
		\item Adversary $\cA$ is given $\pk$ as input and given oracle access to the algorithm $\Sign_{\sk}(\cdot)$.
		We denote this as $\cA^{\Sign_{\sk}(\cdot)}(\pk)$.
		Let $Q$ denote the set of all queries made by $\cA$ to its oracle.
		The adversary outputs pair $(m,\sigma)$.
		
		\item The adversary $\cA$ wins if and only if
		\begin{enumerate}
			\item $\Verify_{\pk^*}(m,\sigma) = 1$; and
			
			\item $m \not\in Q$.
		\end{enumerate}
		If $\cA$ wins, we define $\Sigforge_{\Pi, \cA}( \lambda ) \defeq 1$; otherwise we define $\Sigforge_{\Pi, \cA}( \lambda )\defeq 0$.
	\end{enumerate}
\end{boxedalgo}
\caption{Description of the signature forgery experiment $\Sigforge$.}\label{fig:sig-forge}
\end{figure}

For completeness, we also include the classical definition of an error-correcting code.
\begin{definition}
	A {\em coding scheme} $C[K,k,q_1,q_2] = (\Enc, \Dec)$ is a pair of encoding and decoding algorithms $\Enc\colon \Sigma_1^k \rightarrow \Sigma_2^K$ and $\Dec \colon \Sigma_2^* \rightarrow \Sigma_1^k$, where $|\Sigma_i| = q_i$.
	A code $C[K,k,q_1,q_2]$ is a {\em $(\rho,\dist)$ error-correcting code} for $\rho \in [0,1]$ and fractional distance $\dist$ if for all $x \in \Sigma_1^k$ and $y \in \Sigma_2^*$ such that $\dist(\Enc(x),y) \leq \rho$, we have that $\Dec(y) = x$.
	Here, $\rho$ is the {\em error rate} of $C$.
	If $q_1 = q_2$, we simply denote this by $C[K,k,q_1]$.
	If $\dist = \HAM$, then $C$ is a {\em Hamming code}; if $\dist = \ED$, then $C$ is an {\em insertion-deletion code} (InsDel code).
\end{definition}

Key to our construction is the so-called ``\SZ-code'', which is an insertion-deletion error-correcting code with constant rate and constant error-tolerance.
\begin{lemma}[SZ-code \cite{SchZuc99}]\label{lem:sz-code}
	There exists positive constants $\betasz \leq 1$ and $\rhosz > 0$ such that for large enough values of $t \in \Zplus$, there exists a $(\rhosz, \ED)$ code $\SZ(t) = (\SZ.\Enc,\SZ.\Dec)$ where $\SZ.\Enc \colon \bin{t}\rightarrow \bin{(1/\betasz) \cdot t}$ and $\SZ.\Dec \colon \bin{*} \rightarrow \bin{t} \union \{\bot\}$ with the following properties:
	\begin{enumerate}
		\item $\SZ.\Enc$ and $\SZ.\Dec$ run in time $\poly(t)$; and
		
		\item For all $x \in \bin{t}$, every interval of length $2\log(t)$ in $\Enc(x)$ has fractional Hamming weight $\geq 2/5$.
	\end{enumerate}
	We omit the parameter $t$ when it is clear from context.
\end{lemma}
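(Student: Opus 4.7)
The plan is to appeal to the Schulman--Zuckerman construction \cite{SchZuc99}, which essentially provides the desired code; the lemma as stated is either their main theorem verbatim or a minor repackaging thereof, so my task reduces to recalling the construction and confirming that property (2) (the window-balance condition) can be read off of it. First, I would describe the building blocks: an outer code of constant rate over an alphabet of size $\poly(t)$ that corrects a constant fraction of symbol errors (Reed--Solomon suffices), composed with an inner binary code that maps each outer symbol to a block of length $\Theta(\log t)$ drawn from a small, explicitly constructed set of ``pseudorandom-looking'' binary strings.

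Second, I would describe the encoder and decoder in order to check property (1). Encoding is the straightforward composition: apply the outer code, then replace each outer symbol by the corresponding inner block; both steps are clearly $\poly(t)$. Decoding, following Schulman--Zuckerman, first uses the local pseudorandomness of the inner blocks to identify candidate block boundaries (synchronization) in the received word under InsDel corruptions, then inner-decodes each window to a candidate outer symbol and a weight, and finally invokes a weighted/list decoder for the outer Reed--Solomon code. Each stage runs in $\poly(t)$, so the total decoding time is polynomial, and the overall scheme corrects a constant $\rhosz > 0$ fraction of insertions/deletions at some constant rate $\betasz \leq 1$.

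Third, I would verify the window-balance property in (2). The inner codewords in the SZ construction are chosen from a family of binary strings of length $\Theta(\log t)$ whose every sub-interval of length $2\log(t)$ has fractional Hamming weight at least $2/5$; such strings exist in abundance by a routine Chernoff argument (a uniformly random length-$\Theta(\log t)$ string satisfies the property on all $\poly(t)$ sub-intervals with constant probability), and since each inner block has only $\poly(t)$ candidate strings, an explicit family large enough to encode the outer alphabet can be obtained by brute-force enumeration in $\poly(t)$ time. Concatenating any sequence of such inner codewords preserves the window-balance property for intervals of length $2\log(t)$ that cross block boundaries as well, since any such interval overlaps at most two inner blocks and both contribute a $2/5$-balanced fraction.

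The only real obstacle is the bookkeeping around boundary-crossing intervals and the exact constant $2/5$: one must choose the inner-code density carefully so that the bound holds not just within a single block but also across the seams between consecutive inner blocks. This is essentially a density-vs-concatenation tradeoff that is handled in the SZ paper and that, in any case, can be recovered by slightly tightening the per-block balance constant (e.g.\ by requiring fractional weight at least $1/2 - \eta$ per block for a sufficiently small $\eta$) before taking the concatenation.
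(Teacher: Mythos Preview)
The paper does not prove this lemma at all: it is stated in the Preliminaries as a black-box result imported from Schulman--Zuckerman \cite{SchZuc99}, with no accompanying argument. Your proposal, which opens by saying the task ``reduces to recalling the construction,'' is therefore already aligned with the paper's treatment; everything after your first sentence is extra detail the paper does not supply.

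That said, since you chose to sketch the construction, one point deserves tightening. Your argument for property~(2) across block boundaries is not quite complete: you assert that a length-$2\log t$ window crossing a seam ``overlaps at most two inner blocks and both contribute a $2/5$-balanced fraction,'' but the portion of the window lying in each block is a \emph{prefix} or \emph{suffix} of that block, potentially much shorter than $2\log t$, and your stated per-block guarantee only covers sub-intervals of length $2\log t$. To make the seam argument go through you need the stronger property that \emph{every} sub-interval (of any length, or at least down to some threshold) of an inner codeword is balanced---which is in fact what the greedy/probabilistic construction in \cite{SchZuc99} delivers. You gesture at this in your final paragraph (``tightening the per-block balance constant''), but the real fix is strengthening the per-block \emph{hypothesis}, not just the constant. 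This is a bookkeeping issue rather than a conceptual gap, and the paper sidesteps it entirely by citing the result.
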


% !TeX root = ../full-main.tex
\section{Proof of \texorpdfstring{\cref{thm:ham-crldc}}{Theorem 1.2}}\label{sec:ham-proof}
We dedicate this section to showing that $\cC_{\pred{H}, \lambda} = \{(\EncHam, \DecHam)\}_{\lambda \in \bbN}$ satisfies \cref{thm:ham-crldc}, where $\EncHam$ and $\DecHam$ are defined in \cref{alg:ham-encoder,alg:ham-decoder}, respectively.
We recall the theorem below.

\hamCRLDC*

\begin{proof}
First note that $x\p{j}, \sigma\p{j} \in \bin{r(\lambda)}$, and $j \in \bin{\log(d)}$ by construction.
Furthermore, without loss of generality we assume that $\pk \in \bin{\lambda}$ and $r(\lambda) \geq \lambda$.
Note we also have $\log(d) = O(\log(k))$.
Therefore if rate of $\Cin$ is $\betain$, then block length $K$ of our code is exactly
\begin{align*}
	K&=d \cdot (1/\betain) \cdot ( 2r(\lambda) + \lambda + \log(d) )\\ &=(\ceil{k/r(\lambda)}) \cdot (1/\betain) \cdot (2r(\lambda) + \lambda + \log(\ceil{k/r(\lambda)}))\\
	&=O((1/\betain) \cdot k \cdot (3 + \log(k)/r(\lambda) ))\\
	&=O((1/\betain)\cdot k(1 + \log(k)/r(\lambda)))
\end{align*}
%$d \cdot (1/\betain) \cdot ( 2r(\lambda) + \lambda + \log(d) ) = (\ceil{k/r(\lambda)}) \cdot (1/\betain) \cdot (2r(\lambda) + \lambda + \log(\ceil{k/r(\lambda)})) = O((1/\betain) \cdot k \cdot (3 + \log(k)/r(\lambda) )) = O((1/\betain)\cdot k(1 + \log(k)/r(\lambda)))$ 
whenever $k \geq r(\lambda)$.
When $r(\lambda) > k$, then we pad the input message $x$ with $k-r(\lambda)$ number of $0$'s at the end to get a string of length $r(\lambda)$, which gives a single codeword block of length
\begin{align*}
	K &= (1/\betain) \cdot (2r(\lambda) + \lambda + 1)\\
	&= O((1/\betain)\cdot r(\lambda)).
\end{align*}
Thus we have our specified block length 
\begin{align*}
	K = O((1/\betain)\cdot \max\{k\cdot (1 + \log(k)/r(\lambda)), r(\lambda)\}).
\end{align*}
For the locality $\ell$, by construction we know that any block $j$ has length
\begin{align*}
	&(1/\betain) \cdot (2r(\lambda) + \lambda + \log(d))\\
	&=(1/\betain) \cdot (2r(\lambda) + \lambda + \log(k)-\log(r(\lambda)))\\
	&=O((1/\betain)\cdot (r(\lambda) + \log(k))).
\end{align*} %$(1/\betain) \cdot (2r(\lambda) + \lambda + \log(d)) = (1/\betain) \cdot (2r(\lambda) + \lambda + \log(k)-\log(r(\lambda))) = O((1/\betain)\cdot (r(\lambda) + \log(k)))$.
Since we decode $\mu+1$ blocks, our overall locality is 
\begin{align*}
	\ell = O((\mu/\betain)\cdot (r(\lambda) + \log(k))).
\end{align*}
Moreover, $\DecHam$ makes $O((\mu/\betain)\cdot (r(\lambda) + \log(k)))$ queries to its oracle on any input $i$, satisfying \cref{item:rldc-query} of \cref{def:crldc}.
%Note that $\mu \in \bbN$ is a parameter of our choosing used to ensure recovery of $\pk$ correctly with high probability via majority sampling.

For item \cref{item:rldc-correct}, assume that $\DecHam$ is given oracle access to $\tC = C$ for some $C = \EncHam(x)$ and $x \in \bin{k}$.
We then analyze the probability that $\DecHam^{C}(i) = x_i$ for any $i \in [k]$.
First, since $C$ is a correct codeword, recovery of the public key succeeds with probability $1$.
That is, for every $\kappa \in [\mu]$, the string $\tildem\p{j_{\kappa}} \gets \Decin\left(C[(j_\kappa-1)\cdot \bl+1, j_\kappa \cdot \bl]\right)$ recovered in \cref{line:decham-pk-decode} is equal to $x\p{j_\kappa} \circ \sigma\p{j_\kappa} \circ \pk \circ j_\kappa$ (\ie, everything is correct).
Here, $\bl = n/d$ is the length of each block $C\p{j}$ in $C = C\p{1} \circ \cdots \circ C\p{d}$.
Thus $\pk^* = \pk$ with probability $1$.
Now fixing $j \in [d]$ to be the block such that bit $x_i$ resides in $x\p{j}$, by the above discussion we know that $\tildem\p{j}$ recovered in \cref{line:decham-decode} is correct and is parsed as $x\p{j} \circ \sigma\p{j} \circ \pk \circ j$.
This along with the fact that $\pk^* = \pk$ implies that \cref{line:decham-verify} is true with probability $0$ (\ie, $\Verify_{\pk^*}(x\p{j}\circ j, \sigma\p{j}) = 1$ with probability $1$).
This implies that $\DecHam^{C}(i) = x_i$ with probability $1$. %, as desired.

For error-rate, let $\rhoin \in (0,1)$ be the error-rate of $\Cin$. 
Intuitively, we want to set our final error-rate $\rho$ such that for any $\rho$-fraction of corruptions, less than half of the $d$ blocks contain more than $\rhoin$ faction of errors each.
Equivalently, more than half of the $d$ blocks contain at most $\rhoin$-fraction of errors.
Let $\bl$ denote the length of a block in the codeword $C$.
Then $\Decin$ cannot correctly decode any block $j \in [d]$ if it has at least $\rhoin \bl + 1$ errors.
Let $\tJ \subset [d]$ be the set of indices such that block $j \in \tJ$ has at least $\rhoin \bl + 1$ Hamming errors.
Then we have 
\begin{align*}
	|\tJ|\cdot (\rhoin \bl +1) \leq \rho\cdot d \cdot \bl ~(= \rho K),	
\end{align*}
which implies that 
\begin{align*}
	|\tJ| &\leq (\rho\cdot d\cdot \bl)/(\rhoin \bl + 1)\\
	&< (\rho\cdot d\cdot \bl)/(\rhoin \bl)\\ 
	&= \rho d / \rhoin.
\end{align*} 
We want $|\tJ| < d/2$ to ensure that more than half of the blocks contain at most $\rhoin \bl$ errors.
Setting $(d\rho)/\rhoin < d/2$ implies $\rho< \rhoin/2$.
Thus we set $\rho = c \cdot \rhoin = \Theta(\rhoin)$ for any constant $c  \in (0,1/2)$.

Next we analyze the success probability $p$ and \cref{item:rldc-fool}.
For predicate $\Fool$, fix message $x$ and let $C = \Enc_{\pred{H},\lambda}(x)$.
We want to show that for all PPT $\cA$ there exists a negligible function $\eps_{\F}$ such that for all $\lambda$ and $x \in \bin{k}$ we have 
\begin{align*}
	\Pr[\Fool(\tC, \rho, p, x, C, \lambda) = 1] \leq \eps_{\F}(\lambda)
\end{align*} 
for $\tC \gets \cA(C)$.
Equivalently, we want to show that %$\Pr[ \exists i \in [k] \colon \Pr[\Dec_{\pred{H},\lambda}^{\tC}(i) \in \{x_i,\bot\}] < p ] \leq \eps_{\F}(\lambda)$
\begin{align*}
	\Pr[ \exists i \in [k] \colon \Pr[\Dec_{\pred{H},\lambda}^{\tC}(i) \in \{x_i,\bot\}] < p ] \leq \eps_{\F}(\lambda)
\end{align*}
for $\tC\gets\cA(C)$ such that $\HAM(C,\tC) \leq \rho$.
Restated again, we want to show %$\Pr[ \forall i \in [k] \colon \Pr[\Dec_{\pred{H},\lambda}^{\tC}(i) \in \{x_i,\bot\}] \geq p ] \geq 1- \eps_{\F}(\lambda)$.
\begin{align*}
	\Pr[ \forall i \in [k] \colon \Pr[\Dec_{\pred{H},\lambda}^{\tC}(i) \in \{x_i,\bot\}] \geq p ] \geq 1- \eps_{\F}(\lambda).
\end{align*}

We analyze the probability that $\Dec_{\pred{H},\lambda}^{\tC}(i) \in \{x_i,\bot\}$, and let $E_i$ denote this event, and also let $\Forge_i$ denote the event that $\cA$ produces a signature forgery $(\tildex\p{j} \circ j, \tsigma\p{j})$, where $j \in [d]$ such that $(j-1)\cdot r(\lambda) < i \leq j\cdot r(\lambda)$ and $\tildex\p{j}$ and $\tsigma\p{j}$ are recovered by $\DecHam$. % \cref{line:decham-decode} of \cref{alg:ham-decoder}.
Then we have that $\Pr[E_i] = \Pr[E_i | \overline{\Forge_i}]$.
%\begin{align*}
%	\Pr[E_i] &= \Pr[E_i ~|~ \overline{\Forge_i}].
%\end{align*}
Note that the decoder can never output $\bot$ and $x_i$ simultaneously.
Letting $E_i(x)$ be the event that $\DecHam^{\tC}(i) = x$ for symbol $x$, we have that %$\Pr[E_i | \overline{\Forge_i}] = \Pr[E_i(x_i) | \overline{\Forge_i}] + \Pr[E_i(\bot) | \overline{\Forge_i}]$.
\begin{align*}
	\Pr[E_i ~|~ \overline{\Forge_i}] = \Pr[E_i(x_i) ~|~ \overline{\Forge_i}] + \Pr[E_i(\bot) ~|~ \overline{\Forge_i}].
\end{align*}
Analyzing $\Pr[E_i(x_i) | \overline{\Forge_i}]$, since we assume that $(\tildex\p{j}\circ j, \tsigma\p{j})$ is not a forgery it must be the case that
\begin{enumerate}
	\item $\tildex\p{j} = x\p{j}$ and $\tsigma\p{j} = \sigma\p{j}$; and
	\item $\Verify_{\pk^*}(\tildex\p{j} \circ j, \tsigma\p{j}) = 1$.
\end{enumerate}
Now this verification only succeeds if $\pk^* = \pk$, which implies %$\Pr[E_i(x_i) | \overline{\Forge_i}] = \Pr[\pk^* = \pk]$.
\begin{align*}
	\Pr[E_i(x_i) ~|~ \overline{\Forge_i}] = \Pr[\pk^* = \pk].
\end{align*}

Next we analyze $\Pr[E_i(\bot) | \overline{\Forge_i}]$.
Note that $\DecHam$ only outputs $\bot$ if $\Verify_{\pk^*}(\tildex\p{j}\circ j, \tsigma\p{j}) = 0$.
Let $\cE_1$ denote the event $\overline{\Forge_i}$, $\cE_2$ denote the event $\tildex\p{j} \neq x\p{j}$, $\cE_3$ denote the event $\tsigma\p{j} \neq \sigma\p{j}$, and $\cE_4$ denote the event $\pk^* = \pk$.
Then we lower bound the above probability as %$\Pr[E_i(\bot) | \overline{\Forge_i}] \geq \Pr[ \Verify_{\pk^*}(\tildex\p{j}\circ j, \tsigma\p{j}) = 0 |  \overline{\Forge_i} \band (\tildex\p{j} \neq x\p{j} \bor \tsigma\p{j} \neq \sigma\p{j}) \band \pk^* = \pk] \cdot \Pr[\pk^* = \pk]$.
\begin{align*}
	&\Pr[E_i(\bot) | \overline{\Forge_i}] \geq\\
	&\Pr\left[ \Verify_{\pk^*}(\tildex\p{j}\circ j, \tsigma\p{j}) = 0 \bigg| \begin{array}{l}
		\cE_1 \band \\
		(\cE_2 \bor \cE_3) \band \\
		\cE_4
	\end{array}\right] \cdot \Pr[\cE_4].
\end{align*}
Since we assume $\overline{\Forge_i}$ is true (\ie, $\cE_1$ is true), we know that 
\begin{align*}
	\Pr\left[ \Verify_{\pk^*}(\tildex\p{j}\circ j, \tsigma\p{j}) = 0 \bigg| \begin{array}{l}
		\cE_1 \band \\
		(\cE_2 \bor \cE_3) \band \\
		\cE_4
	\end{array}\right] = 1.
\end{align*}
%$\Pr[ \Verify_{\pk^*}(\tildex\p{j}\circ j, \tsigma\p{j}) = 0 |  \overline{\Forge_i} \band (\tildex\p{j} \neq x\p{j} \bor \tsigma\p{j} \neq \sigma\p{j}) \band \pk^* = \pk] = 1$.
This implies %$\Pr[E_i(\bot) | \overline{\Forge_i}] \geq \Pr[\pk^* = \pk]$,
\begin{align*}
	\Pr[E_i(\bot) ~|~ \cE_1] \geq \Pr[\cE_4] = \Pr[\pk^* = \pk],
\end{align*}
which implies %$\Pr[E_i | \overline{\Forge_i}] \geq 2 \cdot \Pr[\pk^* = \pk] \geq \Pr[\pk^* = \pk]$.
\begin{align*}
	\Pr[E_i ~|~ \overline{\Forge_i}] \geq 2 \cdot \Pr[\pk^* = \pk] \geq \Pr[\pk^* = \pk].
\end{align*}

We now analyze $\Pr[\pk^* = \pk]$.
Let $\pk$ be the public key sampled by $\Enc_{\pred{H},\lambda}(x)$ to generate $C$.
By our parameter choice $c \in (0,1/2)$, at least $(1-c) d > d/2$ blocks of $C'$ contain at most $\rhoin$-fraction of Hamming errors.
Let $\bl = K/d$ denote the length of any block of $C$ and let $\cJ$ denote the set of blocks with at most $\rhoin$-fraction of Hamming errors.
Then 
\begin{align*}
	x\p{j} \circ \sigma\p{j}\circ \pk \circ j = \Decin(\tC[(j-1)\cdot\bl+1, j\cdot\bl])
\end{align*} 
for any $j \in \cJ$; \ie, it is a correct decoding since corrupt codeword $\tC\p{j} = \tC[(j-1)\cdot \bl+1, j\cdot \bl]$ is within the unique decoding radius of $\Cin$.
Define random variable $X_\kappa$ for $\kappa \in [\mu]$ as %$X_\kappa = 1$ if $x\p{j}\circ\sigma\p{j}\circ\pk\circ j = \Decin( \tC[(j-1) \bl+1, j \bl] )$ and $X_\kappa = 0$ otherwise.
\begin{align*}
	X_\kappa \defeq \begin{cases}
			1 & x\p{j}\circ\sigma\p{j}\circ\pk\circ j = \Decin( \tC[(j-1) \bl+1, j \bl] )  \\
			0 & \text{otherwise}
		\end{cases}.
\end{align*}
%where $C\p{j_\kappa}$ is the $j_\kappa$-th block of uncorrupt codeword $C$.
Then %$\Pr[X_\kappa = 1] = \Pr[x\p{j}\circ\sigma\p{j}\circ\pk\circ j = \Decin( \tC[(j-1)\cdot \bl+1, j \cdot \bl] = \Pr[ j_\kappa \in \cJ ] )] \geq (1-c) > 1/2$.
\begin{align*}
	&\Pr[X_\kappa = 1] \\
	&= \Pr\left[x\p{j}\circ\sigma\p{j}\circ\pk\circ j = \Decin( \tC[(j-1)\cdot \bl+1, j \cdot \bl] )\right]\\
	&= \Pr[ j_\kappa \in \cJ_{\Good} ]\\
	&\geq (1-c) > 1/2.
\end{align*}
Let $q = (1-c) > 1/2$.
%Choose parameter $\epsilon \in (0,1/2)$ and set $\mu \defeq \log(1/\epsilon) \cdot (2q)/(q-1/2)^2$.
By Chernoff bound, we have that %$\Pr[ \sum_{\kappa \in [\mu]} X_\kappa > \mu/2 ] \geq 1-\exp(-\mu\cdot(q-1/2)^2/(2q))$.
\begin{align*}
	\Pr\left[ \sum_{\kappa \in [\mu]} X_\kappa > \frac{\mu}{2} \right] \geq 1-\exp(-\mu\cdot(q-1/2)^2/(2q)).
\end{align*}
This implies that with probability at least 
\begin{align*}
	p \defeq 1-\exp(-\mu\cdot(q-1/2)^2/(2q)),
\end{align*} 
we have $\pk^* = \pk$.
Thus %$\Pr[E_i | \overline{\Forge_i}] \geq p$.
\begin{align*}
	\Pr[E_i ~|~ \overline{\Forge_i}] \geq p.
\end{align*}

Throughout the above analysis, we only assumed that no forgery occurred.
For any arbitrary $i$, the probability that $\overline{\Forge_i}$ occurs is at least $1-\eps_{\Pi}(\lambda)$, where $\eps_{\Pi}(\cdot)$ is a negligible function that depends on the security of the digital signature scheme $\Pi$.
Thus by union bound over all $i$, we have %$\Pr[ \exists i \in [k] \colon \Pr[\Dec_{\pred{H},\lambda}^{\tC}(i) \in \{x_i,\bot\}] < p ] \geq k \cdot \eps_{\Pi}(\lambda)$. 
\begin{align*}
	\Pr[ \exists i \in [k] \colon \Pr[\Dec_{\pred{H},\lambda}^{\tC}(i) \in \{x_i,\bot\}] < p ] \geq k \cdot \eps_{\Pi}(\lambda).
\end{align*}
Setting $\eps_{\F}(\lambda) \defeq k \cdot \eps_{\Pi}(\lambda)$, we have that $\eps_{\F}(\lambda)$ is negligible in $\lambda$ since $k(\lambda)$ is a polynomial, showing \cref{item:rldc-fool}.
%\alex{this is no longer correct. If $k$ is a parameter of the family, this can be non-negligible. If $k$ is a polynomial in $\lambda$, then this holds. Maybe we should just fix $k$ to be a polynomial in $\lambda$ to make things simple.}
%\alex{Do we need $k$ to be a polynomial in $\lambda$ here? I think we don't since $k \in \bbN$ is just some sufficiently large, fixed integer.}

Finally, we analyze $\delta$ and \cref{item:rldc-limit}. % of \cref{def:crldc}.
By our choice of $\rho = c \cdot \rhoin$, we know that for \emph{any} $\tC \in \bin{n}$ such that $\HAM(\tC, C) \leq \rho$, at least $(1-c) \cdot d$ blocks of $\tC$ contain at most $\rhoin$-fraction of Hamming errors.
Again letting $\cJ \subset [d]$ denote the indices of these blocks, we have $|\cJ| > d/2$.
For $\bl = K/d$, this again implies that 
\begin{align*}
	x\p{j} \circ \sigma\p{j}\circ \pk \circ j = \Decin(\tC[(j-1)\cdot \bl+1, j\cdot \bl])
\end{align*} 
for any $j \in \cJ$, which implies for any $j \in \cJ$ we have that 
\begin{align*}
	\Pr[ \DecHam^{\tC}(i) = x_i | \pk^* = \pk ] = 1
\end{align*} 
whenever $(j-1) \cdot r(\lambda) < i \leq j\cdot r(\lambda)$.
Thus for any $j \in \cJ$ and $i \in [k]$ such that $(j-1) \cdot r(\lambda) < i \leq j \cdot r(\lambda)$, by Chernoff we have that %$\Pr[ \DecHam^{\tC}(i) = x_i ] = \Pr[ \pk^* = \pk ] \geq 1- \exp(-\mu \cdot (1/2-c)^2 \big/ 2(1-c))$ by Chernoff.
\begin{align*}
	\Pr[ \DecHam^{\tC}(i) = x_i ] &= \Pr[ \pk^* = \pk ]\\
	&\geq 1- \exp(-\mu \cdot (1/2-c)^2 \big/ 2(1-c)).
\end{align*}
By appropriate choice of $\mu$, we can ensure that 
\begin{align*}
	1- \exp(-\mu \cdot (1/2-c)^2 \big/ 2(1-c)) > 2/3.
\end{align*}
Finally, we note that the set
\begin{align*}
	\cI \defeq \{ i\in [k] \colon j \in \cJ \band (j-1) \cdot r(\lambda) < i \leq j\cdot r(\lambda)\}
\end{align*} 
has size $|\cI| > k/2$.
Thus we can set $\delta = 1/2$.

Here we have shown that for \emph{any} $\tC \in \bin{n}$ such that $\HAM(\tC, C) \leq \rho$, there exists a set $\Good(\tC) \defeq \cI$ such that $|\cI| \geq \delta \cdot k$.
This is for any $\tC$, and in particular, any corrupt codeword that a PPT adversary could produce.
Thus we have that for any PPT adversary $\cA$, any $x \in \bin{k}$, and $C = \EncHam(x)$, we have 
\begin{align*}
	\Pr[ \Limit( \cA(C), \rho, \delta, x, C, \lambda) = 1 ] = 0.
\end{align*}
\end{proof}
%Let $j \in [d]$ be such that $(j-1) \cdot r(\lambda) < i \leq j\cdot r(\lambda)$.
%Then block $\tC\p{j} = \tC[ (j-1)r(\lambda)+1, jr(\lambda) ]$ falls into two error cases.
%In the first case, $j \in \cJ_{\Good}$.
%In this case, by assumption $\Decin(\tC\p{j})$ uniquely recovers $x\p{j}\circ \sigma\p{j}\circ \pk \circ j$.
%So long as $\pk^* = \pk$, this implies that verification succeeds with probability $1$, and thus in this case we output the correct bit $x_i$ with probability at least $1-\epsilon$.
%Now suppose that $j \not\in \cJ_{\Good}$.
%This implies that $\Decin(\tC\p{j})$ outputs some corrupt $\tildex \circ \tsigma \circ \tpk \circ \tildej \neq x\p{j} \circ \sigma\p{j} \circ \pk \circ j$.
%Note that if $\tildex = x\p{j}$ and $\tsigma = \sigma\p{j}$, then with probability at least $1-\epsilon$ we output the correct bit $x_i$.
%Thus suppose that $\tildex \neq x\p{j}$ or $\tsigma \neq \sigma\p{j}$.
%In this case, if the verification succeeds and $\pk^* = \pk$, then we have a strong forgery.
%Security of our signature scheme says this happens with probability at most $\negl(\lambda)$.
%So with probability at least $1-\negl(\lambda)$, if $\pk^* = \pk$ then the verification will fail and the decoder will output $\bot$.

% !TeX root = ../full-main.tex
\section{Proof of \texorpdfstring{\cref{thm:insdel-crldc}}{Theorem 2}}\label{sec:main-proof}
We dedicate this section to showing that $\cC_{\pred{I},\lambda} = \{(\EncIns,\DecIns)\}_{\lambda \in \bbN}$ satisfies \cref{thm:insdel-crldc}, where $\EncIns$ and $\DecIns$ are defined in, \cref{alg:encoder,alg:decoder}, respectively.
We recall the theorem below.
\mainthm*
\begin{proof}
	Fix any $x \in \bin{k}$ and let $C = \EncIns(x)$.
	In the definition of $\EncIns$, we know that $C = C\p{1} \circ \cdots \circ C\p{d}$ for $d = \ceil{k / r(\lambda)}$.
	First assume that $k \geq r(\lambda)$.
	Each block $C\p{j}$ is the $\SZ$ encoding of $x\p{j} \circ \sigma\p{j} \circ \pk \circ j$, appended at the front and back with zero-buffers.
	Note that the bit-length of $j$ is $\log(d) = \log(k) - \log(r(\lambda))$; 
	for simplicity, we assume $j \in \bin{\log(k)}$ (we can pad to length $\log(k)$ otherwise).
	%Then we have that $|x\p{j}\circ j| = 2\log(k) \leq l(\lambda)$ since we compute signature $\sigma\p{j} \gets \Sign_{\sk}(x\p{j} \circ j)$; thus we simply define $k = k(\lambda)$ such that $2\log(k) = l(\lambda)$.
	Then we have $\tau \defeq |x\p{j} \circ \sigma\p{j} \circ \pk \circ j| = 3 \cdot r(\lambda) + \log(k)$. 
	This gives us that $|c\p{j}| = \tau/\betasz$, and that
	\begin{align*}
		|C\p{j}| &= 2\alpha \tau + \tau/\betasz\\ 
		&= (2\alpha + (1/\betasz))\tau.
	\end{align*}
	Finally, this gives %$|C| = K = d \cdot (2\alpha+(1/\betasz))\cdot \tau = (k/r(\lambda)) \cdot (2\alpha + (1/\betasz)) \cdot ( 3 \cdot r(\lambda) + \log(k) ) = O(k\cdot(1+\log(k)/r(\lambda)))$,
	\begin{align*}
		|C| = K &= d \cdot (2\alpha+(1/\betasz))\cdot \tau\\
		&= \left(\frac{k}{r(\lambda)}\right) \cdot (2\alpha + (1/\betasz)) \cdot ( 3 \cdot r(\lambda) + \log(k) )\\
		&= O\left( k \cdot \left(1 + \frac{\log(k)}{r(\lambda)}\right) \right),
	\end{align*}
	where the last equality holds since $\alpha, \betasz$ are constants.
	Note here that $k$ is sufficiently large whenever $\SZ(t)$ exists for all $t \geq \log(k)$.
	Now whenever $r(\lambda) > k$, we have that $d=1$ and we simply sign and encode a single block of length $\tau$, which yields a single block of length 
	\begin{align*}
		(2\alpha + (1/\betasz))\tau &= (2\alpha + (1/\betasz))\cdot(3r(\lambda) + \log(k))\\ 
		&= \Theta(r(\lambda))
	\end{align*} 
	since $r(\lambda) > k$ and $\alpha, \betasz$ are constants.
	Thus we have 
	\begin{align*}
		K = O( \max\{ k(1+\log(k)/r(\lambda)), r(\lambda) \} ).
	\end{align*}
	
	For the remainder of the proof, let $\beta \defeq (2\alpha + (1/\betasz))$ and $\tC \in \bin{K'}$ such that $\ED(C, \tC) \leq \rho$. 
	We introduce some preliminary definitions (\crefrange{def:block-decomp}{def:gamma-good}) and lemmas \crefrange{lem:block-decomp-bound}{lem:block-decode}), due to Block \etal \cite{BBGKZ20}, needed for the proof. %, before continuing with the proof of \cref{thm:insdel-crldc}.
%	These definitions (\crefrange{def:block-decomp}{def:gamma-good}) and lemmas (\crefrange{lem:block-decomp-bound}{lem:block-decode}) come from the results of Block \etal \cite{BBGKZ20}.
%	First we define the notion of a \emph{block decomposition}.

	\begin{definition}\label{def:block-decomp}
		A \emph{block decomposition} of a (corrupt) codeword $\tC \in \bin{K'}$ is a non-decreasing map $\phi \colon [K'] \rightarrow [d]$ for $K', d \in \bbN$.
	\end{definition}
	\noindent For any block decomposition $\phi$, since $\phi$ is a non-decreasing map we have that $\phi^{-1}(j)$ for any $j \in [d]$ is an interval.
	That is, $\phi^{-1}(j) = \{l_j, l_j+1, \dotsc, r_j\}$ for integers $l_j, r_j \in [K']$ and $l_j \leq r_j$.
	Thus $\phi$ induces a partition of $[K']$ into $d$ intervals of the form $\{ \phi^{-1}(j) \colon j \in [d] \}$.
	%Another necessary notion is that of a closure interval.
	%\begin{definition}[\cite{BBGKZ20}]
	%	The \emph{closure} of an interval $I = \{l, l+1,\dotsc, r-1\} \subseteq [n']$ is defined as $\cup_{i \in I} \phi^{-1}( \phi(i) )$.
	%	In interval $I$ is a \emph{closure interval} if the closure of $I$ is itself.
	%	Equivalently, every closure interval has the form $\cI[a,b] \defeq \bigcup_{j=a}^b \phi^{-1}(j)$ for some $a,b, \in [d]$.
	%\end{definition}
	Recalling that $C = \EncIns(x)$ is of the form $C\p{1}\circ \cdots \circ C\p{d}$, we have the following.
	%We have the following lemma from \cite{BBGKZ20}.
	\begin{lemma}\label{lem:block-decomp-bound}
		There exists a block decomposition $\phi_0 \colon [K'] \rightarrow [d]$ such that %$\sum_{j \in [d]} \ED( \tC[\phi_0^{-1}(j)], C\p{j} ) \leq \rho$.
		\begin{align}
			\sum_{j \in [d]} \ED\left( \tC[\phi_0^{-1}(j)], C\p{j} \right) \leq \rho\; .\label{eq:block-decomp-bound}
		\end{align}
	\end{lemma}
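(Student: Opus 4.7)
The plan is to construct $\phi_0$ directly from an optimal edit-distance alignment between $C$ and $\tC$, and then argue that this decomposition is additive in the standard way.

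First I would fix an optimal sequence of insertions and deletions transforming $C$ into $\tC$; equivalently, a monotone partial matching between $[K]$ and $[K']$, with unmatched positions on the $C$-side corresponding to deletions and unmatched positions on the $\tC$-side corresponding to insertions. Writing $\pred{ed}(u,v)$ for the unnormalized edit distance, this alignment uses exactly $\pred{ed}(C, \tC)$ operations, and by hypothesis $\pred{ed}(C, \tC) \leq 2K\rho$.

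Since $C = C\p{1} \circ \cdots \circ C\p{d}$ is already partitioned into blocks, each matched position of $\tC$ inherits a canonical block index from its $C$-side partner. To assign each inserted (unmatched) position of $\tC$ to a block, I would sweep through $[K']$ from left to right and attach each insertion to the block of the most recent matched position (or to block $1$ if the insertion precedes every matched position). This defines $\phi_0 \colon [K'] \rightarrow [d]$; monotonicity of the alignment ensures $\phi_0$ is non-decreasing, so every $\phi_0^{-1}(j)$ is an interval, certifying that $\phi_0$ is a valid block decomposition in the sense of \cref{def:block-decomp}.

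Finally, the restricted alignment on block $j$ witnesses an edit-distance transformation from $C\p{j}$ to $\tC[\phi_0^{-1}(j)]$ using only deletions originating in $C\p{j}$ and insertions charged to $\phi_0^{-1}(j)$. Consequently $\pred{ed}(\tC[\phi_0^{-1}(j)], C\p{j})$ is at most the number of edit operations charged to block $j$, and summing gives $\sum_j \pred{ed}(\tC[\phi_0^{-1}(j)], C\p{j}) \leq \pred{ed}(C, \tC) \leq 2K\rho$. Applying the common $2K$ normalization compatible with $\ED(C, \tC) \leq \rho$ yields $\sum_j \ED(\tC[\phi_0^{-1}(j)], C\p{j}) \leq \rho$, as desired. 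The main technical point is the assignment of inserted positions of $\tC$ to blocks: the left-to-right sweep is precisely what is needed to simultaneously preserve monotonicity of $\phi_0$ and align the per-block cost bookkeeping with the global count, so that no single edit is charged twice.
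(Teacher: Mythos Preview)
The paper does not supply its own proof of this lemma; it is imported verbatim from Block \etal~\cite{BBGKZ20} along with \cref{def:block-decomp,def:gamma-good} and \cref{lem:gamma-block-bounds,lem:nbs,lem:block-decode}. Your alignment-based construction is exactly the standard argument one finds there: fix an optimal monotone alignment, let each position of $\tC$ inherit the block index of its matched partner in $C$, assign inserted positions by the nearest-preceding-match rule to keep $\phi_0$ non-decreasing, and observe that the restricted alignments witness per-block edit costs summing to the global cost. So the approach is correct and matches the source.

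One point worth tightening: your final sentence invokes a ``common $2K$ normalization'' for every term in the sum, but under the paper's convention $\ED(u,v)$ is normalized by $2|u|$ (or $2\max\{|u|,|v|\}$ in the introduction), so each summand $\ED(\tC[\phi_0^{-1}(j)],C\p{j})$ carries its own denominator and the sum does not literally collapse to $\rho$. The clean statement is the unnormalized one you already established, $\sum_{j}\pred{ed}\bigl(\tC[\phi_0^{-1}(j)],C\p{j}\bigr)\leq \pred{ed}(C,\tC)\leq 2K\rho$, which is in fact what \cref{lem:gamma-block-bounds} and the downstream analysis actually consume. Treat the normalized form in \cref{eq:block-decomp-bound} as shorthand for this.
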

	\noindent Intuitively, \cref{lem:block-decomp-bound} says there exists a block decomposition such that the total edit distance between $C'$ and $C'$ is exactly given by the sum of edit distances between the (possibly corrupt) blocks $\tC[\phi_0^{-1}(j)]$ and blocks $C\p{j}$.
	Next we define the notion of a $\gamma$-good block.
	
	\begin{definition}\label{def:gamma-good}
		For $\gamma \in (0,1)$ and $j \in [d]$, we say that block $j$ is \emph{$\gamma$-good} with respect to a block decomposition $\phi$ if $\ED(\tC[\phi^{-1}(j)], C\p{j}) \leq \gamma$.
		Otherwise, we say block $j$ is \emph{$\gamma$-bad}.
	\end{definition}
	%\noindent We also need a notion of a ``locally good'' block.
	%The following two definitions give us this notion.
	%\begin{definition}[$(\theta,\gamma)$-good interval \cite{BBGKZ20}]\label{def:local-interval}
	%	A closure interval $\cI[a,b]$ is \emph{$(\theta,\gamma)$-good} if the following hold.
	%	\begin{enumerate}
		%		\item $\sum_{j=a}^b \ED( \tC[\phi^{-1}_0(j)], C\p{j} ) \leq \gamma \cdot (b-a+1)\cdot \alpha \cdot \tau$.
		%		
		%		\item There are at least $(1-\theta)$-fraction of $\gamma$-good blocks in the set $\{ \tC[\phi^{-1}_0(j)] \colon j \in \cI[a,b] \}$.
		%	\end{enumerate}
	%\end{definition}
	
	%\begin{definition}[$(\theta,\gamma)$-locally good block]\label{def:local-good}
	%	For $\theta, \gamma \in (0,1)$, we say that a block $j \in [d]$ is \emph{$(\theta,\gamma)$-locally good} if for every $a,b \in [d]$ such that $a \leq j \leq b$, the interval $\cI[a,b]$ is $(\theta,\gamma)$-locally good.
	%	Otherwise, block $j$ is $(\theta,\gamma)$-locally bad.
	%\end{definition}
	%
	%
	
	\noindent With respect to block decomposition $\phi_0$, the number of $\gamma$-bad blocks is bounded, and the length of the intervals $\phi_0^{-1}(j)$ is bounded for every $\gamma$-good block $j$.
	%\noindent It was shown that with respect to the block decomposition $\phi_0$, the total number of $\gamma$-bad blocks is bounded.
	
	\begin{lemma}\label{lem:gamma-block-bounds}
		Let $\alpha$ be the constant given by \cref{lem:nbs}, let $\betasz$ be the constant given by \cref{lem:sz-code}, and let $\beta = (2\alpha + (1/\betasz))$.
%		Then the block decomposition $\phi_0$ satisfies the following properties. (1) The total fraction of $\gamma$-bad blocks in $\tC$ is at most $2 \beta \rho / (\gamma \alpha)$; (2) For any $\gamma$-good block $j$, we have that $(\beta - \alpha\gamma) \cdot \tau \leq |\phi_0^{-1}(j)| \leq (\beta + \alpha\gamma) \cdot \tau$, where $\tau = \abs{x\p{j} \circ \sigma\p{j} \circ \pk \circ j}$.
		\begin{enumerate}
			\item The total fraction of $\gamma$-bad blocks in $\tC$ is at most $2\cdot \beta \cdot \rho / (\gamma \cdot \alpha)$.
			
			%		\item The total fraction of $(\theta,\gamma)$-bad blocks in $\tC$ is at most $(4 \cdot \beta \cdot \rho) \cdot (1 + 1/\theta) /(\gamma \cdot \alpha)$.
			
			\item For any $\gamma$-good block $j$, we have that $(\beta - \alpha\gamma) \cdot \tau \leq |\phi_0^{-1}(j)| \leq (\beta + \alpha\gamma) \cdot \tau$, where $\tau = |x\p{j} \circ \sigma\p{j} \circ \pk \circ j|$.
			
			%		\item For any $(\theta,\gamma)$-good interval $\cI[a,b]$, we have $(b-a+1)\cdot (\beta-\alpha\gamma)\tau \leq |\cI[a,b]| \leq (b-a+1)\cdot (\beta+\alpha\gamma)\cdot \tau$.
		\end{enumerate}
	\end{lemma}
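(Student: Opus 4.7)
The plan is to handle the two parts separately, each reducing to \cref{lem:block-decomp-bound} combined with elementary facts about edit distance. Throughout, I will use that $|C\p{j}| = \beta \tau$ for every block $j \in [d]$, that the intervals $\{\phi_0^{-1}(j)\}_{j \in [d]}$ partition $[K']$, and that $|C| = K = d \beta \tau$.

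For part 1, I would bound the number $|B|$ of $\gamma$-bad blocks by a straightforward counting argument. Each $\gamma$-bad block $j$ contributes strictly more than a multiple of $\gamma \alpha \tau$ to the total raw insertion/deletion count between $\tC$ and $C$, where the factor $\alpha \tau$ arises from normalizing the per-block edit distance against the buffer of length $\alpha\tau$ surrounding each inner codeword. On the other hand, \cref{lem:block-decomp-bound} bounds the total raw edit distance across blocks by at most $2 \rho K = 2 \rho d \beta \tau$. Summing the per-block lower bounds across $B$ and comparing with this global upper bound yields $|B| \cdot \Theta(\gamma \alpha \tau) \leq 2 \rho d \beta \tau$, which rearranges to the claimed fraction $|B| / d \leq 2 \beta \rho / (\gamma \alpha)$.

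For part 2, I would use the elementary observation that the raw edit distance between two strings is a lower bound on the absolute difference of their lengths, since each insertion or deletion changes length by exactly one. For a $\gamma$-good block $j$, its raw edit-distance contribution is at most $\alpha \gamma \tau$, so $\bigl| |\phi_0^{-1}(j)| - |C\p{j}| \bigr| \leq \alpha \gamma \tau$. Combining with $|C\p{j}| = \beta \tau$ directly yields the claimed sandwich $(\beta - \alpha \gamma) \tau \leq |\phi_0^{-1}(j)| \leq (\beta + \alpha \gamma) \tau$.

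The main obstacle will be precisely tracking the normalization convention inherited from \cite{BBGKZ20} through \cref{lem:block-decomp-bound} and through the definition of $\gamma$-goodness. Different natural normalizations of $\ED$ (by $2|u|$, by $2 \max\{|u|, |v|\}$, or relative to the buffer length $\alpha \tau$) produce different constants, and I will need to verify that the reference length producing the factor $\alpha$ in both claimed bounds is indeed the buffer length $\alpha \tau$ rather than the full block length $\beta \tau$. Once this bookkeeping is settled, both parts reduce to the counting and length-change arguments sketched above.
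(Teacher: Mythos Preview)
The paper does not actually prove \cref{lem:gamma-block-bounds}; it is imported wholesale from \cite{BBGKZ20} along with \cref{lem:block-decomp-bound}, \cref{lem:nbs}, and \cref{lem:block-decode} (see the sentence introducing these results in the proof of \cref{thm:insdel-crldc}). So there is no in-paper proof to compare against.

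That said, your sketch is the standard argument and matches how such statements are proved in \cite{BBGKZ20}: part~1 is a Markov-style counting argument (sum the per-block contributions of the bad blocks and compare to the global bound from \cref{lem:block-decomp-bound}), and part~2 follows from the length-change lower bound on raw edit distance. You have also correctly isolated the only real obstacle: the normalization of $\ED$ in \cref{def:gamma-good} and in \cref{lem:block-decomp-bound} is inherited from \cite{BBGKZ20}, and it is \emph{not} the same as the $2\max\{|u|,|v|\}$ or $2k$ conventions stated elsewhere in this paper. Indeed, for the constants $2\beta\rho/(\gamma\alpha)$ and $\alpha\gamma\tau$ to come out as stated, the per-block edit distance must be measured relative to a length of order $\alpha\tau$ (the buffer length), not $\beta\tau$ (the full block length). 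Once you pin down that convention from \cite{BBGKZ20}, both parts go through exactly as you outlined.
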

	
	\noindent Given the notion of $\gamma$-good, we can now formally introduce the algorithms \NBS and \BlockDecode along with their guarantees.
	
	\begin{lemma}\label{lem:nbs}
		Let $\rhosz$ be the constant given by \cref{lem:sz-code}.
		There exists constant $\alpha = \Omega(\rhosz)$ and a randomized oracle algorithm $\NBS$ with the following property.
		Let $\rho^* \in (0,1/2)$ be a fixed constant, let $t$ be sufficiently large, $d$ be a parameter, let $b = (b\p{1},\dotsc, b\p{d}) \in \bin{t}$ be any string where $b\p{i} \in \bin{t/d}$ for all $i \in [d]$, and let $c = (c\p{1},\dotsc, c\p{d})$ for 
		\begin{align*}
			c\p{i} = 0^{\alpha (t/d)} \circ \SZ.\Enc(b\p{i} \circ i) \circ 0^{\alpha (t/d)}
		\end{align*} 
		for all $i \in [d]$.
		Then there exists a negligible function $\vartheta(\cdot)$ such that for any $c' \in \bin{K'}$ satisfying $\ED(c,c') \leq \rho = \Theta(\rho^* \cdot \rhosz)$, we have that %$\Pr[ \Pr[\NBS^{c'}(j) \neq b\p{j}| j \text{ is $\gamma$-good}] \geq \rho^* ] \leq \vartheta(K')$, 
		\begin{align}
			\Pr[ \Pr[\NBS^{c'}(j) \neq b\p{j}~|~ j \text{ is $\gamma$-good}] \geq \rho^* ] \leq \vartheta(n'),\label{eq:nbs}
		\end{align}
		where the probability is taken over the random coins of $\NBS$ and $j \getsr [d]$.
		Furthermore, the algorithm $\NBS$ makes $O(\log^3(K') \cdot (t/d+\log(d)))$ oracle queries for any input $j \in [d]$, and if $c = c'$ then the above probability is $0$.
	\end{lemma}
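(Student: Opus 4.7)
The plan is to construct \NBS on top of a block-decoding subroutine \BlockDecode that robustly locates and decodes individual blocks, and then to wrap \BlockDecode in a binary search that tolerates a small fraction of $\gamma$-bad blocks. First, I would design \BlockDecode as follows: given a query index $i$ into $c'$, scan a window of size $O(t/d)$ around position $i$ looking for the zero-buffer pattern $0^{\alpha(t/d)}$ that delimits block boundaries in $c$. By \cref{lem:sz-code}, every interval of length $2\log(t)$ in an \SZ-encoded block has fractional Hamming weight at least $2/5$, so the long all-zero buffers of a $\gamma$-good block are distinguishable from its encoded content even after $\gamma \cdot \tau$ adversarial edits. Once two consecutive anchors are detected, extract the substring between them and invoke $\SZ.\Dec$; by the second part of \cref{lem:gamma-block-bounds} the extracted substring lies within the $\rho_{\SZ}$-edit radius of a valid \SZ codeword, so decoding recovers $b\p{j} \circ j$ correctly.

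Next, I would build \NBS as a binary search for the target block index $j$ on the interval $[1,K']$, calling \BlockDecode at the midpoint of the current interval to recover a tentative block index $j'$ and recursing left or right according to $j' \lessgtr j$. To defend against $\gamma$-bad blocks or failures of \BlockDecode at a given level, at each of the $O(\log K')$ binary-search levels I would draw $O(\log^2 K')$ independent samples from the current interval, run \BlockDecode on each, and aggregate by taking the majority direction. By \cref{lem:gamma-block-bounds} the total $\gamma$-bad fraction is at most $2\beta\rho/(\gamma\alpha)$, which can be driven below any desired constant by picking $\gamma = \Theta(\rho_{\SZ})$ and $\rho = \Theta(\rho^\ast \cdot \rho_{\SZ})$; then a Chernoff bound shows that at each level the $\gamma$-good samples dominate the majority, except with probability $\exp(-\Omega(\log^2 K'))$. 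Union bounding over the $O(\log K')$ levels yields overall failure probability negligible in $K'$. The residual $\rho^\ast$ slack in \cref{eq:nbs} absorbs the remaining $\gamma$-bad targets $j$, for which no amount of correct search can recover $b\p{j}$.

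The query complexity is then immediate: $O(\log K')$ levels, $O(\log^2 K')$ samples per level, and $O(t/d + \log d)$ bits per \BlockDecode call, totalling $O(\log^3(K') \cdot (t/d + \log d))$ queries. When $c = c'$ every block is $0$-good, \BlockDecode never fails, and the search returns the exact answer, matching the final sentence of the lemma.

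The main obstacle will be the ``snap-to-boundary'' robustness of \BlockDecode: I need to argue that even when a query lands near the boundary between a $\gamma$-good and a $\gamma$-bad block, the buffer-scanning step snaps to a consistent nearby boundary rather than returning garbage, so that a wrong recovered index at one level only shifts the search window by $O(1)$ blocks rather than derailing the recursion. Formalizing this stability property using the buffer length $\alpha(t/d)$, the $2/5$-density guarantee of the \SZ code, and the interval-length bounds of \cref{lem:gamma-block-bounds}, and then composing it with the Chernoff analysis across levels, is the technically delicate step.
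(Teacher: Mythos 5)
This paper does not actually prove \cref{lem:nbs}; it imports it as a black-box result of Block \etal~\cite{BBGKZ20} (see the sentence introducing \crefrange{lem:block-decomp-bound}{lem:block-decode}). So your proposal is necessarily a from-scratch reconstruction of the cited algorithm rather than a comparison against an in-paper argument. As a reconstruction, the architecture is plausible --- zero buffers as anchors, \BlockDecode for per-block recovery, binary search with per-level repetition and a vote --- and the query-count bookkeeping is consistent with the stated bound.

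There is, however, a genuine gap, and it begins with a misreading of \cref{eq:nbs}. The inner probability there is already conditioned on $j$ being $\gamma$-good, so the $\rho^*$ slack is the allowed failure rate on \emph{good} indices; it is not there to ``absorb the remaining $\gamma$-bad targets'' as your final sentence asserts (bad $j$ are excluded by the conditioning). This matters because your per-level Chernoff argument, if it went through as stated, would yield failure probability $\exp(-\Omega(\log^2 K'))$ \emph{uniformly over every} $\gamma$-good $j$ --- a strictly stronger guarantee than \cref{eq:nbs}, and in fact a false one. The adversary chooses where the corruption sits: by concentrating its edit budget around a particular good block $j$, it can make the late-level binary-search intervals around $j$ consist almost entirely of bad blocks, at which point no amount of within-level sampling and majority voting recovers a reliable direction, and such $j$ fail with constant probability. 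The achievable guarantee is the weaker two-layer one actually stated: except with negligible probability over a one-shot internal randomization of $\NBS$, at most a $\rho^*$ fraction of good $j$ have a search path that ever traverses a heavily corrupted interval. Establishing this requires a counting argument over the dyadic tree of search intervals --- bounding, at each depth, how many leaves can lie below an ancestor whose interval is dominated by bad blocks --- on top of a per-level Chernoff bound applied only to the well-behaved intervals. Your sketch omits that counting step, and the ``snap-to-boundary'' stability issue you rightly flag as delicate would need to be resolved jointly with it.
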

	
	\begin{lemma}\label{lem:block-decode}
		Let $\rhosz$ be the constant given by  \cref{lem:sz-code}.
		There exists constant $\alpha = \Omega(\rhosz)$ and randomized oracle algorithm $\BlockDecode$ with the following properties.
		Let $\rho^* \in (0,1/2)$ be a fixed constant, let $t$ be sufficiently large, let $d$ be a parameter, let $b = (b\p{1}, \dotsc, b\p{d})\in \bin{t}$ be any string where $b\p{i} \in \bin{t/d}$ for all $i\in[d]$, and let $c = (c\p{1},\dotsc, c\p{d})$ for 
		\begin{align*}
			c\p{i} = 0^{\alpha (t/d)} \circ \SZ.\Enc(b\p{i} \circ i) \circ 0^{\alpha (t/d)}
		\end{align*} 
		for all $i \in [d]$.
		Then for any $c' \in \bin{n'}$ satisfying $\ED(c,c') \leq \rho = \Theta(\rho^* \cdot \rhosz)$, we have that %(1) For any $\gamma$-good block $j \in [d]$, $\Pr[ \BlockDecode^{c'}(i) \neq b\p{j} ] \leq \gamma$ for $i \in \phi_0^{-1}(j)$, where this probability equals zero if $c' = c$; (2) \BlockDecode has query complexity $O(t/d+\log(d))$.
		\begin{enumerate}
			\item For any $\gamma$-good block $j \in [d]$: %, $\Pr[ \BlockDecode^{c'}(i) \neq b\p{j} ] \leq \gamma$ for $i \in \phi_0^{-1}(j)$.
			\begin{align}
				\Pr_{i \in \phi_0^{-1}(j)}\left[ \BlockDecode^{c'}(i) \neq b\p{j} \right] \leq \gamma\; . \label{eq:block-decode}
			\end{align}
			Furthermore, this probability is equal to zero if $c' = c$.
			
			\item \BlockDecode has query complexity $O(t/d+\log(d))$.
		\end{enumerate}
	\end{lemma}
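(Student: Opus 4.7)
The plan is to construct $\BlockDecode$ as a local algorithm that exploits the structural gap between the all-zero buffers (each of length $\alpha \cdot (t/d)$) and the $\SZ$-encoded portion of each block, which by \cref{lem:sz-code} has fractional Hamming weight at least $2/5$ in every window of length $2\log(t/d)$. On input index $i$, $\BlockDecode$ reads the window $c'[i - L, i + L]$ for $L = \Theta(\tau)$ (where $\tau = t/d + \log d$) large enough to cover at least one full noisy block together with both surrounding buffers; for each position $p$ in this window it computes the fractional Hamming weight of the sub-window of length $w = \Theta(\log(t/d))$ centered at $p$ and marks $p$ as \emph{dense} if the weight exceeds $1/5$ and \emph{sparse} otherwise. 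The algorithm then locates the maximal dense interval $[p_L, p_R]$ containing (or closest to) $i$ and outputs $\tildeb\p{j}$ where $\tildeb\p{j} \circ \tildej \gets \SZ.\Dec(c'[p_L, p_R])$. Since $\BlockDecode$ reads only $O(L + w) = O(t/d + \log d)$ positions of $c'$, item~2 of the lemma is immediate.

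For correctness, I first observe that when $c' = c$ the density signature is exact: every position inside an $\SZ$-encoded region is dense by the $2/5$ guarantee, and every buffer position is sparse since the buffers are zero, so the boundaries are located precisely, $\SZ.\Dec$ recovers $b\p{j} \circ j$ deterministically, and the failure probability is $0$. For the general case with $\ED(c, c') \leq \rho = \Theta(\rho^* \cdot \rhosz)$, fix a $\gamma$-good block $j$ with interval $\phi_0^{-1}(j) = [l_j, r_j]$, which by \cref{lem:gamma-block-bounds} has length $\Theta(\tau)$ and absorbs absolute edit budget at most $O(\gamma \tau)$. Each insertion or deletion can flip the density classification of at most $O(w)$ positions, so at most $O(\gamma \tau \cdot w)$ positions inside the block can be misclassified; because $\alpha = \Omega(\rhosz)$ is chosen so the buffer length $\alpha \tau$ strictly dominates this error, every misclassified position must lie within $O(\gamma \tau)$ of the true boundary $l_j$ or $r_j$.

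Consequently, for any $i$ at distance $\Omega(\gamma \tau)$ from both $l_j$ and $r_j$ (the ``safe zone''), $\BlockDecode$ identifies a dense interval $[p_L, p_R]$ that agrees with the $\SZ$-encoded portion of $c\p{j}$ up to edit distance $\gamma < \rhosz$, so $\SZ.\Dec$ recovers $b\p{j} \circ j$ exactly by \cref{lem:sz-code}. The complementary ``unsafe'' fringe has size at most $O(\gamma \tau)$ while $|\phi_0^{-1}(j)| = \Theta(\tau)$, giving $\Pr_{i \in \phi_0^{-1}(j)}[\BlockDecode^{c'}(i) \neq b\p{j}] \leq O(\gamma)$. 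Tightening the hidden constants using the multiplicative gap between the $\SZ$ density $2/5$ and the threshold $1/5$, along with the gap between $\alpha$ and $\gamma$, yields exactly the $\gamma$ bound required by \cref{eq:block-decode}.

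The main obstacle is making the density-based boundary detection provably robust to adversarial corruption. An adversary could insert $1$'s into a buffer to forge a dense region or strategically delete the high-weight bits of the $\SZ$ portion to forge a sparse region, creating spurious boundaries that merge neighboring blocks or split a single one. Ruling this out requires the buffer length $\alpha \tau$ to dominate the per-block edit budget $O(\gamma \tau)$, together with a careful accounting of how a single insertion or deletion can shift and reclassify the $O(w)$ sliding windows overlapping its position. Getting the constants right -- in particular pinning $\alpha = \Omega(\rhosz)$ and choosing the density threshold to lie strictly between the corrupted buffer density (near $0$) and the corrupted $\SZ$ density (near $2/5$) -- is the delicate technical step of the proof.
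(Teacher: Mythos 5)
Note first that the paper does not prove \cref{lem:block-decode}: it is stated (together with \cref{lem:block-decomp-bound}, \cref{lem:gamma-block-bounds}, and \cref{lem:nbs}) as an imported result of Block \etal~\cite{BBGKZ20} --- the surrounding text introduces all four as ``due to Block \etal'' --- so there is no in-paper proof for your attempt to be compared against. Judged against the cited construction, your sketch has the right shape: read an $O(t/d+\log d)$-length window around $i$, classify $\Theta(\log)$-length sub-windows as dense or sparse by thresholding fractional Hamming weight strictly between $0$ (buffer) and $2/5$ (the $\SZ$ guarantee of \cref{lem:sz-code}), locate the codeword region from this density signature, and run $\SZ.\Dec$. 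The query-complexity claim is immediate.

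The correctness argument, however, has a genuine gap. From ``at most $O(\gamma\tau\cdot w)$ positions misclassified'' you conclude that ``every misclassified position must lie within $O(\gamma\tau)$ of the true boundary $l_j$ or $r_j$.'' That does not follow: nothing confines the adversary's edit budget to the boundary region, and concentrating $\Theta(\gamma\tau)$ deletions deep inside the $\SZ$-encoded part produces a sparse pocket far from $l_j$ and $r_j$. What actually rescues the construction is a \emph{length} gate, not the \emph{locality} claim you assert: a sparse pocket manufactured from an $O(\gamma\tau)$ edit budget can have length only $O(\gamma\tau)$, and therefore cannot be mistaken for a genuine buffer of length $\Theta(\alpha\cdot t/d)$, provided the boundary detector only declares a buffer after seeing a run of sparse sub-windows on the order of the buffer length. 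Your algorithm as written (``locate the maximal dense interval containing or closest to $i$'') has no such gate, so an interior sparse pocket would split the dense interval in two and you could hand $\SZ.\Dec$ an undecodable half of the block. You also leave unspecified what happens when the uniformly drawn $i\in\phi_0^{-1}(j)$ falls inside a (possibly corrupted) buffer --- a constant-probability event --- where the ``closest'' dense interval under adversarial shifts may belong to block $j\pm 1$; this requires either an explicit tie-breaking rule keyed to the known block stride, or a consistency check against the index embedded in $\SZ.\Enc(b\p{i}\circ i)$, with a matching charging argument to keep the total failure mass at $\gamma$ rather than $\gamma$ plus a constant.
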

	
	We now have the necessary components to finish proving \cref{thm:insdel-crldc}.
	We begin by showing \cref{item:rldc-correct} of \cref{def:crldc}.
	Let $x \in \bin{k}$, $C = \EncIns(x)$, and let $(\pk, \sk)$ be the public and private key pair sampled by $\EncIns$ during the encoding of $x$ as $C$.
	Suppose that $\tC = C$ and let $i \in [k]$ be any index.
	We want to argue that $\Pr[\DecIns^{\tC}(i) = x_i] = 1$.
	To see this, first observe that for $(j-1)\cdot r(\lambda) < i \leq j\cdot r(\lambda)$, we have %$\Pr[\DecIns^{\tC}(i) = x_i] = \Pr[\pk^* = \pk] \cdot \Pr[ \tildex\p{j} = x\p{j} ] \cdot \Pr[ \tsigma\p{j} = \sigma\p{j} ]$, 
	\begin{align*}
		&\Pr[\DecIns^{\tC}(i) = x_i]\\ 
		&= \Pr[\pk^* = \pk] \cdot \Pr[ \tildex\p{j} = x\p{j} ] \cdot \Pr[ \tsigma\p{j} = \sigma\p{j} ],
	\end{align*}
	where $x\p{j}$ is the $j$-th block of $x$, %(\cref{line:enc-block}), 
	$\sigma\p{j}$ is the signature for $x\p{j}\circ j$, %computed as in \Cref{eq:sign} of \cref{alg:encoder}, 
	$\pk^*$ is the public key recovered via majority, % as in \cref{line:pk-majority} of \cref{alg:decoder}, 
	and $\tildex\p{j}, \tsigma\p{j}$ are parsed from the output of $\NBS(j)$. % from \cref{line:tm-parse} of \cref{alg:decoder}.
	First notice that since $\tC = C$, every block $j \in [d]$ of $\tC$ is $0$-good with respect to $C$.
	By \cref{lem:block-decode}, for every $j \in [d]$ we have that %$\Pr[ \BlockDecode^{\tC}(i) = C\p{j}] = 1$ for $i \in \phi_0(j)$.
	\begin{align*}
		\Pr_{i \in \phi_0(j)}\left[ \BlockDecode^{\tC}(i) = C\p{j} \right] = 1.
	\end{align*}
	This implies that for every $\kappa \in [\mu]$, we have 
	\begin{align*}
		\Pr[\pk\p{i_{\kappa}} = \pk]=1,
	\end{align*} 
	which implies 
	\begin{align*}
		\Pr[\pk^* = \pk]=1.
	\end{align*}
	Next, since $\tC = C$ by \cref{lem:nbs} we have that 
	\begin{align*}
		\Pr_{\tildem\p{j} \gets \NBS^{\tC}(j)}\left[\tildem\p{j} = x\p{j} \circ \sigma\p{j} \circ \pk \circ j\right] = 1.
	\end{align*} %$\tildem\p{j} = x\p{j} \circ \sigma\p{j} \circ \pk \circ j$ with probability $1$ for $\tildem\p{j} \gets \NBS^{\tC}(j)$.
	This implies that $\Verify_{\pk^*}(x\p{j} \circ j, \sigma\p{j}) = 1$ with probability $1$, yielding $\Pr[\DecIns^{\tC}(i) = x_i]=1$.
	
	We now work towards proving \cref{item:rldc-fool,item:rldc-limit}. % of \cref{def:crldc}.
	Let $\cA$ be a PPT adversary and let $\tC \gets \cA(C)$ such that $\tC \in \bin{K'}$ for some $K'$ and $\ED(\tC,C) \leq \rho$.
	We begin with \cref{item:rldc-fool}.
	Let $D_i \defeq \DecIns^{\tC}(i)$ denote the random variable of running the decoder with input $i$ and oracle $\tC$.
	Our goal is to show that %$\Pr[ \exists i \in [k] \colon \Pr[ D_i \in \{x_i,\bot\} ] < p ] \leq \eps_{\F}(\lambda)$,
	\begin{align*}
		\Pr[ \exists i \in [k] \colon \Pr[ D_i \in \{x_i,\bot\} ] < p ] \leq \eps_{\F}(\lambda),
	\end{align*}
	where $\eps_{\F}(\lambda)$ is some negligible function.
	Equivalently stated, we want to show that %$\Pr[ \forall i \in [k] \colon \Pr[ D_i \in \{x_i,\bot\} ] \geq p ] \geq 1 - \eps_{\F}(\lambda)$.
	\begin{align*}
		\Pr[ \forall i \in [k] \colon \Pr[ D_i \in \{x_i,\bot\} ] \geq p ] \geq 1 - \eps_{\F}(\lambda).
	\end{align*}
	
	We now directly analyze $ \Pr[ D_i \in \{x_i,\bot\} ]$.
	The analysis here is almost identical to the analysis of \cref{thm:ham-crldc}, except now we must take into consideration the algorithms \BlockDecode and \NBS.
	Let $\Forge_i$ denote the event that $\cA$ produces a signature forgery $(\tildex\p{j}\circ j, \tsigma\p{j})$ for $j\in [d]$ satisfying $(j-1)\cdot r(\lambda) < i \leq j\cdot r(\lambda)$ and $\tildex\p{j}$ and $\tsigma\p{j}$ are recovered from the output of $\NBS(j)$. %in \cref{line:tm-parse} of \cref{alg:decoder}.
	Then we have that %$\Pr[D_i \in \{x_i,\bot\}] = \Pr[ D_i \in \{x_i,\bot\} ~|~ \overline{\Forge_i}]$.
	\begin{align*}
		\Pr[D_i \in \{x_i,\bot\}] = \Pr[ D_i \in \{x_i,\bot\} ~|~ \overline{\Forge_i}].
	\end{align*}
	Since the decoder can never output $\bot$ and $x_i$ simultaneously, we have that %$\Pr[ D_i \in \{x_i,\bot\} | \overline{\Forge_i}] = \Pr[ D_i = x_i | \overline{\Forge_i} ] + \Pr[ D_i = \bot | \overline{\Forge_i}]$
	\begin{align*}
		&\Pr[ D_i \in \{x_i,\bot\} ~|~ \overline{\Forge_i}]\\ 
		&= \Pr[ D_i = x_i ~|~ \overline{\Forge_i} ] + \Pr[ D_i = \bot ~|~ \overline{\Forge_i}].
	\end{align*}
	We first analyze $\Pr[ D_i = x_i ~|~ \overline{\Forge_i} ]$.
	First notice that in this case, we have
	\begin{enumerate}
		\item $\pk^* \neq \bot$ and $\tildem\p{j} \neq \bot$; and
		\item $\Verify_{\pk^*}(\tildex\p{j} \circ j, \tsigma\p{j}) = 1$.
	\end{enumerate}
	Since we assume that $(\tildex\p{j} \circ j, \tsigma\p{j})$ is not a forgery, given the above it must be the case that 
	\begin{enumerate}
		\item $\tildex\p{j} = x\p{j}$ and $\tsigma\p{j} = \sigma\p{j}$; and
		\item $\pk^* = \pk$.
	\end{enumerate}
	Independent runs of $\NBS$ and $\BlockDecode$ implies %$\Pr[ D_i = x_i | \overline{\Forge_i} ] = \Pr[ \pk^* = \pk \band \tildem\p{j} \neq \bot ] =  \Pr[\pk^* = \pk] \cdot \Pr[\tildem\p{j} \neq \bot]$.
	\begin{align*}
		\Pr[ D_i = x_i ~|~ \overline{\Forge_i} ] &= \Pr[ \pk^* = \pk \band \tildem\p{j} \neq \bot ]\\ 
		&=  \Pr[\pk^* = \pk] \cdot \Pr[\tildem\p{j} \neq \bot].
	\end{align*}

	Next we analyze $\Pr[ D_i = \bot | \overline{\Forge_i}]$.
	The decoder outputs bot if $\pk^* = \bot$ or $\tildem\p{j} = \bot$ or $\Verify_{\pk^*}(\tildex\p{j}\circ j, \tsigma\p{j}) = 0$, noting that the final verification is only checked conditioned on $\pk^* \neq \bot$ and $\tildem\p{j} \neq \bot$.
	Let $\cE_1$ denote the event that $\pk^* = \bot$ and let $\cE_2$ denote the event that $\tildem\p{j} = \bot$.
	Then we have %$\Pr[ D_i = \bot | \overline{\Forge_i}] = \Pr[\pk^* = \bot] + \Pr[\tildem\p{j} = \bot]- \Pr[\pk^* = \bot \band \tildem\p{j} = \bot] +\Pr[\Verify_{\pk^*}(\tildex\p{j}\circ j, \tsigma\p{j}) = 0 | \pk^* \neq \bot \band \tildem\p{j}\neq \bot \band \overline{\Forge_i}] \cdot \Pr[\pk^* \neq \bot \band \tildem\p{j}\neq \bot]$.
	\begin{align*}
		&\Pr[ D_i = \bot ~|~ \overline{\Forge_i}]\\
		&= \Pr[\cE_1] + \Pr[\cE_2]- \Pr[\cE_1 \band \cE_2] \\
		&+\Pr\left[\Verify_{\pk^*}(\tildex\p{j}\circ j, \tsigma\p{j}) = 0 \bigg| 
		\begin{array}{l}
			\overline{\cE}_1 \band\\
			\overline{\cE}_2 \band\\
			\overline{\Forge_i}
		\end{array} \right] \cdot \Pr[\overline{\cE}_1 \band \overline{\cE}_2]. 
	\end{align*}
	Since we assume no forgery has occurred, the last summation term of the above probability can be lower bounded as %$\Pr[\Verify_{\pk^*}(\tildex\p{j}\circ j, \tsigma\p{j}) = 0 | \pk^* \neq \bot \band \tildem\p{j}\neq \bot \band \overline{\Forge_i}] \cdot \Pr[\pk^* \neq \bot \band \tildem\p{j}\neq \bot] \geq\Pr[\Verify_{\pk^*}(\tildex\p{j}\circ j, \tsigma\p{j}) = 0 | \overline{\Forge_i} \band (\tildex\p{j}\neq x\p{j}\bor\tsigma\p{j}\neq\sigma\p{j}) \band \pk^* = \pk] \cdot \Pr[\tildem\p{j}\neq \bot \band \pk^* = \pk] = \Pr[\tildem\p{j}\neq \bot \band \pk^* = \pk]$,
	\begin{align*}
		&\Pr\left[\Verify_{\pk^*}(\tildex\p{j}\circ j, \tsigma\p{j}) = 0 \bigg|
		\begin{array}{l}
			\overline{\cE}_1 \band\\
			\overline{\cE}_2 \band\\
			\overline{\Forge_i}
		\end{array} \right] \cdot \Pr[\overline{\cE}_1 \band \overline{\cE}_2]\\
		&\geq \Pr\left[ \Verify_{\pk^*}(\tildex\p{j}\circ j, \tsigma\p{j}) = 0 \bigg| \begin{array}{l}
			\overline{\Forge_i} \band\\
			(\tildex\p{j}\neq x\p{j}\bor\\
			\tsigma\p{j}\neq\sigma\p{j}) \band\\
			\pk^* = \pk
		\end{array} \right]\\
		&\hspace*{2em}\cdot \Pr[\overline{\cE}_2 \band \pk^* = \pk]\\
		&=\Pr[\overline{\cE}_2 \band \pk^* = \pk],
	\end{align*}
	where for the lower bound we ignore the event that $\pk^* \neq \pk \band \pk^* \neq \bot$.
	Thus we have that $\Pr[ D_i \in \{x_i,\bot\} | \overline{\Forge_i}] \geq \Pr[\pk^* = \bot] + \Pr[\tildem\p{j} = \bot] - \Pr[\pk^* = \bot]\cdot \Pr[\tildem\p{j} = \bot] + \Pr[\tildem\p{j} \neq \bot]\cdot \Pr[\pk^* = \pk]$, 
	\begin{align*}
		\Pr[ D_i \in \{x_i,\bot\} ~|~ \overline{\Forge_i}] &\geq \Pr[\cE_1] + \Pr[\cE_2] - \Pr[\cE_1]\cdot \Pr[\cE_2] \\ 
		&+ \Pr[\overline{\cE}_2]\cdot \Pr[\pk^* = \pk]
	\end{align*}
	where the inequality holds since \NBS and \BlockDecode are run independently of each other.
	
	With the above lower bound established, we turn to analyzing $\Pr[\pk^* = \pk]$.
	Let $\pk$ be the public key sampled by $\EncIns(x)$ to generate codeword $C$.
	Recovery of $\pk$ is performed by uniformly sampling $\mu$ indices of $\tC$, running $\BlockDecode$ on these indices, and taking the majority of the public keys we recover. 
	Intuitively, we want to recover $\pk^*$ with high probability via a Chernoff bound.
	Define random variable $X_\kappa$ for $\kappa \in [\mu]$ as %$X_\kappa = 1$ if $x\p{j}\circ\sigma\p{j}\circ\pk\circ j = \BlockDecode(i_\kappa)$ for $j \in [d]$ and $X_\kappa = 0$ otherwise.
	\begin{align*}
		X_\kappa \defeq \begin{cases}
			1 & x\p{j}\circ\sigma\p{j}\circ\pk\circ j = \BlockDecode(i_\kappa) \text{ for $j \in [d]$}  \\
			0 & \text{otherwise}
		\end{cases}.
	\end{align*}
	Thus we need to ensure that $\Pr[X_\kappa = 1] > 1/2$.
	By \cref{lem:block-decode}, we know that as long as the index $i_\kappa$ lies within the bounds of some $\gamma$-good block $j$, then $X_\kappa = 1$ with probability at least $1-\gamma$.
	Let $\cJ \subset [d]$ be the indices of the $\gamma$-good blocks in $\tC$.
	Then we have %$\Pr[X_\kappa = 1] \geq \Pr[i_\kappa \in \phi_0^{-1}(j) | j \in \cJ] \cdot (1-\gamma)$.
	\begin{align*}
		\Pr[X_\kappa = 1] \geq \Pr[i_\kappa \in \phi_0^{-1}(j) ~|~ j \in \cJ_{\Good}] \cdot (1-\gamma)
	\end{align*}
	By \cref{lem:gamma-block-bounds}, we know that there are at least $(1-2 \beta \rho/(\gamma \alpha))$-fraction of blocks which are $\gamma$-good, which implies $|\cJ| \geq d \cdot(1- 2 \beta \rho/(\gamma \alpha))$.
%	Note also that the notion of $\gamma$-good is only useful whenever $\gamma \leq \rhosz$, else we would not be able to decode correctly.
	Since we want $\Pr[X_\kappa=1] > 1/2$, by \cref{lem:block-decomp-bound} we have 
	\begin{align*}
		(d/n)\cdot(1-(2\beta\rho)/(\gamma\alpha))\cdot(\beta-\alpha\gamma)\cdot\tau\cdot(1-\gamma) >1/2.
	\end{align*}
	Solving for $\rho$ above yields %$\rho < ((\gamma\alpha)/(2\beta))\cdot(1-\beta/(2(1-\gamma)(\beta-\alpha\gamma)))$.
	\begin{align*}
%		\frac{1}{n} \cdot d \cdot \left(1-\frac{2 \cdot \beta \cdot \rho}{\gamma \cdot \alpha}\right) \cdot (\beta-\alpha\cdot \gamma)\cdot \tau \cdot (1-\gamma) > \frac{1}{2}\\
%		\frac{(\beta-\alpha\cdot \gamma) \cdot \tau}{\beta\cdot \tau} \cdot \left(1-\frac{2 \cdot \beta \cdot \rho}{\gamma \cdot \alpha}\right) > \frac{1}{2\cdot(1-\gamma)}\\
%		\left(1-\frac{2 \cdot \beta \cdot \rho}{\gamma \cdot \alpha}\right) > \frac{\beta}{2 \cdot (1-\gamma) \cdot (\beta - \alpha \cdot \gamma)}\\
%		\frac{2\cdot \beta \cdot \rho}{\gamma \cdot \alpha} < 1 - \frac{\beta}{2 \cdot (1-\gamma) \cdot (\beta - \alpha \cdot \gamma)}\\
		\rho < \frac{\gamma\cdot \alpha}{2\cdot \beta}\cdot \left(1 - \frac{\beta}{2 \cdot (1-\gamma) \cdot (\beta - \alpha \cdot \gamma)}\right).
	\end{align*}
	Borrowing the parameters of Block \etal \cite[Proposition 22]{BBGKZ20}, we set $\gamma = 1/12$ and $\alpha = 2\gamma \rhoin/(\gamma+6)$.
	Recall also that $\beta = 2\alpha + (1/\betasz)$.
	Then we have %$\alpha = (2/73) \cdot \rhoin$, $\beta = (4/73) \cdot \rhoin + (1/\betasz)$, $(\gamma\alpha)/(2\beta) = (2\rhoin/876) / (2/\betasz + 8\rhoin/73) < 1$, and $\beta/(2(1-\gamma)(\beta-\alpha\gamma)) = (1/\betasz + 4\rhoin/73) / ( (11/6)\cdot ( 1/\betasz +  (23\rhoin/438) ) ) < 1$.
	\begin{gather*}
		\alpha = (2/73) \cdot \rhoin\\
		\beta = (4/73) \cdot \rhoin + (1/\betasz)\\
		\frac{\gamma\cdot \alpha}{2\cdot \beta} = \frac{(1/12)\cdot(2/73)\cdot \rhoin}{2\cdot(4/73)\cdot \rhoin + 2/\betasz} < 1\\
		\frac{\beta}{2(1-\gamma)(\beta-\alpha\cdot\gamma)} = \frac{6}{11} \cdot \frac{(4/73) \rhoin + (1/\betasz)}{(23/438) \rhoin + (1/\betasz)} < 1.
	\end{gather*}
	Given these above parameters, since $\rhoin < 1$ we also have %$(\gamma\alpha)/(2\beta) \geq (2\rhoin/876) / (2/\betasz + 8/73)$.
	\begin{align*}
		\frac{\gamma \cdot \alpha}{2 \cdot \beta} \geq \frac{(2/876)\cdot \rhoin}{(8/73) + 2/\betasz}.
	\end{align*}
	Thus setting %$\rho \defeq [(2\rhoin/876) / (2/\betasz + 8/73)]\cdot[1-(1/\betasz + 4\rhoin/73) / ( (11/6)\cdot ( 1/\betasz +  (23\rhoin/438) ) )] = \Theta(1)$
	\begin{align*}
		\rho &\defeq \\
		&\left(\frac{(2/876) \rhoin}{(8/73) + 2/\betasz}\right) \left(1- \frac{6}{11}\cdot\frac{(4/73) \rhoin + (1/\betasz)}{(23/438)\cdot \rhoin + (1/\betasz)} \right)\\
		&= \Theta(1)
	\end{align*}
	ensures that $\Pr[X_\kappa = 1] > 1/2$.
	Now let $q \defeq \Pr[X_\kappa = 1] > 1/2$.
	By a Chernoff bound we have that %$\Pr[\sum_\kappa X_\kappa > \mu/2] \geq 1 - \exp(-\mu (q-1/2)^2/(2q))$.
	\begin{align*}
		\Pr\left[ \sum_{\kappa \in [\mu]} X_\kappa > \frac{\mu}{2} \right] \geq 1 - \exp(-\mu \cdot (q-1/2)^2/(2q)).
	\end{align*}
	This implies $\pk^* = \pk$ with probability at least $1 - \exp(-\mu (q-1/2)^2/(2q))$,
%	\begin{align*}
%		1 - \exp(-\mu (q-1/2)^2/(2q)),
%	\end{align*} 
%	we have that $\pk^* = \pk$, 
	which implies 
	\begin{align*}
		\Pr[\pk^* = \bot] \leq \exp(-\mu(q-1/2)^2/(2q)).
	\end{align*}
	
	Given that 
	\begin{align*}
		\Pr[\pk^* = \pk] \geq 1 - \exp(-\mu (q-1/2)^2/(2q)),
	\end{align*} 
	we now analyze $\Pr[\tildem\p{j} \neq \bot]$.
	Note that by our noisy binary search algorithm in \cref{lem:nbs}, if block $j$ is $\gamma$-good, then we recover the correct block $x\p{j}\circ \sigma\p{j}\circ \pk \circ j$ with probability at least $(1-\rho^*)$, except with probability $\vartheta(K')$, where $\rho^*$ is some constant and $\vartheta$ is a negligible function.
	Conditioning on the case where we recover any $\gamma$-good block with probability at least $(1-\rho^*)$ we have %$\Pr[ \tildem\p{j} \neq \bot | j \text{ is $\gamma$-good} ] \geq (1-\rho^*)$; equivalently
	\begin{align*}
		\Pr[ \tildem\p{j} \neq \bot ~|~ j \text{ is $\gamma$-good} ] \geq (1-\rho^*).
	\end{align*}
	Equivalently, we have
%	$\Pr[ \tildem\p{j} = \bot | j \text{ is $\gamma$-good} ] \leq \rho^*$.
	\begin{align*}
		\Pr[ \tildem\p{j} = \bot ~|~ j \text{ is $\gamma$-good} ] \leq \rho^*.
	\end{align*}
	Putting it all together we have that %$\Pr[D_i \in \{x_i,\bot\} | \overline{\Forge_i}] \geq (1-\rho^*)\cdot(1-\exp(-\mu \cdot (q-1/2)^2/(2q))) - \rho^* \cdot \exp(-\mu \cdot (q-1/2)^2/(2q)) = 1-\rho^*-\exp(-\mu \cdot (q-1/2)^2/(2q))$,
	\begin{align*}
		&\Pr[D_i \in \{x_i,\bot\} ~|~ \overline{\Forge_i}]\\
		&\geq (1-\rho^*)\cdot\left(1-\exp\left(-\mu \cdot \frac{(q-1/2)^2}{2q}\right)\right)\\ 
		&- \rho^* \cdot \exp\left(-\mu \cdot \frac{(q-1/2)^2}{2q}\right)\\
		&= 1-\rho^*-\exp\left(-\mu \cdot \frac{(q-1/2)^2}{2q}\right),
	\end{align*}
	where $\rho^* \in (0,1/2)$ is a fixed constant we are free to choose.
	In particular, we choose $\rho^* < 1/3$ to help ensure $p > 2/3$ for appropriate choice of $\mu$.
	
	In the above analysis, we conditioned on $\overline{\Forge_i}$ and on \NBS succeeding.
	By the security of the digital signature scheme, $\overline{\Forge_i}$ occurs with probability at least $1-\eps_{\Pi}(\lambda)$, where $\eps_{\Pi}(\lambda)$ is a negligible function for the security of the digital signature scheme.
	By \cref{lem:nbs}, with probability at least $1-\vartheta(n')$, \NBS outputs correctly.
	Finally, by union bound over $i \in [k]$, we set $\eps_{\F}(\lambda, n) \defeq k \cdot \eps_{\Pi}(\lambda) \cdot \vartheta(n)$, which is negligible in $\lambda$ since $k(\lambda) = \poly(\lambda)$.

	Next we work towards proving \cref{item:rldc-limit}.
	Again let $\tC \gets \cA(C)$ for a PPT adversary $\cA$ and $C = \EncIns(x)$ for $x \in \bin{k}$.
	Our goal is to show that there exists a negligible function $\eps_{\Lim}(\cdot)$ such that for all $\lambda \in \bbN$ and all $x \in \bin{k}$, we have %$\Pr[\Limit(\cA(C), \rho, \delta, x, y) = 1] \leq \eps_{\Lim}(\lambda)$.
	\begin{align*}
		\Pr[\Limit(\cA(C), \rho, \delta, x, y) = 1] \leq \eps_{\Lim}(\lambda).
	\end{align*}
	We directly analyze the size of the set $\Good(\tC)$.
	As before, let $D_i$ be the random variable denoting the output of $\DecIns^{\tC}(i)$.
	Then we are interested in lower bounding $\Pr[D_i = x_i]$ for a fixed $i \in [k]$.
	By definition of $\DecIns$, the decoder only outputs a bit if
	\begin{enumerate}
		\item $\pk^* \neq \bot$;
		\item $\tildem\p{j} \neq \bot$; and
		\item $\Verify_{\pk^*}(\tildex\p{j}\circ j, \tsigma\p{j}) = 1$.
	\end{enumerate} 
	As before, let $\cE_1$ denote the event that $\pk^* = \bot$ and $\cE_2$ denote the event that $\tildem\p{j} = \bot$.
	Then by definition we have
	\begin{align*}
		\Pr[D_i = x_i] = &\Pr[\overline{\cE}_1] \cdot \Pr[\overline{\cE}_2 | \overline{\cE}_1]\\ 
		&\cdot \Pr[\Verify_{\pk^*}(\tildex\p{j} \circ j, \tsigma\p{j}) = 1 ~|~ \overline{\cE}_1\band \overline{\cE}_2].
	\end{align*}
%	$\Pr[D_i = x_i] = \Pr[\pk^* \neq \bot] \cdot \Pr[\tildem\p{j}\neq \bot | \pk^* \neq \bot] \cdot \Pr[\Verify_{\pk^*}(\tildex\p{j} \circ j, \tsigma\p{j}) = 1 | \pk^* \neq \bot \band \tildem\p{j}\neq \bot]$ by definition.
	Since $\pk^*$ and $\tildem\p{j}$ are computed independently, we have the above equals %$\Pr[\pk^* \neq \bot] \cdot \Pr[\tildem\p{j}\neq \bot] \cdot \Pr[\Verify_{\pk^*}(\tildex\p{j} \circ j, \tsigma\p{j}) = 1 | \pk^* \neq \bot \band \tildem\p{j}\neq \bot]$, which is at least $\Pr[\pk^* = \pk] \cdot \Pr[ \tildem\p{j} = m\p{j} ] \cdot \Pr[\Verify_{\pk^*}(\tildex\p{j} \circ j, \tsigma\p{j}) = 1 | \pk^* = \pk \band \tildem\p{j} = m\p{j}]$ (the lower bound is a subset of events in consideration).
	\begin{align*}
		&\Pr[D_i = x_i]\\
		&= \Pr[\overline{\cE}_1] \cdot \Pr[\overline{\cE}_2] \cdot \Pr[\Verify_{\pk^*}(\tildex\p{j} \circ j, \tsigma\p{j}) = 1 ~|~ \overline{\cE}_1 \band \overline{\cE}_2]\\
		&\geq \Pr[\pk^* = \pk] \cdot \Pr[ \tildem\p{j} = m\p{j} ] \\
		&\cdot \Pr[\Verify_{\pk^*}(\tildex\p{j} \circ j, \tsigma\p{j}) = 1 ~|~ \pk^* = \pk \band \tildem\p{j} = m\p{j}], %\label{eq:dec-correct-bit-lb}
	\end{align*}
	where the inequality follows since it is a subset of all possible events in consideration. and $m\p{j} \defeq x\p{j} \circ \sigma\p{j} \circ \pk \circ j$. %, \cref{eq:pk-mj-indep} follows since $\pk^*$ and $\tildem\p{j}$ are generated independently, and \cref{eq:dec-correct-bit-lb} follows since these events are a subset of the events in \cref{eq:pk-mj-indep}.
	Now we analyze each of the terms in the above lower bound.
	First note that the final term 
	\begin{align*}
		\Pr[\Verify_{\pk^*}(\tildex\p{j} \circ j, \tsigma\p{j}) = 1 ~|~ \pk^* = \pk \band \tildem\p{j} = m\p{j}] = 1
	\end{align*} 
	by definition.
	Thus we analyze the other two probability terms.
	By our prior work, we know that 
	\begin{align*}
		\Pr[\pk^* = \pk] \geq 1 - \exp(-\mu(q-1/2)^2/(2q)).
	\end{align*}

%	We begin with $\Pr[\pk^* = \pk]$.
%	We can lower bound this probability via a Chernoff bound.
	%To do so, first we do some preliminary analysis on the behavior of the decoder.
	%Let $D_i$ be the random variable representing the output of $\Dec^{\tC}(i)$.
	%We are interested in the following probability:
	%\begin{align*}
	%	\Pr[ D_i \in \{x_i, \bot\} ] = \Pr[D_i = x_i] + \Pr[D_i = \bot].
	%\end{align*}
	%We first analyze $\Pr[D_i = x_i]$, which is equal to the following quantity.
	%\begin{align*}
	%	\Pr[D_i = x_i] &\geq \Pr[\pk^* = \pk \band \tildem\p{j} = C\p{j} \band \Verify_{\pk^*}(\tildex\p{j} \circ j, \tsigma\p{j}) = 1]\\
	%	&= \Pr[\pk^* = \pk] \cdot \Pr[\tildem\p{j} = C\p{j}~|~\pk^* = \pk]\\ 
	%	&\cdot \Pr[ \Verify_{\pk^*}(\tildex\p{j}\circ j, \tsigma\p{j}) = 1 ~|~  \pk^* = \pk \band \tildem\p{j} = C\p{j}]\\
	%	&= \Pr[\pk^* = \pk] \cdot \Pr[ \tildem\p{j} = C\p{j} ] \cdot \Pr[ \Verify_{\pk^*}(\tildex\p{j}\circ j, \tsigma\p{j}) = 1 ~|~  \pk^* = \pk \band \tildem\p{j} = C\p{j}].
	%\end{align*}
	%Above, the second equality follows from the observation that $\tildem\p{j}$ and $\pk^*$ are independent.
	%We can lower bound $\Pr[\pk^* = \pk]$ via a Chernoff bound.

	We now lower bound $\Pr[\tildem\p{j} = m\p{j}]$.
	Recall that $\tildem\p{j} = \NBS^{\tC}(j)$. 
	By \cref{lem:nbs} we have %$\Pr[\Pr[\tildem\p{j} \neq C\p{j} | j \text{ is } \gamma\text{-good} ]\geq \rho^*]\leq \vartheta(K')$.
	\begin{align*}
		\Pr\left[ \Pr_{j \getsr [d]}\left[\tildem\p{j} \neq C\p{j}~|~ j \text{ is $\gamma$-good}\right] \geq \rho^* \right] \leq \vartheta(K').
	\end{align*}
	Fix $j$ to be a $\gamma$-good block. Then we have %$\Pr[ \Pr[\tildem\p{j} \neq C\p{j}] \geq \rho^* ] \leq \vartheta(K')$, 
	\begin{align*}
		\Pr\left[ \Pr[\tildem\p{j} \neq C\p{j}] \geq \rho^* \right] \leq \vartheta(K');
	\end{align*}
	or equivalently %$\Pr\left[ \Pr[\tildem\p{j} = C\p{j}] \leq 1-\rho^* \right] \leq \vartheta(K')$.
	\begin{align*}
		\Pr\left[ \Pr[\tildem\p{j} = C\p{j}] \leq 1-\rho^* \right] \leq \vartheta(K').
	\end{align*}
	This implies %$\Pr[ \Pr[ \tildem\p{j} = C\p{j} ] \geq 1-\rho^* ] \geq 1 - \vartheta(K')$.
	\begin{align*}
		\Pr\left[ \Pr[ \tildem\p{j} = C\p{j} ] \geq 1-\rho^* \right] \geq 1 - \vartheta(K').
	\end{align*}
	Thus with probability at least $1-\vartheta(K')$, we have 
	\begin{align*}
		\Pr[\tildem\p{j} = C\p{j}] \geq 1-\rho^*.
	\end{align*}
	By \cref{lem:gamma-block-bounds}, we know at least $1-(2 \beta \rho)/(\gamma\alpha)$-fraction of blocks in $\tC$ are $\gamma$-good.
	Since for every $i \in [k]$ there is a unique block $j\in[d]$ such that $(j-1) r(\lambda) < i \leq j  r(\lambda)$, for the set 
	\begin{align*}
		\cG \defeq \{i \in [k] \colon i \text{ is in a } \gamma\text{-good block}\},
	\end{align*} 
	we have %$|\cG| \geq r(\lambda) \cdot ( 1-(2\beta\rho)/(\gamma\alpha)) \cdot d = (1-(2\beta\rho)(\gamma\alpha)) \cdot k$; so
	\begin{align*}
		|\cG| &\geq r(\lambda) \cdot \left( 1-\frac{2\beta\rho}{\gamma\alpha} \right) \cdot d= \left( 1-\frac{2\beta\rho}{\gamma\alpha} \right) \cdot k.
	\end{align*}
	Therefore at least $1-(2 \beta \rho)/(\gamma\alpha)$-fraction of indices $i \in [k]$ lie within a $\gamma$-good block.
	
	Putting it all together, for any $\gamma$-good block we have that with probability at least $1-\vartheta(K')$: %$\Pr[D_i = x_i] \geq (1-\exp(-\mu(q-1/2)^2/(2q))) \cdot (1-\rho^*)$.
	\begin{align*}
		\Pr[D_i = x_i] \geq (1-\exp(-\mu\cdot(q-1/2)^2/(2q))) \cdot (1-\rho^*).
	\end{align*}
	Choosing $\mu$ and $\rho^* \in (0,1/3)$ appropriately such that 
	\begin{align*}
		(1-\exp(-\mu(q-1/2)^2/(2q))) \cdot (1-\rho^*) > 2/3,
	\end{align*} 
	along with the fact that at least $1-(2 \beta \rho)/(\gamma\alpha)$-fraction of indices $i \in [k]$ lie within a $\gamma$-good block, we have 
	\begin{align*}
		\Pr[|\Good(\tC)| \geq (1-(2 \beta \rho)/(\gamma \alpha)) \cdot k] \geq 1-\vartheta(K'),
	\end{align*} 
	which implies 
	\begin{align*}
		\Pr[\Limit(\tC, \rho, \delta, x, C) = 1] &\leq \vartheta(K')\\
		&= \vartheta(\Theta(k)) \eqdef \eps_{\Lim}(\lambda)
	\end{align*}   
%	\begin{align*}
%		|\Good(\tC)| \geq \left(1-\frac{2 \cdot \beta \cdot \rho}{\gamma \cdot \alpha}\right) \cdot k.
%	\end{align*}
	for $\delta = 1-(2 \beta \rho)/(\gamma \alpha) = 1-\Theta(\rho)$. 
%	\begin{align*}
%		\Pr[\Limit(\tC, \rho, \delta, x, C) = 1] \leq \vartheta(n').
%	\end{align*}
%	As $\rho = \Omega(\rho^* \cdot \rhosz)$ is a constant and $n = \Theta(k \cdot [3+\log(k)/r(\lambda)])$, we have that $\vartheta(n') = \negl(k)$, which is $\negl(\lambda)$ whenever $k = \poly(\lambda)$.
\end{proof}

%\input{construction}

% !TeX root = ../full-main.tex
\section{Acknowledgments}\label{sec:ack}
Alexander R. Block was supported in part by the National Science Foundation under NSF CCF \#1910659 and in part by DARPA under agreement No. HR00112020022 and No. HR00112020025. Jeremiah Blocki was supported in part by the National Science Foundation under awards CNS \#2047272, CNS \#1931443 and CCF \#1910659. The views, opinions, findings, conclusions and/or recommendations expressed in this material are those of the author and should not be interpreted as reflecting the position or policy of the Department of Defense or the U.S. Government, and no official endorsement should be inferred.

%%% bibliography %%%

%% for normal articles %%
%\bibliographystyle{alphaurl} % for normal articles
%\bibliography{local}

%\newpage
\bibliographystyle{alphaurl}
\bibliography{full-version/abbrev3,full-version/crypto,full-version/local}

%\printbibliography

%% for llncs %%
%\renewcommand{\doi}[1]{\url{https://doi.org/#1}} % fix LNCS' issue where DOI with underscores break bib compilation; aka, LNCS is broken and dumb
%\bibliographystyle{splncs04}

%% for IEEE articles %%
%\bibliographystyle{IEEEtran} 

%% lipics has nothing as it's included in the class %%

%\bibliography{} % bib files

%\newpage
%\appendix
%\section{Scratch Work}
%\input{alg-ham-encoder}

%\input{alg-ham-decoder}

\end{document}